\newtheorem{theorem}{Theorem}[section]
\newtheorem{definition}[theorem]{Definition}
\newtheorem{proposition}[theorem]{Proposition}
\newtheorem{lemma}[theorem]{Lemma}
\title[Draft of A Calculus for True Concurrency]
      {A Calculus for True Concurrency}
\author[Yong Wang]
    {Yong Wang\\
     College of Computer Science and Technology,\\
     Faculty of Information Technology,\\
     Beijing University of Technology, Beijing, China\\
     }
\begin{document}
\label{firstpage}

\makecorrespond

\maketitle

\begin{abstract}
We design a calculus for true concurrency called CTC, including its syntax and operational semantics. CTC has good properties modulo several kinds of strongly truly concurrent bisimulations and weakly truly concurrent bisimulations, such as monoid laws, static laws, new expansion law for strongly truly concurrent bisimulations, $\tau$ laws for weakly truly concurrent bisimulations, and full congruences for strongly and weakly truly concurrent bisimulations, and also unique solution for recursion.
\end{abstract}

\begin{keywords}
True Concurrency; Behaviorial Equivalence; Prime Event Structure; Calculus
\end{keywords}

\section{Introduction}\label{int}

Parallelism and concurrency \cite{CM} are the core concepts within computer science. There are mainly two camps in capturing concurrency: the interleaving concurrency and the true concurrency.

The representative of interleaving concurrency is bisimulation/weak bisimulation equivalences. CCS (A Calculus of Communicating Systems) \cite{CCS} \cite{CC} is a calculus based on bisimulation semantics model. CCS has good semantic properties based on the interleaving bisimulation. These properties include monoid laws, static laws, new expansion law for strongly interleaving bisimulation, $\tau$ laws for weakly interleaving bisimulation, and full congruences for strongly and weakly interleaving bisimulations, and also unique solution for recursion.

The other camp of concurrency is true concurrency. The researches on true concurrency are still active. Firstly, there are several truly concurrent bisimulations, the representatives are: pomset bisimulation, step bisimulation, history-preserving (hp-) bisimulation, and especially hereditary history-preserving (hhp-) bisimulation \cite{HHP1} \cite{HHP2}. These truly concurrent bisimulations are studied in different structures \cite{ES1} \cite{ES2} \cite{CM}: Petri nets, event structures, domains, and also a uniform form called TSI (Transition System with Independence) \cite{SFL}. There are also several logics based on different truly concurrent bisimulation equivalences, for example, SFL (Separation Fixpoint Logic) and TFL (Trace Fixpoint Logic) \cite{SFL} are extensions on true concurrency of mu-calculi \cite{MUC} on bisimulation equivalence, and also a logic with reverse modalities \cite {RL1} \cite{RL2} based on the so-called reverse bisimulations with a reverse flavor. Recently, a uniform logic for true concurrency \cite{LTC1} \cite{LTC2} was represented, which used a logical framework to unify several truly concurrent bisimulations, including pomset bisimulation, step bisimulation, hp-bisimulation and hhp-bisimulation.

There are simple comparisons between HM logic and bisimulation, as the uniform logic \cite{LTC1} \cite{LTC2} and truly concurrent bisimulations; the algebraic laws \cite{ALNC}, ACP \cite{ACP} and bisimulation, as the algebraic laws APTC \cite{ATC} and truly concurrent bisimulations; CCS and bisimulation, as truly concurrent bisimulations  and \emph{what}, which is still missing.

In this paper, we design a calculus for true concurrency (CTC) following the way paved by CCS for bisimulation equivalence. This paper is organized as follows. In section \ref{bac}, we introduce some preliminaries, including a brief introduction to CCS, and also preliminaries on true concurrency. We introduce the syntax and operational semantics of CTC in section \ref{sos}, its properties for strongly truly concurrent bisimulations in section \ref{stcb}, its properties for weakly truly concurrent bisimulations in section \ref{wtcb}. In section \ref{app}, we show the applications of CTC by an example called alternating-bit protocol. Finally, in section \ref{con}, we conclude this paper. 

\section{Backgrounds}\label{bac}

\subsection{Process Algebra CCS}

A crucial initial observation that is at the heart of the notion of process algebra is due to Milner, who noticed that concurrent processes have an algebraic structure. CCS \cite{CC} \cite{CCS} is a calculus of concurrent systems. It includes syntax and semantics:

\begin{enumerate}
  \item Its syntax includes actions, process constant, and operators acting between actions, like Prefix, Summation, Composition, Restriction, Relabelling.
  \item Its semantics is based on labeled transition systems, Prefix, Summation, Composition, Restriction, Relabelling have their transition rules. CCS has good semantic properties based on the interleaving bisimulation. These properties include monoid laws, static laws, new expansion law for strongly interleaving bisimulation, $\tau$ laws for weakly interleaving bisimulation, and full congruences for strongly and weakly interleaving bisimulations, and also unique solution for recursion.
\end{enumerate}

CCS can be used widely in verification of computer systems with an interleaving concurrent flavor.

\subsection{True Concurrency}

The related concepts on true concurrency are defined based on the following concepts.

\begin{definition}[Prime event structure with silent event]\label{PES}
Let $\Lambda$ be a fixed set of labels, ranged over $a,b,c,\cdots$ and $\tau$. A ($\Lambda$-labelled) prime event structure with silent event $\tau$ is a tuple $\mathcal{E}=\langle \mathbb{E}, \leq, \sharp, \lambda\rangle$, where $\mathbb{E}$ is a denumerable set of events, including the silent event $\tau$. Let $\hat{\mathbb{E}}=\mathbb{E}\backslash\{\tau\}$, exactly excluding $\tau$, it is obvious that $\hat{\tau^*}=\epsilon$, where $\epsilon$ is the empty event. Let $\lambda:\mathbb{E}\rightarrow\Lambda$ be a labelling function and let $\lambda(\tau)=\tau$. And $\leq$, $\sharp$ are binary relations on $\mathbb{E}$, called causality and conflict respectively, such that:

\begin{enumerate}
  \item $\leq$ is a partial order and $\lceil e \rceil = \{e'\in \mathbb{E}|e'\leq e\}$ is finite for all $e\in \mathbb{E}$. It is easy to see that $e\leq\tau^*\leq e'=e\leq\tau\leq\cdots\leq\tau\leq e'$, then $e\leq e'$.
  \item $\sharp$ is irreflexive, symmetric and hereditary with respect to $\leq$, that is, for all $e,e',e''\in \mathbb{E}$, if $e\sharp e'\leq e''$, then $e\sharp e''$.
\end{enumerate}

Then, the concepts of consistency and concurrency can be drawn from the above definition:

\begin{enumerate}
  \item $e,e'\in \mathbb{E}$ are consistent, denoted as $e\frown e'$, if $\neg(e\sharp e')$. A subset $X\subseteq \mathbb{E}$ is called consistent, if $e\frown e'$ for all $e,e'\in X$.
  \item $e,e'\in \mathbb{E}$ are concurrent, denoted as $e\parallel e'$, if $\neg(e\leq e')$, $\neg(e'\leq e)$, and $\neg(e\sharp e')$.
\end{enumerate}
\end{definition}

The prime event structure without considering silent event $\tau$ is the original one in \cite{ES1} \cite{ES2} \cite{CM}.

\begin{definition}[Configuration]
Let $\mathcal{E}$ be a PES. A (finite) configuration in $\mathcal{E}$ is a (finite) consistent subset of events $C\subseteq \mathcal{E}$, closed with respect to causality (i.e. $\lceil C\rceil=C$). The set of finite configurations of $\mathcal{E}$ is denoted by $\mathcal{C}(\mathcal{E})$. We let $\hat{C}=C\backslash\{\tau\}$.
\end{definition}

Usually, truly concurrent behavioral equivalences are defined by events $e\in\mathcal{E}$ and prime event structure $\mathcal{E}$ (see related concepts in section \ref{STCC} and \ref{WTCC}), in contrast to interleaving behavioral equivalences by actions $a,b\in\mathcal{P}$ and process (graph) $\mathcal{P}$. Indeed, they have correspondences, in \cite{SFL}, models of concurrency, including Petri nets, transition systems and event structures, are unified in a uniform representation -- TSI (Transition System with Independence).

If $x$ is a process, let $C(x)$ denote the corresponding configuration (the already executed part of the process $x$, of course, it is free of conflicts), when $x\xrightarrow{e} x'$, the corresponding configuration $C(x)\xrightarrow{e}C(x')$ with $C(x')=C(x)\cup\{e\}$, where $e$ may be caused by some events in $C(x)$ and concurrent with the other events in $C(x)$, or entirely concurrent with all events in $C(x)$, or entirely caused by all events in $C(x)$. Though the concurrent behavioral equivalences (Definition \ref{PSB}, \ref{WPSB}, \ref{HHPB} and \ref{WHHPB}) are defined based on configurations (pasts of processes), they can also be defined based on processes (futures of configurations), we omit the concrete definitions.

With a little abuse of concepts, in the following of the paper, we will not distinguish actions and events, prime event structures and processes, also concurrent behavior equivalences based on configurations and processes, and use them freely, unless they have specific meanings.

\section{Syntax and Operational Semantics}\label{sos}

We assume an infinite set $\mathcal{N}$ of (action or event) names, and use $a,b,c,\cdots$ to range over $\mathcal{N}$. We denote by $\overline{\mathcal{N}}$ the set of co-names and let $\overline{a},\overline{b},\overline{c},\cdots$ range over $\overline{\mathcal{N}}$. Then we set $\mathcal{L}=\mathcal{N}\cup\overline{\mathcal{N}}$ as the set of labels, and use $l,\overline{l}$ to range over $\mathcal{L}$. We extend complementation to $\mathcal{L}$ such that $\overline{\overline{a}}=a$. Let $\tau$ denote the silent step (internal action or event) and define $Act=\mathcal{L}\cup\{\tau\}$ to be the set of actions, $\alpha,\beta$ range over $Act$. And $K,L$ are used to stand for subsets of $\mathcal{L}$ and $\overline{L}$ is used for the set of complements of labels in $L$. A relabelling function $f$ is a function from $\mathcal{L}$ to $\mathcal{L}$ such that $f(\overline{l})=\overline{f(l)}$. By defining $f(\tau)=\tau$, we extend $f$ to $Act$.

Further, we introduce a set $\mathcal{X}$ of process variables, and a set $\mathcal{K}$ of process constants, and let $X,Y,\cdots$ range over $\mathcal{X}$, and $A,B,\cdots$ range over $\mathcal{K}$, $\widetilde{X}$ is a tuple of distinct process variables, and also $E,F,\cdots$ range over the recursive expressions. We write $\mathcal{P}$ for the set of processes. Sometimes, we use $I,J$ to stand for an indexing set, and we write $E_i:i\in I$ for a family of expressions indexed by $I$. $Id_D$ is the identity function or relation over set $D$.

For each process constant schema $A$, a defining equation of the form

$$A\overset{\text{def}}{=}P$$

is assumed, where $P$ is a process.

\subsection{Syntax}

We use the Prefix $.$ to model the causality relation $\leq$ in true concurrency, the Summation $+$ to model the conflict relation $\sharp$ in true concurrency, and the Composition $\parallel$ to explicitly model concurrent relation in true concurrency. And we follow the conventions of process algebra.

\begin{definition}[Syntax]\label{syntax}
Truly concurrent processes are defined inductively by the following formation rules:

\begin{enumerate}
  \item $A\in\mathcal{P}$;
  \item $\textbf{nil}\in\mathcal{P}$;
  \item if $P\in\mathcal{P}$, then the Prefix $\alpha.P\in\mathcal{P}$, for $\alpha\in Act$;
  \item if $P,Q\in\mathcal{P}$, then the Summation $P+Q\in\mathcal{P}$;
  \item if $P,Q\in\mathcal{P}$, then the Composition $P\parallel Q\in\mathcal{P}$;
  \item if $P\in\mathcal{P}$, then the Prefix $(\alpha_1\parallel\cdots\parallel\alpha_n).P\in\mathcal{P}\quad(n\in I)$, for $\alpha_,\cdots,\alpha_n\in Act$;
  \item if $P\in\mathcal{P}$, then the Restriction $P\setminus L\in\mathcal{P}$ with $L\in\mathcal{L}$;
  \item if $P\in\mathcal{P}$, then the Relabelling $P[f]\in\mathcal{P}$.
\end{enumerate}

The standard BNF grammar of syntax of CTC can be summarized as follows:

$$P::=A\quad|\quad\textbf{nil}\quad|\quad\alpha.P\quad|\quad P+P\quad |\quad P\parallel P\quad |\quad (\alpha_1\parallel\cdots\parallel\alpha_n).P \quad|\quad P\setminus L\quad |\quad P[f].$$
\end{definition}

\subsection{Operational Semantics}

The operational semantics is defined by LTSs (labelled transition systems), and it is detailed by the following definition.

\begin{definition}[Semantics]\label{semantics}
The operational semantics of CTC corresponding to the syntax in Definition \ref{syntax} is defined by a series of transition rules, named $\textbf{Act}$, $\textbf{Sum}$, $\textbf{Com}$, $\textbf{Res}$, $\textbf{Rel}$ and $\textbf{Con}$ indicate that the rules are associated respectively with Prefix, Summation, Composition, Restriction, Relabelling and Constants in Definition \ref{syntax}. They are shown in Table \ref{TRForCTC}.

\begin{center}
    \begin{table}
        \[\textbf{Act}_1\quad \frac{}{\alpha.P\xrightarrow{\alpha}P}\]

        \[\textbf{Sum}_1\quad \frac{P\xrightarrow{\alpha}P'}{P+Q\xrightarrow{\alpha}P'}\]

        \[\textbf{Com}_1\quad \frac{P\xrightarrow{\alpha}P'\quad Q\nrightarrow}{P\parallel Q\xrightarrow{\alpha}P'\parallel Q}\]

        \[\textbf{Com}_2\quad \frac{Q\xrightarrow{\alpha}Q'\quad P\nrightarrow}{P\parallel Q\xrightarrow{\alpha}P\parallel Q'}\]

        \[\textbf{Com}_3\quad \frac{P\xrightarrow{\alpha}P'\quad Q\xrightarrow{\beta}Q'}{P\parallel Q\xrightarrow{\{\alpha,\beta\}}P'\parallel Q'}\quad (\beta\neq\overline{\alpha})\]

        \[\textbf{Com}_4\quad \frac{P\xrightarrow{l}P'\quad Q\xrightarrow{\overline{l}}Q'}{P\parallel Q\xrightarrow{\tau}P'\parallel Q'}\]

        \[\textbf{Act}_2\quad \frac{}{(\alpha_1\parallel\cdots\parallel\alpha_n).P\xrightarrow{\{\alpha_1,\cdots,\alpha_n\}}P}\quad (\alpha_i\neq\overline{\alpha_j}\quad i,j\in\{1,\cdots,n\})\]

        \[\textbf{Sum}_2\quad \frac{P\xrightarrow{\{\alpha_1,\cdots,\alpha_n\}}P'}{P+Q\xrightarrow{\{\alpha_1,\cdots,\alpha_n\}}P'}\]

        \[\textbf{Res}_1\quad \frac{P\xrightarrow{\alpha}P'}{P\setminus L\xrightarrow{\alpha}P'\setminus L}\quad (\alpha,\overline{\alpha}\notin L)\]

        \[\textbf{Res}_2\quad \frac{P\xrightarrow{\{\alpha_1,\cdots,\alpha_n\}}P'}{P\setminus L\xrightarrow{\{\alpha_1,\cdots,\alpha_n\}}P'\setminus L}\quad (\alpha_1,\overline{\alpha_1},\cdots,\alpha_n,\overline{\alpha_n}\notin L)\]

        \[\textbf{Rel}_1\quad \frac{P\xrightarrow{\alpha}P'}{P[f]\xrightarrow{f(\alpha)}P'[f]}\]

        \[\textbf{Rel}_2\quad \frac{P\xrightarrow{\{\alpha_1,\cdots,\alpha_n\}}P'}{P[f]\xrightarrow{\{f(\alpha_1),\cdots,f(\alpha_n)\}}P'[f]}\]

        \[\textbf{Con}_1\quad\frac{P\xrightarrow{\alpha}P'}{A\xrightarrow{\alpha}P'}\quad (A\overset{\text{def}}{=}P)\]

        \[\textbf{Con}_2\quad\frac{P\xrightarrow{\{\alpha_1,\cdots,\alpha_n\}}P'}{A\xrightarrow{\{\alpha_1,\cdots,\alpha_n\}}P'}\quad (A\overset{\text{def}}{=}P)\]

        \caption{Transition rules of CTC}
        \label{TRForCTC}
    \end{table}
\end{center}
\end{definition}

\subsection{Properties of Transitions}

\begin{definition}[Sorts]\label{sorts}
Given the sorts $\mathcal{L}(A)$ and $\mathcal{L}(X)$ of constants and variables, we define $\mathcal{L}(P)$ inductively as follows.

\begin{enumerate}
  \item $\mathcal{L}(l.P)=\{l\}\cup\mathcal{L}(P)$;
  \item $\mathcal{L}((l_1\parallel \cdots\parallel l_n).P)=\{l_1,\cdots,l_n\}\cup\mathcal{L}(P)$;
  \item $\mathcal{L}(\tau.P)=\mathcal{L}(P)$;
  \item $\mathcal{L}(P+Q)=\mathcal{L}(P)\cup\mathcal{L}(Q)$;
  \item $\mathcal{L}(P\parallel Q)=\mathcal{L}(P)\cup\mathcal{L}(Q)$;
  \item $\mathcal{L}(P\setminus L)=\mathcal{L}(P)-(L\cup\overline{L})$;
  \item $\mathcal{L}(P[f])=\{f(l):l\in\mathcal{L}(P)\}$;
  \item for $A\overset{\text{def}}{=}P$, $\mathcal{L}(P)\subseteq\mathcal{L}(A)$.
\end{enumerate}
\end{definition}

Now, we present some properties of the transition rules defined in Table \ref{TRForCTC}.

\begin{proposition}
If $P\xrightarrow{\alpha}P'$, then
\begin{enumerate}
  \item $\alpha\in\mathcal{L}(P)\cup\{\tau\}$;
  \item $\mathcal{L}(P')\subseteq\mathcal{L}(P)$.
\end{enumerate}

If $P\xrightarrow{\{\alpha_1,\cdots,\alpha_n\}}P'$, then
\begin{enumerate}
  \item $\alpha_1,\cdots,\alpha_n\in\mathcal{L}(P)\cup\{\tau\}$;
  \item $\mathcal{L}(P')\subseteq\mathcal{L}(P)$.
\end{enumerate}
\end{proposition}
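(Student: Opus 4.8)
The plan is to proceed by induction on the depth of the inference tree witnessing the transition, proving both statements (the one for single actions $P\xrightarrow{\alpha}P'$ and the one for synchronised multisets $P\xrightarrow{\{\alpha_1,\cdots,\alpha_n\}}P'$) \emph{simultaneously}, since the rules $\textbf{Com}_3$ and $\textbf{Sum}_2$ mix the two forms. The base cases are $\textbf{Act}_1$ and $\textbf{Act}_2$: for $\alpha.P\xrightarrow{\alpha}P$, if $\alpha=l\in\mathcal{L}$ then $\alpha\in\mathcal{L}(l.P)=\{l\}\cup\mathcal{L}(P)$ by clause (1) of Definition \ref{sorts}, and if $\alpha=\tau$ then trivially $\alpha\in\mathcal{L}(\alpha.P)\cup\{\tau\}$; in either case $\mathcal{L}(P)\subseteq\mathcal{L}(\alpha.P)$ is immediate from the same clause (using clause (3) when $\alpha=\tau$). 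The argument for $\textbf{Act}_2$ is identical using clause (2).

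For the inductive step I would go through the remaining rules one at a time. The $\textbf{Sum}$, $\textbf{Com}_1$, $\textbf{Com}_2$ and $\textbf{Con}$ cases are immediate from the induction hypothesis together with the monotonicity built into clauses (4), (5) and (8) of Definition \ref{sorts}: for instance, for $\textbf{Con}_1$, from $P\xrightarrow{\alpha}P'$ we get $\alpha\in\mathcal{L}(P)\cup\{\tau\}\subseteq\mathcal{L}(A)\cup\{\tau\}$ and $\mathcal{L}(P')\subseteq\mathcal{L}(P)\subseteq\mathcal{L}(A)$, since $A\overset{\text{def}}{=}P$ forces $\mathcal{L}(P)\subseteq\mathcal{L}(A)$. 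For $\textbf{Com}_3$ and $\textbf{Com}_4$ I would apply the induction hypothesis to both premises and use $\mathcal{L}(P\parallel Q)=\mathcal{L}(P)\cup\mathcal{L}(Q)$; in the $\textbf{Com}_4$ case the emitted label is $\tau$, so the first clause is trivial and only the sort inclusion needs the two induction hypotheses.

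The two cases that need a small amount of care are $\textbf{Res}$ and $\textbf{Rel}$. For $\textbf{Res}_1$, from $P\setminus L\xrightarrow{\alpha}P'\setminus L$ with side condition $\alpha,\overline{\alpha}\notin L$: by induction $\alpha\in\mathcal{L}(P)\cup\{\tau\}$; if $\alpha\neq\tau$ then $\alpha\notin L$ and, because $\overline{\alpha}\notin L$ is equivalent to $\alpha\notin\overline{L}$, also $\alpha\notin\overline{L}$, hence $\alpha\in\mathcal{L}(P)-(L\cup\overline{L})=\mathcal{L}(P\setminus L)$; likewise $\mathcal{L}(P')-(L\cup\overline{L})\subseteq\mathcal{L}(P)-(L\cup\overline{L})$. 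For $\textbf{Rel}_1$, from $P[f]\xrightarrow{f(\alpha)}P'[f]$: by induction $\alpha\in\mathcal{L}(P)\cup\{\tau\}$; if $\alpha=\tau$ then $f(\alpha)=f(\tau)=\tau$, and otherwise $f(\alpha)\in\{f(l):l\in\mathcal{L}(P)\}=\mathcal{L}(P[f])$, with the sort inclusion following since $f$ is applied pointwise. The $\textbf{Res}_2$ and $\textbf{Rel}_2$ cases are the same with the side conditions and the relabelling applied to each $\alpha_i$ in turn. No step is genuinely hard; the only things to get right are the bookkeeping linking the syntactic side conditions ($\alpha,\overline{\alpha}\notin L$) to the set operation $\mathcal{L}(P)-(L\cup\overline{L})$ in the definition of sorts, the uniform treatment of the $\tau$-labelled transitions (where the first clause is vacuous), and remembering to run the induction on both forms of transition at once because of the synchronisation rules.
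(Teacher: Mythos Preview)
Your proposal is correct and follows exactly the same approach as the paper: induction on the inference of the transition, treating the fourteen rules $\textbf{Act}_{1,2}$, $\textbf{Sum}_{1,2}$, $\textbf{Com}_{1,2,3,4}$, $\textbf{Res}_{1,2}$, $\textbf{Rel}_{1,2}$, $\textbf{Con}_{1,2}$ case by case. The paper in fact only spells out the two $\textbf{Act}$ cases and omits the rest, so your write-up is more detailed than what appears there; your observation that the single-action and multiset-action statements must be proved simultaneously (because $\textbf{Com}_3$ produces a multiset conclusion from single-action premises) is a point the paper leaves implicit.
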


\begin{proof}
By induction on the inference of $P\xrightarrow{\alpha}P'$ and $P\xrightarrow{\{\alpha_1,\cdots,\alpha_n\}}P'$, there are fourteen cases corresponding to the transition rules named $\textbf{Act}_{1,2}$, $\textbf{Sum}_{1,2}$, $\textbf{Com}_{1,2,3,4}$, $\textbf{Res}_{1,2}$, $\textbf{Rel}_{1,2}$ and $\textbf{Con}_{1,2}$ in Table \ref{TRForCTC}, we just prove the one case $\textbf{Act}_1$ and $\textbf{Act}_2$, and omit the others.

Case $\textbf{Act}_1$: by $\textbf{Act}_1$, with $P\equiv\alpha.P'$. Then by Definition \ref{sorts}, we have (1) $\mathcal{L}(P)=\{\alpha\}\cup\mathcal{L}(P')$ if $\alpha\neq\tau$; (2) $\mathcal{L}(P)=\mathcal{L}(P')$ if $\alpha=\tau$. So, $\alpha\in\mathcal{L}(P)\cup\{\tau\}$, and $\mathcal{L}(P')\subseteq\mathcal{L}(P)$, as desired.

Case $\textbf{Act}_2$: by $\textbf{Act}_2$, with $P\equiv(\alpha_1\parallel\cdots\parallel\alpha_n).P'$. Then by Definition \ref{sorts}, we have (1) $\mathcal{L}(P)=\{\alpha_1,\cdots,\alpha_n\}\cup\mathcal{L}(P')$ if $\alpha_i\neq\tau$ for $i\leq n$; (2) $\mathcal{L}(P)=\mathcal{L}(P')$ if $\alpha_1,\cdots,\alpha_n=\tau$. So, $\alpha_1,\cdots,\alpha_n\in\mathcal{L}(P)\cup\{\tau\}$, and $\mathcal{L}(P')\subseteq\mathcal{L}(P)$, as desired.
\end{proof} 

\section{Strongly Truly Concurrent Bisimulations}\label{stcb}

\subsection{Basic Definitions}\label{STCC}

Firstly, in this subsection, we introduce concepts of (strongly) truly concurrent behavioral bisimulation equivalences, including pomset bisimulation, step bisimulation, history-preserving (hp-)bisimulation and hereditary history-preserving (hhp-)bisimulation.

\begin{definition}[Pomset transitions and step]
Let $\mathcal{E}$ be a PES and let $C\in\mathcal{C}(\mathcal{E})$, and $\emptyset\neq X\subseteq \mathbb{E}$, if $C\cap X=\emptyset$ and $C'=C\cup X\in\mathcal{C}(\mathcal{E})$, then $C\xrightarrow{X} C'$ is called a pomset transition from $C$ to $C'$. When the events in $X$ are pairwise concurrent, we say that $C\xrightarrow{X}C'$ is a step.
\end{definition}

\begin{definition}[Strong pomset, step bisimulation]\label{PSB}
Let $\mathcal{E}_1$, $\mathcal{E}_2$ be PESs. A strong pomset bisimulation is a relation $R\subseteq\mathcal{C}(\mathcal{E}_1)\times\mathcal{C}(\mathcal{E}_2)$, such that if $(C_1,C_2)\in R$, and $C_1\xrightarrow{X_1}C_1'$ then $C_2\xrightarrow{X_2}C_2'$, with $X_1\subseteq \mathbb{E}_1$, $X_2\subseteq \mathbb{E}_2$, $X_1\sim X_2$ and $(C_1',C_2')\in R$, and vice-versa. We say that $\mathcal{E}_1$, $\mathcal{E}_2$ are strong pomset bisimilar, written $\mathcal{E}_1\sim_p\mathcal{E}_2$, if there exists a strong pomset bisimulation $R$, such that $(\emptyset,\emptyset)\in R$. By replacing pomset transitions with steps, we can get the definition of strong step bisimulation. When PESs $\mathcal{E}_1$ and $\mathcal{E}_2$ are strong step bisimilar, we write $\mathcal{E}_1\sim_s\mathcal{E}_2$.
\end{definition}

\begin{definition}[Posetal product]
Given two PESs $\mathcal{E}_1$, $\mathcal{E}_2$, the posetal product of their configurations, denoted $\mathcal{C}(\mathcal{E}_1)\overline{\times}\mathcal{C}(\mathcal{E}_2)$, is defined as

$$\{(C_1,f,C_2)|C_1\in\mathcal{C}(\mathcal{E}_1),C_2\in\mathcal{C}(\mathcal{E}_2),f:C_1\rightarrow C_2 \textrm{ isomorphism}\}.$$

A subset $R\subseteq\mathcal{C}(\mathcal{E}_1)\overline{\times}\mathcal{C}(\mathcal{E}_2)$ is called a posetal relation. We say that $R$ is downward closed when for any $(C_1,f,C_2),(C_1',f',C_2')\in \mathcal{C}(\mathcal{E}_1)\overline{\times}\mathcal{C}(\mathcal{E}_2)$, if $(C_1,f,C_2)\subseteq (C_1',f',C_2')$ pointwise and $(C_1',f',C_2')\in R$, then $(C_1,f,C_2)\in R$.

For $f:X_1\rightarrow X_2$, we define $f[x_1\mapsto x_2]:X_1\cup\{x_1\}\rightarrow X_2\cup\{x_2\}$, $z\in X_1\cup\{x_1\}$,(1)$f[x_1\mapsto x_2](z)=
x_2$,if $z=x_1$;(2)$f[x_1\mapsto x_2](z)=f(z)$, otherwise. Where $X_1\subseteq \mathbb{E}_1$, $X_2\subseteq \mathbb{E}_2$, $x_1\in \mathbb{E}_1$, $x_2\in \mathbb{E}_2$.
\end{definition}

\begin{definition}[Strong (hereditary) history-preserving bisimulation]\label{HHPB}
A strong history-preserving (hp-) bisimulation is a posetal relation $R\subseteq\mathcal{C}(\mathcal{E}_1)\overline{\times}\mathcal{C}(\mathcal{E}_2)$ such that if $(C_1,f,C_2)\in R$, and $C_1\xrightarrow{e_1} C_1'$, then $C_2\xrightarrow{e_2} C_2'$, with $(C_1',f[e_1\mapsto e_2],C_2')\in R$, and vice-versa. $\mathcal{E}_1,\mathcal{E}_2$ are strong history-preserving (hp-)bisimilar and are written $\mathcal{E}_1\sim_{hp}\mathcal{E}_2$ if there exists a strong hp-bisimulation $R$ such that $(\emptyset,\emptyset,\emptyset)\in R$.

A strongly hereditary history-preserving (hhp-)bisimulation is a downward closed strong hp-bisimulation. $\mathcal{E}_1,\mathcal{E}_2$ are strongly hereditary history-preserving (hhp-)bisimilar and are written $\mathcal{E}_1\sim_{hhp}\mathcal{E}_2$.
\end{definition}

\subsection{Laws and Congruence}

Based on the concepts of strongly truly concurrent bisimulation equivalences, we get the following laws.

\begin{proposition}[Monoid laws for strong pomset bisimulation] The monoid laws for strong pomset bisimulation are as follows.

\begin{enumerate}
  \item $P+Q\sim_p Q+P$;
  \item $P+(Q+R)\sim_p (P+Q)+R$;
  \item $P+P\sim_p P$;
  \item $P+\textbf{nil}\sim_p P$.
\end{enumerate}

\end{proposition}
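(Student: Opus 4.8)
The plan is to establish each of the four monoid laws by exhibiting an explicit strong pomset bisimulation relation $R$ containing the required pair and verifying the transfer conditions of Definition \ref{PSB} in both directions. For the commutativity and associativity laws I would take $R$ to be the identity-like relation $\{(C(P+Q),C(Q+P)) : P,Q\in\mathcal{P}\}\cup Id$ (respectively the analogous three-way version), extended so that after any matching move the two sides become syntactically identical and the pair lies in $Id$, which is trivially a strong pomset bisimulation. The key observation making this work is that the only transition rules applicable to a Summation are $\textbf{Sum}_1$ and $\textbf{Sum}_2$ (and their step versions), so any pomset transition $C(P+Q)\xrightarrow{X}C'$ arises either from $P\xrightarrow{X}P'$ or from $Q\xrightarrow{X}Q'$; in the first case $Q+P$ matches via $\textbf{Sum}_2$ with the same label set $X$ (so $X\sim X$ vacuously) reaching $C(P')$, and symmetrically in the second, after which both sides sit in $Id$.

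For the idempotence law $P+P\sim_p P$, I would use $R=\{(C(P+P),C(P)) : P\in\mathcal{P}\}\cup Id$. A transition $C(P+P)\xrightarrow{X}C'$ must come from $P\xrightarrow{X}P'$ (via $\textbf{Sum}_1$ or $\textbf{Sum}_2$, but both premises are the same $P$), so it is matched directly by $P\xrightarrow{X}P'$, landing the pair $(C(P'),C(P'))\in Id$; conversely any $P\xrightarrow{X}P'$ induces $C(P+P)\xrightarrow{X}C(P')$ by $\textbf{Sum}_1$. For the unit law $P+\textbf{nil}\sim_p P$, I take $R=\{(C(P+\textbf{nil}),C(P)):P\in\mathcal{P}\}\cup Id$; here the crucial point is that $\textbf{nil}$ has no outgoing transitions (there is no rule with conclusion $\textbf{nil}\xrightarrow{\alpha}\cdot$), so every move of $P+\textbf{nil}$ must be a move of $P$ through $\textbf{Sum}_1$, and vice versa, again dropping into $Id$.

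Since in every branch the post-transition pair is a diagonal pair, the bulk of the argument reduces to checking that the diagonal $Id$ on $\mathcal{C}(\mathcal{E})$ is a strong pomset bisimulation, which is immediate, together with the case analysis on which $\textbf{Sum}$ rule fired. I do not anticipate a genuine obstacle here — the laws are structural and the operational semantics for $+$ is simple — but the one point requiring a little care is making the label-set matching condition $X_1\sim X_2$ in Definition \ref{PSB} line up correctly when $X$ is a genuine multi-event set produced by $\textbf{Sum}_2$: one must note that the \emph{same} set $X$ is used on both sides, so the pomset isomorphism witnessing $X\sim X$ is the identity and the condition holds trivially. I would present the proof for law (1) in full detail and remark that (2)–(4) follow the same pattern, constructing the analogous relations and invoking the symmetry of the rules $\textbf{Sum}_1$/$\textbf{Sum}_2$.
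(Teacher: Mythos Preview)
Your proposal is correct and takes essentially the same approach as the paper: both arguments perform a case analysis on the $\textbf{Sum}_{1,2}$ rules, observe that any (pomset) transition of one side is matched by the identical transition of the other side reaching the \emph{same} derivative, and then invoke reflexivity of $\sim_p$ on that derivative. Your version is slightly more explicit in naming the bisimulation relation $R = \{\ldots\}\cup Id$, whereas the paper simply displays the matching derivations and concludes from $P'\sim_p P'$, $Q'\sim_p Q'$; but the underlying idea is identical.
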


\begin{proof}
\begin{enumerate}
  \item $P+Q\sim_p Q+P$. By the transition rules $\textbf{Sum}_{1,2}$ in Table \ref{TRForCTC}, we get

      $$\frac{P\xrightarrow{p}P'}{P+ Q\xrightarrow{p}P'} (p\subseteq P) \quad \frac{P\xrightarrow{p}P'}{Q+ P\xrightarrow{p}P'}(p\subseteq P)$$

      $$\frac{Q\xrightarrow{q}Q'}{P+ Q\xrightarrow{q}Q'}(q\subseteq Q) \quad \frac{Q\xrightarrow{q}Q'}{Q+ P\xrightarrow{q}Q'}(q\subseteq Q)$$

      Since $P'\sim_p P'$ and $Q'\sim_p Q'$, $P+ Q\sim_p Q+ P$, as desired.
  \item $P+(Q+R)\sim_p (P+Q)+R$. By the transition rules $\textbf{Sum}_{1,2}$ in Table \ref{TRForCTC}, we get

      $$\frac{P\xrightarrow{p}P'}{P+(Q+R)\xrightarrow{p}P'}(p\subseteq P) \quad \frac{P\xrightarrow{p}P'}{(P+Q)+R\xrightarrow{p}P'}(p\subseteq P)$$

      $$\frac{Q\xrightarrow{q}Q'}{P+(Q+R)\xrightarrow{q}Q'}(q\subseteq Q) \quad \frac{Q\xrightarrow{q}Q'}{(P+Q)+R\xrightarrow{q}Q'}(q\subseteq Q)$$

      $$\frac{R\xrightarrow{r}R'}{P+(Q+R)\xrightarrow{r}R'}(r\subseteq R) \quad \frac{R\xrightarrow{r}R'}{(P+Q)+R\xrightarrow{r}R'}(r\subseteq R)$$

      Since $P'\sim_p P'$, $Q'\sim_p Q'$ and $R'\sim_p R'$, $P+(Q+R)\sim_p (P+Q)+R$, as desired.
  \item $P+P\sim_p P$. By the transition rules $\textbf{Sum}_{1,2}$ in Table \ref{TRForCTC}, we get

      $$\frac{P\xrightarrow{p}P'}{P+ P\xrightarrow{p}P'}(p\subseteq P) \quad \frac{P\xrightarrow{p}P'}{P\xrightarrow{p}P'}(p\subseteq P)$$

      Since $P'\sim_p P'$, $P+ P\sim_p P$, as desired.
  \item $P+\textbf{nil}\sim_p P$. By the transition rules $\textbf{Sum}_{1,2}$ in Table \ref{TRForCTC}, we get

      $$\frac{P\xrightarrow{p}P'}{P+ \textbf{nil}\xrightarrow{p}P'}(p\subseteq P) \quad \frac{P\xrightarrow{p}P'}{P\xrightarrow{p}P'}(p\subseteq P)$$

      Since $P'\sim_p P'$, $P+ \textbf{nil}\sim_p P$, as desired.
\end{enumerate}
\end{proof}

\begin{proposition}[Monoid laws for strong step bisimulation] The monoid laws for strong step bisimulation are as follows.
\begin{enumerate}
  \item $P+Q\sim_s Q+P$;
  \item $P+(Q+R)\sim_s (P+Q)+R$;
  \item $P+P\sim_s P$;
  \item $P+\textbf{nil}\sim_s P$.
\end{enumerate}
\end{proposition}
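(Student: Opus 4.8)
The plan is to follow verbatim the structure of the proof just given for the monoid laws for strong pomset bisimulation, replacing pomset transitions by steps throughout. Recall that, by Definition \ref{PSB}, a strong step bisimulation is obtained from a strong pomset bisimulation by requiring the relating transitions $C_1\xrightarrow{X_1}C_1'$ and $C_2\xrightarrow{X_2}C_2'$ to be steps, i.e.\ the events occurring in each $X_i$ are pairwise concurrent. The Summation operator only resolves a choice; it neither creates nor removes causal or concurrent dependencies among events inside the summands. Hence every step move of $P+Q$ is literally a step move of one of $P$, $Q$, and conversely, which is exactly what makes the pomset argument go through unchanged in the step setting.

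Concretely, for each of the four laws I would exhibit the evident witness relation and check the transfer property in both directions. For item~(1) take $R=\{(P+Q,\,Q+P)\}$ together with the identity relation $Id_{\mathcal{P}}$ on processes (identifying a process with its configuration as the paper does); for item~(2) take $R=\{(P+(Q+R),\,(P+Q)+R)\}\cup Id_{\mathcal{P}}$; for item~(3) take $R=\{(P+P,\,P)\}\cup Id_{\mathcal{P}}$; for item~(4) take $R=\{(P+\mathbf{nil},\,P)\}\cup Id_{\mathcal{P}}$. Then I would analyze the transitions of a summation: by rules $\textbf{Sum}_1$ and $\textbf{Sum}_2$ in Table \ref{TRForCTC}, the only transitions of $P+Q$ are $P+Q\xrightarrow{\alpha}P'$ whenever $P\xrightarrow{\alpha}P'$, $P+Q\xrightarrow{\alpha}Q'$ whenever $Q\xrightarrow{\alpha}Q'$, and the analogous forms with a set label $\{\alpha_1,\dots,\alpha_n\}$. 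Consequently a step $P+Q\xrightarrow{X}T$ (with the events of $X$ pairwise concurrent) is precisely a step $P\xrightarrow{X}T$ or a step $Q\xrightarrow{X}T$, and it reaches a process $T$ with $(T,T)\in Id_{\mathcal{P}}\subseteq R$. Commutativity, associativity and idempotence then follow because the sets of step-labelled outgoing transitions of $P+Q$ and $Q+P$, of $(P+Q)+R$ and $P+(Q+R)$, and of $P+P$ and $P$ all coincide; the $\mathbf{nil}$ law uses additionally that $\mathbf{nil}$ has no transitions, so $P+\mathbf{nil}$ and $P$ have exactly the same step transitions. Finally I would note that the empty configuration pair lies in each witness relation (since $Id_{\mathcal{P}}$ is included), so the bisimilarity conclusion $\sim_s$ is licensed.

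I do not expect any genuine obstacle here beyond bookkeeping: the entire content is inherited from the shape of $\textbf{Sum}_{1,2}$, and the ``step'' qualifier is preserved for free because Summation acts on whole processes and does not interact with the Composition operator $\parallel$ occurring inside them. The only point worth stating carefully is that the identity relation on processes is itself a strong step bisimulation, so that adding it to each finite witness above still yields a strong step bisimulation. The cleanest write-up simply reproduces the displayed derivations from the pomset proof with $p,q,r$ now read as step transitions, and closes each item with ``since $P'\sim_s P'$ (resp.\ $Q'\sim_s Q'$, $R'\sim_s R'$), \dots, as desired.''
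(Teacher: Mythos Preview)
Your proposal is correct and takes essentially the same approach as the paper: both arguments analyze the $\textbf{Sum}_{1,2}$ rules to show that the step transitions of the two sides coincide, and then close each item by appealing to reflexivity of $\sim_s$ on the targets. Your write-up is slightly more explicit in naming the witness relations $R$, but the paper's displayed derivations (with the added side condition that the events in each label set are pairwise concurrent) amount to exactly the same verification.
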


\begin{proof}
\begin{enumerate}
  \item $P+Q\sim_s Q+P$. By the transition rules $\textbf{Sum}_{1,2}$ in Table \ref{TRForCTC}, we get

      $$\frac{P\xrightarrow{p}P'}{P+ Q\xrightarrow{p}P'} (p\subseteq P,\forall\alpha,\beta \in p,\textrm{ are pairwise concurrent})$$

      $$\frac{P\xrightarrow{p}P'}{Q+ P\xrightarrow{p}P'}(p\subseteq P,\forall\alpha,\beta \in p,\textrm{ are pairwise concurrent})$$

      $$\frac{Q\xrightarrow{q}Q'}{P+ Q\xrightarrow{q}Q'}(q\subseteq Q,\forall\alpha,\beta \in q,\textrm{ are pairwise concurrent})$$

      $$\frac{Q\xrightarrow{q}Q'}{Q+ P\xrightarrow{q}Q'}(q\subseteq Q,\forall\alpha,\beta \in q,\textrm{ are pairwise concurrent})$$

      Since $P'\sim_s P'$ and $Q'\sim_s Q'$, $P+ Q\sim_s Q+ P$, as desired.
  \item $P+(Q+R)\sim_s (P+Q)+R$. By the transition rules $\textbf{Sum}_{1,2}$ in Table \ref{TRForCTC}, we get

      $$\frac{P\xrightarrow{p}P'}{P+(Q+R)\xrightarrow{p}P'}(p\subseteq P,\forall\alpha,\beta \in p,\textrm{ are pairwise concurrent})$$

      $$\frac{P\xrightarrow{p}P'}{(P+Q)+R\xrightarrow{p}P'}(p\subseteq P,\forall\alpha,\beta \in p,\textrm{ are pairwise concurrent})$$

      $$\frac{Q\xrightarrow{q}Q'}{P+(Q+R)\xrightarrow{q}Q'}(q\subseteq Q,\forall\alpha,\beta \in q,\textrm{ are pairwise concurrent})$$

      $$\frac{Q\xrightarrow{q}Q'}{(P+Q)+R\xrightarrow{q}Q'}(q\subseteq Q,\forall\alpha,\beta \in q,\textrm{ are pairwise concurrent})$$

      $$\frac{R\xrightarrow{r}R'}{P+(Q+R)\xrightarrow{r}R'}(r\subseteq R,\forall\alpha,\beta \in r,\textrm{ are pairwise concurrent})$$

      $$\frac{R\xrightarrow{r}R'}{(P+Q)+R\xrightarrow{r}R'}(r\subseteq R,\forall\alpha,\beta \in r,\textrm{ are pairwise concurrent})$$

      Since $P'\sim_s P'$, $Q'\sim_s Q'$ and $R'\sim_s R'$, $P+(Q+R)\sim_s (P+Q)+R$, as desired.
  \item $P+P\sim_s P$. By the transition rules $\textbf{Sum}_{1,2}$ in Table \ref{TRForCTC}, we get

      $$\frac{P\xrightarrow{p}P'}{P+ P\xrightarrow{p}P'}(p\subseteq P,\forall\alpha,\beta \in p,\textrm{ are pairwise concurrent})$$

      $$\frac{P\xrightarrow{p}P'}{P\xrightarrow{p}P'}(p\subseteq P,\forall\alpha,\beta \in p,\textrm{ are pairwise concurrent})$$

      Since $P'\sim_s P'$, $P+ P\sim_s P$, as desired.
  \item $P+\textbf{nil}\sim_s P$. By the transition rules $\textbf{Sum}_{1,2}$ in Table \ref{TRForCTC}, we get

      $$\frac{P\xrightarrow{p}P'}{P+ \textbf{nil}\xrightarrow{p}P'}(p\subseteq P,\forall\alpha,\beta \in p,\textrm{ are pairwise concurrent})$$

      $$\frac{P\xrightarrow{p}P'}{P\xrightarrow{p}P'}(p\subseteq P,\forall\alpha,\beta \in p,\textrm{ are pairwise concurrent})$$

      Since $P'\sim_s P'$, $P+ \textbf{nil}\sim_s P$, as desired.
\end{enumerate}
\end{proof}

\begin{proposition}[Monoid laws for strong hp-bisimulation] The monoid laws for strong hp-bisimulation are as follows.
\begin{enumerate}
  \item $P+Q\sim_{hp} Q+P$;
  \item $P+(Q+R)\sim_{hp} (P+Q)+R$;
  \item $P+P\sim_{hp} P$;
  \item $P+\textbf{nil}\sim_{hp} P$.
\end{enumerate}
\end{proposition}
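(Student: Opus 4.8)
The plan is to mirror exactly the pattern already established for the pomset and step cases, since the monoid laws for summation do not interact with the extra structure carried by an hp-bisimulation in any essential way. For each of the four equations I would exhibit an explicit posetal relation $R \subseteq \mathcal{C}(\mathcal{E}_1)\overline{\times}\mathcal{C}(\mathcal{E}_2)$ containing $(\emptyset,\emptyset,\emptyset)$ and show it is a strong hp-bisimulation. The natural choice is $R = \{(C_1, f, C_2) : f \text{ is the obvious identity-like isomorphism between corresponding configurations}\}$; concretely, for commutativity $P+Q \sim_{hp} Q+P$ one takes $R$ to be the identity relation on configurations of (the event structure denoted by) $P+Q$ together with $(\emptyset,\emptyset,\emptyset)$, after observing that $P+Q$ and $Q+P$ give rise to the same reachable configurations since $\textbf{Sum}_{1,2}$ make the same transitions available on either side.

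The key steps, in order, are: (i) invoke the transition rules $\textbf{Sum}_{1,2}$ from Table \ref{TRForCTC} to list the single-event moves $C_1 \xrightarrow{e_1} C_1'$ available from a configuration of the left-hand term — a first move resolves the choice into either a $P$-branch or a $Q$-branch (resp. $R$-branch); (ii) for each such move, produce the matching move $C_2 \xrightarrow{e_2} C_2'$ on the right-hand term with the same label, which is immediate because the right-hand term offers syntactically the same prefixed summands; (iii) check that extending the isomorphism by $f[e_1 \mapsto e_2]$ keeps the triple in $R$, which holds because after the first move both processes have collapsed to the same residual (e.g. $P'$, or $Q'$), so subsequent behaviour and causal/ordering structure coincide on the nose; (iv) note the symmetric ("vice-versa") direction is identical by the symmetry of the construction. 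For idempotence $P+P\sim_{hp}P$ and the unit law $P+\textbf{nil}\sim_{hp}P$ the right-hand side is literally $P$, so $R$ is just $Id$ on $\mathcal{C}(\mathcal{E}(P))$ together with the initial triple, and $\textbf{Sum}_{1,2}$ (together with the fact that $\textbf{nil}$ has no transitions) guarantee the transitions match.

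There is essentially no hard part here; the only thing to be careful about is bookkeeping of the order-isomorphism component $f$, which the pomset/step proofs suppress. The point worth making explicitly is that resolving a summation never introduces new causal dependencies or conflicts beyond those already present in the chosen branch, so the pointwise partial-order structure on $C_1$ and on $C_2$ agree and $f[e_1\mapsto e_2]$ is genuinely an isomorphism at each stage. I would therefore write the proof in the same terse four-part format as the preceding propositions, displaying the relevant instances of $\textbf{Sum}_{1,2}$ and concluding each item with "since $(C_1', f[e_1\mapsto e_2], C_2')\in R$, we are done", and I expect the whole argument to occupy about the same space as the step-bisimulation proof above.

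\begin{proof}
\begin{enumerate}
  \item $P+Q\sim_{hp} Q+P$. By the transition rules $\textbf{Sum}_{1,2}$ in Table \ref{TRForCTC}, we get

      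$$\frac{P\xrightarrow{p}P'}{P+ Q\xrightarrow{p}P'} (p\subseteq P) \quad \frac{P\xrightarrow{p}P'}{Q+ P\xrightarrow{p}P'}(p\subseteq P)$$

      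$$\frac{Q\xrightarrow{q}Q'}{P+ Q\xrightarrow{q}Q'}(q\subseteq Q) \quad \frac{Q\xrightarrow{q}Q'}{Q+ P\xrightarrow{q}Q'}(q\subseteq Q)$$

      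Since $(C(P'),f[p\mapsto p],C(P'))\in R$ and $(C(Q'),f[q\mapsto q],C(Q'))\in R$, $P+ Q\sim_{hp} Q+ P$, as desired.
  \item $P+(Q+R)\sim_{hp} (P+Q)+R$. By the transition rules $\textbf{Sum}_{1,2}$ in Table \ref{TRForCTC}, we get

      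$$\frac{P\xrightarrow{p}P'}{P+(Q+R)\xrightarrow{p}P'}(p\subseteq P) \quad \frac{P\xrightarrow{p}P'}{(P+Q)+R\xrightarrow{p}P'}(p\subseteq P)$$

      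$$\frac{Q\xrightarrow{q}Q'}{P+(Q+R)\xrightarrow{q}Q'}(q\subseteq Q) \quad \frac{Q\xrightarrow{q}Q'}{(P+Q)+R\xrightarrow{q}Q'}(q\subseteq Q)$$

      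$$\frac{R\xrightarrow{r}R'}{P+(Q+R)\xrightarrow{r}R'}(r\subseteq R) \quad \frac{R\xrightarrow{r}R'}{(P+Q)+R\xrightarrow{r}R'}(r\subseteq R)$$

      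Since $(C(P'),f[p\mapsto p],C(P'))\in R$, $(C(Q'),f[q\mapsto q],C(Q'))\in R$ and $(C(R'),f[r\mapsto r],C(R'))\in R$, $P+(Q+R)\sim_{hp} (P+Q)+R$, as desired.
  \item $P+P\sim_{hp} P$. By the transition rules $\textbf{Sum}_{1,2}$ in Table \ref{TRForCTC}, we get

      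$$\frac{P\xrightarrow{p}P'}{P+ P\xrightarrow{p}P'}(p\subseteq P) \quad \frac{P\xrightarrow{p}P'}{P\xrightarrow{p}P'}(p\subseteq P)$$

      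Since $(C(P'),f[p\mapsto p],C(P'))\in R$, $P+ P\sim_{hp} P$, as desired.
  \item $P+\textbf{nil}\sim_{hp} P$. By the transition rules $\textbf{Sum}_{1,2}$ in Table \ref{TRForCTC}, we get

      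$$\frac{P\xrightarrow{p}P'}{P+ \textbf{nil}\xrightarrow{p}P'}(p\subseteq P) \quad \frac{P\xrightarrow{p}P'}{P\xrightarrow{p}P'}(p\subseteq P)$$

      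Since $(C(P'),f[p\mapsto p],C(P'))\in R$, $P+ \textbf{nil}\sim_{hp} P$, as desired.
\end{enumerate}
\end{proof}
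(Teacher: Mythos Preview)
Your approach is essentially the paper's: invoke $\textbf{Sum}_{1,2}$ to show both sides have identical outgoing moves to identical residuals, and conclude that the updated posetal triple stays in the relation. The paper's proof has the same four-item structure and the same level of detail.

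One discrepancy worth fixing: in the formal proof you reuse the pomset-style labels $p,q,r$ with $p\subseteq P$ from the $\sim_p$ case and write $f[p\mapsto p]$. But strong hp-bisimulation (Definition~\ref{HHPB}) is phrased in terms of \emph{single-event} transitions $C_1\xrightarrow{e_1}C_1'$, and the update $f[x_1\mapsto x_2]$ is only defined for single events. The paper accordingly switches to single actions $\alpha,\beta,\gamma$ in this proposition and writes, e.g., $(C((P+Q)'),f[\alpha\mapsto\alpha],C((Q+P)'))\in\sim_{hp}$. Your prose preamble already says ``single-event moves'', so this is just a matter of making the displayed derivations match the definition; once you replace $p,q,r$ by single events the two proofs are identical.
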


\begin{proof}
\begin{enumerate}
  \item $P+Q\sim_{hp} Q+P$. By the transition rules $\textbf{Sum}_{1,2}$ in Table \ref{TRForCTC}, we get

      $$\frac{P\xrightarrow{\alpha}P'}{P+ Q\xrightarrow{\alpha}P'} \quad \frac{P\xrightarrow{\alpha}P'}{Q+ P\xrightarrow{\alpha}P'}$$

      $$\frac{Q\xrightarrow{\beta}Q'}{P+ Q\xrightarrow{\beta}Q'} \quad \frac{Q\xrightarrow{\beta}Q'}{Q+ P\xrightarrow{\beta}Q'}$$

      Since $(C(P+ Q),f,C(Q+ P))\in\sim_{hp}$, $(C((P+ Q)'),f[\alpha\mapsto \alpha],C((Q+ P)'))\in\sim_{hp}$ and $(C((P+ Q)'),f[\beta\mapsto \beta],C((Q+ P)'))\in\sim_{hp}$, $P+ Q\sim_{hp} Q+ P$, as desired.
  \item $P+(Q+R)\sim_{hp} (P+Q)+R$. By the transition rules $\textbf{Sum}_{1,2}$ in Table \ref{TRForCTC}, we get

      $$\frac{P\xrightarrow{\alpha}P'}{P+(Q+R)\xrightarrow{\alpha}P'} \quad \frac{P\xrightarrow{\alpha}P'}{(P+Q)+R\xrightarrow{\alpha}P'}$$

      $$\frac{Q\xrightarrow{\beta}Q'}{P+(Q+R)\xrightarrow{\beta}Q'} \quad \frac{Q\xrightarrow{\beta}Q'}{(P+Q)+R\xrightarrow{\beta}Q'}$$

      $$\frac{R\xrightarrow{\gamma}R'}{P+(Q+R)\xrightarrow{\gamma}R'} \quad \frac{R\xrightarrow{\gamma}R'}{(P+Q)+R\xrightarrow{\gamma}R'}$$

      Since $(C(P+ (Q+R)),f,C((P+Q)+R))\in\sim_{hp}$, $(C((P+ (Q+R))'),f[\alpha\mapsto \alpha],C((P+Q)+R)'))\in\sim_{hp}$, $(C((P+ (Q+R))'),f[\beta\mapsto \beta],C((P+Q)+R)'))\in\sim_{hp}$ and $(C((P+ (Q+R))'),f[\gamma\mapsto \gamma],C((P+Q)+R)'))\in\sim_{hp}$, $P+(Q+R)\sim_{hp} (P+Q)+R$, as desired.
  \item $P+P\sim_{hp} P$. By the transition rules $\textbf{Sum}_{1,2}$ in Table \ref{TRForCTC}, we get

      $$\frac{P\xrightarrow{\alpha}P'}{P+ P\xrightarrow{\alpha}P'} \quad \frac{P\xrightarrow{\alpha}P'}{P\xrightarrow{\alpha}P'}$$

      Since $(C(P+P),f,C(P))\in\sim_{hp}$, $(C((P+ P)'),f[\alpha\mapsto \alpha],C((P)'))\in\sim_{hp}$, $P+ P\sim_{hp} P$, as desired.
  \item $P+\textbf{nil}\sim_{hp} P$. By the transition rules $\textbf{Sum}_{1,2}$ in Table \ref{TRForCTC}, we get

      $$\frac{P\xrightarrow{\alpha}P'}{P+ \textbf{nil}\xrightarrow{\alpha}P'} \quad \frac{P\xrightarrow{\alpha}P'}{P\xrightarrow{\alpha}P'}$$

      Since $(C(P+\textbf{nil}),f,C(P))\in\sim_{hp}$, $(C((P+ \textbf{nil})'),f[\alpha\mapsto \alpha],C((P)'))\in\sim_{hp}$, $P+ \textbf{nil}\sim_{hp} P$, as desired.
\end{enumerate}
\end{proof}

\begin{proposition}[Monoid laws for strongly hhp-bisimulation] The monoid laws for strongly hhp-bisimulation are as follows.
\begin{enumerate}
  \item $P+Q\sim_{hhp} Q+P$;
  \item $P+(Q+R)\sim_{hhp} (P+Q)+R$;
  \item $P+P\sim_{hhp} P$;
  \item $P+\textbf{nil}\sim_{hhp} P$.
\end{enumerate}
\end{proposition}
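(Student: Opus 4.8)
The plan is to build directly on the proof just given for strong hp-bisimulation, since by Definition~\ref{HHPB} a strongly hhp-bisimulation is precisely a \emph{downward closed} strong hp-bisimulation; so it suffices, for each law, to exhibit a downward closed strong hp-bisimulation relating the two sides. For each of the four equations I would first re-run the transition bookkeeping from the $\sim_{hp}$ case: by $\textbf{Sum}_{1,2}$ of Table~\ref{TRForCTC}, the two sides enable exactly the same initial transitions, and --- the point that makes the rest trivial --- after \emph{any} one such transition the two processes collapse to one and the same residual (for instance $P+Q\xrightarrow{e}P'$ and $Q+P\xrightarrow{e}P'$ both come from $P\xrightarrow{e}P'$, and symmetrically for moves from $Q$; likewise for associativity, idempotence, and the $\textbf{nil}$-unit law, in which $\textbf{nil}$ offers no transition).

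Next I would take as the witnessing posetal relation
$$R=\{(\emptyset,\emptyset,\emptyset)\}\cup\{(C,\,Id_C,\,C)\mid C\ \text{a configuration reachable after one step}\},$$
i.e. the root triple together with the identity posetal relation on the configurations of the common residuals. By the transition analysis above this is a strong hp-bisimulation: from $(\emptyset,\emptyset,\emptyset)$ an initial move $e$ on either side is matched by the identical move on the other and lands in $(\{e\},\,\emptyset[e\mapsto e],\,\{e\})=(\{e\},Id_{\{e\}},\{e\})\in R$; and from any $(C,Id_C,C)$ every move $C\xrightarrow{e}C\cup\{e\}$ is matched by itself, with $Id_C[e\mapsto e]=Id_{C\cup\{e\}}$ again an identity. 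The three-summand, one-summand and unit variants are handled identically, and the same $R$ (with $(\emptyset,\emptyset,\emptyset)\in R$) witnesses each of $\sim_{hp}$ and hence, once downward-closedness is checked, $\sim_{hhp}$.

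Finally I would verify the only clause particular to the hhp-case, that $R$ is downward closed: any triple pointwise below $(C,Id_C,C)$ has the form $(C_1,Id_{C_1},C_1)$ with $C_1\subseteq C$ a configuration, hence is again in $R$; and $(\emptyset,\emptyset,\emptyset)$ has nothing strictly below it. That is all there is beyond the $\sim_{hp}$ proof, and it is not a genuine obstacle --- downward closure comes for free precisely because the Summation rules neither rename nor duplicate events, so the only configuration isomorphisms arising here are identities. The single thing worth pausing on is the observation that the two sides coincide after their first step; granting that, the whole argument is a routine adaptation of the strong hp-bisimulation case.
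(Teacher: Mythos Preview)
Your proposal is correct and follows essentially the same approach as the paper: both arguments rest on the observation that, by $\textbf{Sum}_{1,2}$, the two sides of each equation have identical initial transitions and collapse to the \emph{same} residual after one step, so the identity correspondence witnesses the bisimulation. The paper's proof simply reruns the hp-case verbatim with $\sim_{hhp}$ in place of $\sim_{hp}$ and asserts the relevant triples lie in $\sim_{hhp}$; you go slightly further by naming the witnessing posetal relation explicitly and checking downward closure directly, which the paper leaves implicit, but this is elaboration rather than a different route.
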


\begin{proof}
\begin{enumerate}
  \item $P+Q\sim_{hhp} Q+P$. By the transition rules $\textbf{Sum}_{1,2}$ in Table \ref{TRForCTC}, we get

      $$\frac{P\xrightarrow{\alpha}P'}{P+ Q\xrightarrow{\alpha}P'} \quad \frac{P\xrightarrow{\alpha}P'}{Q+ P\xrightarrow{\alpha}P'}$$

      $$\frac{Q\xrightarrow{\beta}Q'}{P+ Q\xrightarrow{\beta}Q'} \quad \frac{Q\xrightarrow{\beta}Q'}{Q+ P\xrightarrow{\beta}Q'}$$

      Since $(C(P+ Q),f,C(Q+ P))\in\sim_{hhp}$, $(C((P+ Q)'),f[\alpha\mapsto \alpha],C((Q+ P)'))\in\sim_{hhp}$ and $(C((P+ Q)'),f[\beta\mapsto \beta],C((Q+ P)'))\in\sim_{hhp}$, $P+ Q\sim_{hhp} Q+ P$, as desired.
  \item $P+(Q+R)\sim_{hhp} (P+Q)+R$. By the transition rules $\textbf{Sum}_{1,2}$ in Table \ref{TRForCTC}, we get

      $$\frac{P\xrightarrow{\alpha}P'}{P+(Q+R)\xrightarrow{\alpha}P'} \quad \frac{P\xrightarrow{\alpha}P'}{(P+Q)+R\xrightarrow{\alpha}P'}$$

      $$\frac{Q\xrightarrow{\beta}Q'}{P+(Q+R)\xrightarrow{\beta}Q'} \quad \frac{Q\xrightarrow{\beta}Q'}{(P+Q)+R\xrightarrow{\beta}Q'}$$

      $$\frac{R\xrightarrow{\gamma}R'}{P+(Q+R)\xrightarrow{\gamma}R'} \quad \frac{R\xrightarrow{\gamma}R'}{(P+Q)+R\xrightarrow{\gamma}R'}$$

      Since $(C(P+ (Q+R)),f,C((P+Q)+R))\in\sim_{hhp}$, $(C((P+ (Q+R))'),f[\alpha\mapsto \alpha],C((P+Q)+R)'))\in\sim_{hhp}$, $(C((P+ (Q+R))'),f[\beta\mapsto \beta],C((P+Q)+R)'))\in\sim_{hhp}$ and $(C((P+ (Q+R))'),f[\gamma\mapsto \gamma],C((P+Q)+R)'))\in\sim_{hhp}$, $P+(Q+R)\sim_{hhp} (P+Q)+R$, as desired.
  \item $P+P\sim_{hhp} P$. By the transition rules $\textbf{Sum}_{1,2}$ in Table \ref{TRForCTC}, we get

      $$\frac{P\xrightarrow{\alpha}P'}{P+ P\xrightarrow{\alpha}P'} \quad \frac{P\xrightarrow{\alpha}P'}{P\xrightarrow{\alpha}P'}$$

      Since $(C(P+P),f,C(P))\in\sim_{hhp}$, $(C((P+ P)'),f[\alpha\mapsto \alpha],C((P)'))\in\sim_{hhp}$, $P+ P\sim_{hhp} P$, as desired.
  \item $P+\textbf{nil}\sim_{hhp} P$. By the transition rules $\textbf{Sum}_{1,2}$ in Table \ref{TRForCTC}, we get

      $$\frac{P\xrightarrow{\alpha}P'}{P+ \textbf{nil}\xrightarrow{\alpha}P'} \quad \frac{P\xrightarrow{\alpha}P'}{P\xrightarrow{\alpha}P'}$$

      Since $(C(P+\textbf{nil}),f,C(P))\in\sim_{hhp}$, $(C((P+ \textbf{nil})'),f[\alpha\mapsto \alpha],C((P)'))\in\sim_{hhp}$, $P+ \textbf{nil}\sim_{hhp} P$, as desired.
\end{enumerate}
\end{proof}

\begin{proposition}[Static laws for strong step bisimulation] \label{SLSSB}
The static laws for strong step bisimulation are as follows.
\begin{enumerate}
  \item $P\parallel Q\sim_s Q\parallel P$;
  \item $P\parallel(Q\parallel R)\sim_s (P\parallel Q)\parallel R$;
  \item $P\parallel \textbf{nil}\sim_s P$;
  \item $P\setminus L\sim_s P$, if $\mathcal{L}(P)\cap(L\cup\overline{L})=\emptyset$;
  \item $P\setminus K\setminus L\sim_s P\setminus(K\cup L)$;
  \item $P[f]\setminus L\sim_s P\setminus f^{-1}(L)[f]$;
  \item $(P\parallel Q)\setminus L\sim_s P\setminus L\parallel Q\setminus L$, if $\mathcal{L}(P)\cap\overline{\mathcal{L}(Q)}\cap(L\cup\overline{L})=\emptyset$;
  \item $P[Id]\sim_s P$;
  \item $P[f]\sim_s P[f']$, if $f\upharpoonright\mathcal{L}(P)=f'\upharpoonright\mathcal{L}(P)$;
  \item $P[f][f']\sim_s P[f'\circ f]$;
  \item $(P\parallel Q)[f]\sim_s P[f]\parallel Q[f]$, if $f\upharpoonright(L\cup\overline{L})$ is one-to-one, where $L=\mathcal{L}(P)\cup\mathcal{L}(Q)$.
\end{enumerate}
\end{proposition}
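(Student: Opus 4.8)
The plan is to dispatch the eleven laws one at a time by exhibiting, for each claimed equation $P_1\sim_s P_2$, a strong step bisimulation $R\subseteq\mathcal{C}(\mathcal{E}_1)\times\mathcal{C}(\mathcal{E}_2)$ with $(\emptyset,\emptyset)\in R$, following the pattern already used for the monoid laws. Concretely I would let $R$ be the set of pairs $(C(P_1'),C(P_2'))$ where $P_1'$ and $P_2'$ range over the syntactically corresponding derivatives obtained by letting the two sides evolve in lock-step: for law (1) the relation $\{(C(P\parallel Q),C(Q\parallel P)):P,Q\in\mathcal{P}\}$ (closed under derivatives), for law (4) the relation $\{(C(P\setminus L),C(P)):\mathcal{L}(P)\cap(L\cup\overline{L})=\emptyset\}$, for law (10) the relation $\{(C(P[f][f']),C(P[f'\circ f]))\}$, and so on. In each case the obligation is to check $R$ is closed under matching step transitions: whenever the left configuration does $C\xrightarrow{X}C'$ with $X$ a set of pairwise concurrent events, the right one does $C\xrightarrow{X'}C''$ with $X\sim X'$ (so the multiset of action labels agrees) and $(C',C'')\in R$, and vice versa. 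Because strong step bisimulation only quantifies over steps, in all of these laws the label multiset is either literally preserved (the $\parallel$, $\setminus L$ laws) or has $f$ applied uniformly (the $[f]$ laws), so the $X\sim X'$ requirement is immediate once the target pair lies in $R$.

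The bulk of the work is then a case analysis over which rule of Table \ref{TRForCTC} produced the step on one side. For the monoid-style laws (1)--(3) for $\parallel$ the relevant rules are $\textbf{Act}_2$ and $\textbf{Com}_{1,2,3,4}$: commutativity follows by swapping the roles of the two $\textbf{Com}$-operands; associativity by re-bracketing a step of $P\parallel(Q\parallel R)$ built from sub-steps of $P$, $Q$, $R$ together with any pairwise synchronizations among them; and $P\parallel\textbf{nil}\sim_s P$ from the fact that $\textbf{nil}$ has no transitions, so only $\textbf{Com}_1$ applies. For the restriction laws (4)--(7) the rules $\textbf{Res}_{1,2}$ transmit transitions transparently as long as the stated side conditions hold, and those conditions are preserved along transitions by the earlier Proposition on sorts ($\mathcal{L}(P')\subseteq\mathcal{L}(P)$); law (6) additionally uses $\textbf{Rel}_{1,2}$ together with the equivalence $f(\alpha)\in L\iff\alpha\in f^{-1}(L)$. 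For the relabelling laws (8)--(11) I would likewise push $\textbf{Rel}_{1,2}$ through, using $f\circ Id=f$, functional composition for (10), and for (9) the fact that a transition only exposes labels in $\mathcal{L}(P)$, on which $f$ and $f'$ agree by hypothesis.

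I expect the real obstacles to be laws (7) and (11), the distributions of restriction and of relabelling over $\parallel$. For (7), the delicate direction is that a synchronization step of $P\parallel Q$ on a channel $l$ (rule $\textbf{Com}_4$, yielding $\tau$) survives the outer $\setminus L$ even when $l\in L$, whereas on the right-hand side $P\setminus L\parallel Q\setminus L$ that synchronization must be rebuilt from $P\setminus L\xrightarrow{l}\cdot$ and $Q\setminus L\xrightarrow{\overline{l}}\cdot$, which is possible only when neither operand has $l$ nor $\overline{l}$ removed by its own restriction; this is exactly what $\mathcal{L}(P)\cap\overline{\mathcal{L}(Q)}\cap(L\cup\overline{L})=\emptyset$ guarantees, and I would have to argue it for every event in a composite step, not merely for a single action. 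For (11) the one-to-one condition on $f\upharpoonright(L\cup\overline{L})$ is needed precisely so that $f$ cannot identify an action of $P$ with the complement of an action of $Q$, which would otherwise let $(P\parallel Q)[f]$ perform a synchronization unavailable to $P[f]\parallel Q[f]$ (or conversely); again this must be checked for the synchronizations hidden inside a multi-action step. A further recurring delicacy in all the $\parallel$ laws is the negative premise $Q\nrightarrow$ in $\textbf{Com}_{1,2}$: I must make sure that a ``solo'' move of one operand on one side remains available as a solo move after re-association, restriction, or relabelling, i.e. that the rearrangement neither unblocks nor blocks such a move — here too the sort conditions and the disjointness/one-to-one hypotheses are what make the argument close.
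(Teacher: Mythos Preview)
Your proposal is correct and follows essentially the same approach as the paper: a case analysis over the transition rules $\textbf{Com}_{1,2,3,4}$, $\textbf{Res}_{1,2}$, $\textbf{Rel}_{1,2}$ showing that each step on one side is matched by the same-labelled step on the other, with the derivatives again related. The only presentational difference is that the paper works with single-event transitions (arguing that treating a singleton as a step already covers strong step bisimulation) and leaves the bisimulation relation implicit via coinductive ``with the assumption'' clauses, whereas you spell out the candidate relation $R$ explicitly and flag the genuinely delicate cases (laws~(7) and~(11), and the negative premises in $\textbf{Com}_{1,2}$) that the paper passes over in silence.
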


\begin{proof}
Though transition rules in Table \ref{TRForCTC} are defined in the flavor of single event, they can be modified into a step (a set of events within which each event is pairwise concurrent), we omit them. If we treat a single event as a step containing just one event, the proof of the static laws does not exist any problem, so we use this way and still use the transition rules in Table \ref{TRForCTC}.

\begin{enumerate}
  \item $P\parallel Q\sim_s Q\parallel P$. By the transition rules $\textbf{Com}_{1,2,3,4}$ in Table \ref{TRForCTC}, we get

      $$\frac{P\xrightarrow{\alpha}P'\quad Q\nrightarrow}{P\parallel Q\xrightarrow{\alpha}P'\parallel Q}
      \quad\frac{P\xrightarrow{\alpha}P'\quad Q\nrightarrow}{Q\parallel P\xrightarrow{\alpha}Q\parallel P'}$$

      $$\frac{Q\xrightarrow{\beta}Q'\quad P\nrightarrow}{P\parallel Q\xrightarrow{\beta}P\parallel Q'}
      \quad\frac{Q\xrightarrow{\beta}Q'\quad P\nrightarrow}{Q\parallel P\xrightarrow{\beta}Q'\parallel P}$$

      $$\frac{P\xrightarrow{\alpha}P'\quad Q\xrightarrow{\beta}Q'}{P\parallel Q\xrightarrow{\{\alpha,\beta\}}P'\parallel Q'}(\beta\neq\overline{\alpha})
      \quad\frac{P\xrightarrow{\alpha}P'\quad Q\xrightarrow{\beta}Q'}{Q\parallel P\xrightarrow{\{\alpha,\beta\}}Q'\parallel P'}(\beta\neq\overline{\alpha})$$

      $$\frac{P\xrightarrow{l}P'\quad Q\xrightarrow{\overline{l}}Q'}{P\parallel Q\xrightarrow{\tau}P'\parallel Q'}
      \quad\frac{P\xrightarrow{l}P'\quad Q\xrightarrow{\overline{l}}Q'}{Q\parallel P\xrightarrow{\tau}Q'\parallel P'}$$

      So, with the assumptions $P'\parallel Q \sim_s Q\parallel P'$, $P\parallel Q' \sim_s Q'\parallel P$ and $P'\parallel Q' \sim_s Q'\parallel P'$, $P\parallel Q\sim_s Q\parallel P$, as desired.
  \item $P\parallel(Q\parallel R)\sim_s (P\parallel Q)\parallel R$. By the transition rules $\textbf{Com}_{1,2,3,4}$ in Table \ref{TRForCTC}, we get

      $$\frac{P\xrightarrow{\alpha}P'\quad Q\nrightarrow\quad R\nrightarrow}{P\parallel (Q\parallel R)\xrightarrow{\alpha}P'\parallel (Q\parallel R)}
      \quad\frac{P\xrightarrow{\alpha}P'\quad Q\nrightarrow\quad R\nrightarrow}{(P\parallel Q)\parallel R\xrightarrow{\alpha}(P'\parallel Q)\parallel R}$$

      $$\frac{Q\xrightarrow{\beta}Q'\quad P\nrightarrow\quad R\nrightarrow}{P\parallel (Q\parallel R)\xrightarrow{\beta}P\parallel (Q'\parallel R)}
      \quad\frac{Q\xrightarrow{\beta}Q'\quad P\nrightarrow\quad R\nrightarrow}{(P\parallel Q)\parallel R\xrightarrow{\beta}(P\parallel Q')\parallel R}$$

      $$\frac{R\xrightarrow{\gamma}R'\quad P\nrightarrow\quad Q\nrightarrow}{P\parallel (Q\parallel R)\xrightarrow{\gamma}P\parallel (Q\parallel R')}
      \quad\frac{R\xrightarrow{\gamma}R'\quad P\nrightarrow\quad Q\nrightarrow}{(P\parallel Q)\parallel R\xrightarrow{\gamma}(P\parallel Q)\parallel R'}$$

      $$\frac{P\xrightarrow{\alpha}P'\quad Q\xrightarrow{\beta}Q'\quad R\nrightarrow}{P\parallel (Q\parallel R)\xrightarrow{\{\alpha,\beta\}}P'\parallel (Q'\parallel R)}(\beta\neq\overline{\alpha})
      \quad\frac{P\xrightarrow{\alpha}P'\quad Q\xrightarrow{\beta}Q'\quad R\nrightarrow}{(P\parallel Q)\parallel R\xrightarrow{\{\alpha,\beta\}}(P'\parallel Q')\parallel R}(\beta\neq\overline{\alpha})$$

      $$\frac{P\xrightarrow{\alpha}P'\quad R\xrightarrow{\gamma}R'\quad Q\nrightarrow}{P\parallel (Q\parallel R)\xrightarrow{\{\alpha,\gamma\}}P'\parallel (Q\parallel R')}(\gamma\neq\overline{\alpha})
      \quad\frac{P\xrightarrow{\alpha}P'\quad R\xrightarrow{\gamma}R'\quad Q\nrightarrow}{(P\parallel Q)\parallel R\xrightarrow{\{\alpha,\gamma\}}(P'\parallel Q)\parallel R]}(\gamma\neq\overline{\alpha})$$

      $$\frac{Q\xrightarrow{\beta}P'\quad R\xrightarrow{\gamma}R'\quad P\nrightarrow}{P\parallel (Q\parallel R)\xrightarrow{\{\beta,\gamma\}}P\parallel (Q'\parallel R')}(\gamma\neq\overline{\beta})
      \quad\frac{Q\xrightarrow{\beta}Q'\quad R\xrightarrow{\gamma}R'\quad P\nrightarrow}{(P\parallel Q)\parallel R\xrightarrow{\{\beta,\gamma\}}(P\parallel Q')\parallel R'}(\gamma\neq\overline{\beta})$$

      $$\frac{P\xrightarrow{\alpha}P'\quad Q\xrightarrow{\beta}Q'\quad R\xrightarrow{\gamma}R'}{P\parallel (Q\parallel R)\xrightarrow{\{\alpha,\beta,\gamma\}}P'\parallel (Q'\parallel R')}(\beta\neq\overline{\alpha},\gamma\neq\overline{\alpha},\gamma\neq\overline{\beta})
      \quad\frac{P\xrightarrow{\alpha}P'\quad Q\xrightarrow{\beta}Q'\quad R\xrightarrow{\gamma}R'}{(P\parallel Q)\parallel R\xrightarrow{\{\alpha,\beta,\gamma\}}(P'\parallel Q')\parallel R'}(\beta\neq\overline{\alpha},\gamma\neq\overline{\alpha},\gamma\neq\overline{\beta})$$

      $$\frac{P\xrightarrow{l}P'\quad Q\xrightarrow{\overline{l}}Q'\quad R\nrightarrow}{P\parallel (Q\parallel R)\xrightarrow{\tau}P'\parallel (Q'\parallel R)}
      \quad\frac{P\xrightarrow{l}P'\quad Q\xrightarrow{\overline{l}}Q'\quad R\nrightarrow}{(P\parallel Q)\parallel R\xrightarrow{\tau}(P'\parallel Q')\parallel R}$$

      $$\frac{P\xrightarrow{l}P'\quad R\xrightarrow{\overline{l}}R'\quad Q\nrightarrow}{P\parallel (Q\parallel R)\xrightarrow{\tau}P'\parallel (Q\parallel R')}
      \quad\frac{P\xrightarrow{l}P'\quad R\xrightarrow{\overline{l}}R'\quad Q\nrightarrow}{(P\parallel Q)\parallel R\xrightarrow{\tau}(P'\parallel Q)\parallel R]}$$

      $$\frac{Q\xrightarrow{l}P'\quad R\xrightarrow{\overline{l}}R'\quad P\nrightarrow}{P\parallel (Q\parallel R)\xrightarrow{\tau}P\parallel (Q'\parallel R')}
      \quad\frac{Q\xrightarrow{l}Q'\quad R\xrightarrow{\overline{l}}R'\quad P\nrightarrow}{(P\parallel Q)\parallel R\xrightarrow{\tau}(P\parallel Q')\parallel R'}$$

      $$\frac{P\xrightarrow{l}P'\quad Q\xrightarrow{\overline{l}}Q'\quad R\xrightarrow{\gamma}R'}{P\parallel (Q\parallel R)\xrightarrow{\tau,\gamma}P'\parallel (Q'\parallel R')}
      \quad\frac{P\xrightarrow{l}P'\quad Q\xrightarrow{\overline{l}}Q'\quad R\xrightarrow{\gamma}R'}{(P\parallel Q)\parallel R\xrightarrow{\tau,\gamma}(P'\parallel Q')\parallel R'}$$

      $$\frac{P\xrightarrow{l}P'\quad R\xrightarrow{\overline{l}}R'\quad Q\xrightarrow{\beta}Q'}{P\parallel (Q\parallel R)\xrightarrow{\tau,\beta}P'\parallel (Q'\parallel R')}
      \quad\frac{P\xrightarrow{l}P'\quad R\xrightarrow{\overline{l}}R'\quad Q\xrightarrow{\beta}Q'}{(P\parallel Q)\parallel R\xrightarrow{\tau,\beta}(P'\parallel Q')\parallel R]}$$

      $$\frac{Q\xrightarrow{l}Q'\quad R\xrightarrow{\overline{l}}R'\quad P\xrightarrow{\alpha}P'}{P\parallel (Q\parallel R)\xrightarrow{\tau,\alpha}P'\parallel (Q'\parallel R')}
      \quad\frac{Q\xrightarrow{l}Q'\quad R\xrightarrow{\overline{l}}R'\quad P\xrightarrow{\alpha}P'}{(P\parallel Q)\parallel R\xrightarrow{\tau,\alpha}(P'\parallel Q')\parallel R'}$$

      So, with the assumptions $P'\parallel (Q\parallel R) \sim_s (P'\parallel Q)\parallel R$, $P\parallel (Q'\parallel R) \sim_s (P\parallel Q')\parallel R$, $P\parallel (Q\parallel R') \sim_s (P\parallel Q)\parallel R'$, $P'\parallel (Q'\parallel R) \sim_s (P'\parallel Q')\parallel R$, $P'\parallel (Q\parallel R') \sim_s (P'\parallel Q)\parallel R'$, $P\parallel (Q'\parallel R') \sim_s (P\parallel Q')\parallel R'$ and $P'\parallel (Q'\parallel R') \sim_s (P'\parallel Q')\parallel R'$, $P\parallel (Q\parallel R) \sim_s (P\parallel Q)\parallel R$, as desired.
  \item $P\parallel \textbf{nil}\sim_s P$. By the transition rules $\textbf{Com}_{1,2,3,4}$ in Table \ref{TRForCTC}, we get

      $$\frac{P\xrightarrow{\alpha}P'}{P\parallel \textbf{nil}\xrightarrow{\alpha}P'} \quad \frac{P\xrightarrow{\alpha}P'}{P\xrightarrow{\alpha}P'}$$

      Since $P'\sim_s P'$, $P\parallel \textbf{nil}\sim_s P$, as desired.
  \item $P\setminus L\sim_s P$, if $\mathcal{L}(P)\cap(L\cup\overline{L})=\emptyset$. By the transition rules $\textbf{Res}_{1,2}$ in Table \ref{TRForCTC}, we get

      $$\frac{P\xrightarrow{\alpha}P'}{P\setminus L\xrightarrow{\alpha}P'\setminus L} (\mathcal{L}(P)\cap(L\cup\overline{L})=\emptyset)\quad \frac{P\xrightarrow{\alpha}P'}{P\xrightarrow{\alpha}P'}$$

      Since $P'\sim_s P'$, and with the assumption $P'\setminus L\sim_s P'$, $P\setminus L\sim_s P$, if $\mathcal{L}(P)\cap(L\cup\overline{L})=\emptyset$, as desired.
  \item $P\setminus K\setminus L\sim_s P\setminus(K\cup L)$. By the transition rules $\textbf{Res}_{1,2}$ in Table \ref{TRForCTC}, we get

      $$\frac{P\xrightarrow{\alpha}P'}{P\setminus K\setminus L\xrightarrow{\alpha}P'\setminus K\setminus L} \quad \frac{P\xrightarrow{\alpha}P'}{P\setminus (K\cup L)\xrightarrow{\alpha}P'\setminus (K\cup L)}$$

      Since $P'\sim_s P'$, and with the assumption $P'\setminus K\setminus L\sim_s P'\setminus(K\cup L)$, $P\setminus K\setminus L\sim_s P\setminus(K\cup L)$, as desired.
  \item $P[f]\setminus L\sim_s P\setminus f^{-1}(L)[f]$. By the transition rules $\textbf{Res}_{1,2}$ and $\textbf{Rel}_{1,2}$ in Table \ref{TRForCTC}, we get

      $$\frac{P\xrightarrow{\alpha}P'}{P[f]\setminus L\xrightarrow{f(\alpha)}P'[f]\setminus L}\quad \frac{P\xrightarrow{\alpha}P'}{P\setminus f^{-1}(L)[f]\xrightarrow{f(\alpha)}P'\setminus f^{-1}(L)[f]}$$

      So, with the assumption $P'[f]\setminus L\sim_s P'\setminus f^{-1}(L)[f]$, $P[f]\setminus L\sim_s P\setminus f^{-1}(L)[f]$, as desired.
  \item $(P\parallel Q)\setminus L\sim_s P\setminus L\parallel Q\setminus L$, if $\mathcal{L}(P)\cap\overline{\mathcal{L}(Q)}\cap(L\cup\overline{L})=\emptyset$. By the transition rules $\textbf{Com}_{1,2,3,4}$ and $\textbf{Res}_{1,2}$ in Table \ref{TRForCTC}, we get

      $$\frac{P\xrightarrow{\alpha}P'\quad Q\nrightarrow}{(P\parallel Q)\setminus L\xrightarrow{\alpha}(P'\parallel Q)\setminus L}(\mathcal{L}(P)\cap\overline{\mathcal{L}(Q)}\cap(L\cup\overline{L})=\emptyset)$$
      $$\frac{P\xrightarrow{\alpha}P'\quad Q\nrightarrow}{P\setminus L\parallel Q\setminus L\xrightarrow{\alpha}P'\setminus L\parallel Q\setminus L}(\mathcal{L}(P)\cap\overline{\mathcal{L}(Q)}\cap(L\cup\overline{L})=\emptyset)$$

      $$\frac{Q\xrightarrow{\beta}Q'\quad P\nrightarrow}{(P\parallel Q)\setminus L\xrightarrow{\beta}(P\parallel Q')\setminus L}(\mathcal{L}(P)\cap\overline{\mathcal{L}(Q)}\cap(L\cup\overline{L})=\emptyset)$$
      $$\frac{Q\xrightarrow{\beta}Q'\quad P\nrightarrow}{P\setminus L\parallel Q\setminus L\xrightarrow{\beta}P\setminus L\parallel Q'\setminus L}(\mathcal{L}(P)\cap\overline{\mathcal{L}(Q)}\cap(L\cup\overline{L})=\emptyset)$$

      $$\frac{P\xrightarrow{\alpha}P'\quad Q\xrightarrow{\beta}Q'}{(P\parallel Q)\setminus L\xrightarrow{\{\alpha,\beta\}}(P'\parallel Q')\setminus L}(\mathcal{L}(P)\cap\overline{\mathcal{L}(Q)}\cap(L\cup\overline{L})=\emptyset)$$
      $$\frac{P\xrightarrow{\alpha}P'\quad Q\xrightarrow{\beta}Q'}{P\setminus L\parallel Q\setminus L\xrightarrow{\{\alpha,\beta\}}(P'\parallel Q')\setminus L}(\mathcal{L}(P)\cap\overline{\mathcal{L}(Q)}\cap(L\cup\overline{L})=\emptyset)$$

      $$\frac{P\xrightarrow{l}P'\quad Q\xrightarrow{\overline{l}}Q'}{(P\parallel Q)\setminus L\xrightarrow{\tau}(P'\parallel Q')\setminus L}(\mathcal{L}(P)\cap\overline{\mathcal{L}(Q)}\cap(L\cup\overline{L})=\emptyset)$$
      $$\frac{P\xrightarrow{l}P'\quad Q\xrightarrow{\overline{l}}Q'}{(P\setminus L\parallel Q\setminus L\xrightarrow{\tau}P'\setminus L\parallel Q'\setminus L}(\mathcal{L}(P)\cap\overline{\mathcal{L}(Q)}\cap(L\cup\overline{L})=\emptyset)$$

      Since $(P'\parallel Q)\setminus L\sim_s P'\setminus L\parallel Q\setminus L$, $(P\parallel Q')\setminus L\sim_s P\setminus L\parallel Q'\setminus L$ and $(P'\parallel Q')\setminus L\sim_s P'\setminus L\parallel Q'\setminus L$, $(P\parallel Q)\setminus L\sim_s P\setminus L\parallel Q\setminus L$, if $\mathcal{L}(P)\cap\overline{\mathcal{L}(Q)}\cap(L\cup\overline{L})=\emptyset$, as desired.
  \item $P[Id]\sim_s P$. By the transition rules $\textbf{Rel}_{1,2}$ in Table \ref{TRForCTC}, we get

      $$\frac{P\xrightarrow{\alpha}P'}{P[Id]\xrightarrow{Id(\alpha)}P'[Id]}\quad \frac{P\xrightarrow{\alpha}P'}{P\xrightarrow{\alpha}P'}$$

      So, with the assumption $P'[Id]\sim_s P'$ and $Id(\alpha)=\alpha$, $P[Id]\sim_s P$, as desired.
  \item $P[f]\sim_s P[f']$, if $f\upharpoonright\mathcal{L}(P)=f'\upharpoonright\mathcal{L}(P)$. By the transition rules $\textbf{Rel}_{1,2}$ in Table \ref{TRForCTC}, we get

      $$\frac{P\xrightarrow{\alpha}P'}{P[f]\xrightarrow{f(\alpha)}P'[f]}\quad \frac{P\xrightarrow{\alpha}P'}{P[f']\xrightarrow{f'(\alpha)}P'[f']}$$

      So, with the assumption $P'[f]\sim_s P'[f']$ and $f(\alpha)=f'(\alpha)$, if $f\upharpoonright\mathcal{L}(P)=f'\upharpoonright\mathcal{L}(P)$, $P[f]\sim_s P[f']$, as desired.
  \item $P[f][f']\sim_s P[f'\circ f]$. By the transition rules $\textbf{Rel}_{1,2}$ in Table \ref{TRForCTC}, we get

      $$\frac{P\xrightarrow{\alpha}P'}{P[f][f']\xrightarrow{f'(f(\alpha))}P'[f][f']}\quad \frac{P\xrightarrow{\alpha}P'}{P[f'\circ f]\xrightarrow{f'(f(\alpha))}P'[f'\circ f]}$$

      So, with the assumption $P'[f][f']\sim_s P'[f'\circ f]$, $P[f][f']\sim_s P[f'\circ f]$, as desired.
  \item $(P\parallel Q)[f]\sim_s P[f]\parallel Q[f]$, if $f\upharpoonright(L\cup\overline{L})$ is one-to-one, where $L=\mathcal{L}(P)\cup\mathcal{L}(Q)$. By the transition rules $\textbf{Com}_{1,2,3,4}$ and $\textbf{Rel}_{1,2}$ in Table \ref{TRForCTC}, we get

      $$\frac{P\xrightarrow{\alpha}P'\quad Q\nrightarrow}{(P\parallel Q)[f]\xrightarrow{f(\alpha)}(P'\parallel Q)[f]}(\textrm{if } f\upharpoonright(L\cup\overline{L}) \textrm{ is one-to-one, where }L=\mathcal{L}(P)\cup\mathcal{L}(Q))$$
      $$\frac{P\xrightarrow{\alpha}P'\quad Q\nrightarrow}{P[f]\parallel Q[f]\xrightarrow{f(\alpha)}P'[f]\parallel Q[f]}(\textrm{if } f\upharpoonright(L\cup\overline{L}) \textrm{ is one-to-one, where }L=\mathcal{L}(P)\cup\mathcal{L}(Q))$$

      $$\frac{Q\xrightarrow{\beta}Q'\quad P\nrightarrow}{(P\parallel Q)[f]\xrightarrow{f(\beta)}(P\parallel Q')[f]}(\textrm{if } f\upharpoonright(L\cup\overline{L}) \textrm{ is one-to-one, where }L=\mathcal{L}(P)\cup\mathcal{L}(Q))$$
      $$\frac{Q\xrightarrow{\beta}Q'\quad P\nrightarrow}{P[f]\parallel Q[f]\xrightarrow{f(\beta)}P[f]\parallel Q'[f]}(\textrm{if } f\upharpoonright(L\cup\overline{L}) \textrm{ is one-to-one, where }L=\mathcal{L}(P)\cup\mathcal{L}(Q))$$

      $$\frac{P\xrightarrow{\alpha}P'\quad Q\xrightarrow{\beta}Q'}{(P\parallel Q)[f]\xrightarrow{\{f(\alpha),f(\beta)\}}(P'\parallel Q')[f]}(\textrm{if } f\upharpoonright(L\cup\overline{L}) \textrm{ is one-to-one, where }L=\mathcal{L}(P)\cup\mathcal{L}(Q))$$
      $$\frac{P\xrightarrow{\alpha}P'\quad Q\xrightarrow{\beta}Q'}{P[f]\parallel Q[f]\xrightarrow{\{f(\alpha),f(\beta)\}}P'[f]\parallel Q'[f]}(\textrm{if } f\upharpoonright(L\cup\overline{L}) \textrm{ is one-to-one, where }L=\mathcal{L}(P)\cup\mathcal{L}(Q))$$

      $$\frac{P\xrightarrow{l}P'\quad Q\xrightarrow{\overline{l}}Q'}{(P\parallel Q)[f]\xrightarrow{\tau}(P'\parallel Q')[f]}(\textrm{if } f\upharpoonright(L\cup\overline{L}) \textrm{ is one-to-one, where }L=\mathcal{L}(P)\cup\mathcal{L}(Q))$$
      $$\frac{P\xrightarrow{l}P'\quad Q\xrightarrow{\overline{l}}Q'}{(P[f]\parallel Q[f]\xrightarrow{\tau}P'[f]\parallel Q'[f]}(\textrm{if } f\upharpoonright(L\cup\overline{L}) \textrm{ is one-to-one, where }L=\mathcal{L}(P)\cup\mathcal{L}(Q))$$

      So, with the assumptions $(P'\parallel Q)[f]\sim_s P'[f]\parallel Q[f]$, $(P\parallel Q')[f]\sim_s P[f]\parallel Q'[f]$ and $(P'\parallel Q')[f]\sim_s P'[f]\parallel Q'[f]$, $(P\parallel Q)[f]\sim_s P[f]\parallel Q[f]$, if $f\upharpoonright(L\cup\overline{L})$ is one-to-one, where $L=\mathcal{L}(P)\cup\mathcal{L}(Q)$, as desired.
\end{enumerate}
\end{proof}

\begin{proposition}[Static laws for strong pomset bisimulation] \label{SLSPB}
The static laws for strong pomset bisimulation are as follows.
\begin{enumerate}
  \item $P\parallel Q\sim_p Q\parallel P$;
  \item $P\parallel(Q\parallel R)\sim_p (P\parallel Q)\parallel R$;
  \item $P\parallel \textbf{nil}\sim_p P$;
  \item $P\setminus L\sim_p P$, if $\mathcal{L}(P)\cap(L\cup\overline{L})=\emptyset$;
  \item $P\setminus K\setminus L\sim_p P\setminus(K\cup L)$;
  \item $P[f]\setminus L\sim_p P\setminus f^{-1}(L)[f]$;
  \item $(P\parallel Q)\setminus L\sim_p P\setminus L\parallel Q\setminus L$, if $\mathcal{L}(P)\cap\overline{\mathcal{L}(Q)}\cap(L\cup\overline{L})=\emptyset$;
  \item $P[Id]\sim_p P$;
  \item $P[f]\sim_p P[f']$, if $f\upharpoonright\mathcal{L}(P)=f'\upharpoonright\mathcal{L}(P)$;
  \item $P[f][f']\sim_p P[f'\circ f]$;
  \item $(P\parallel Q)[f]\sim_p P[f]\parallel Q[f]$, if $f\upharpoonright(L\cup\overline{L})$ is one-to-one, where $L=\mathcal{L}(P)\cup\mathcal{L}(Q)$.
\end{enumerate}
\end{proposition}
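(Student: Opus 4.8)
The plan is to mimic the proof of Proposition \ref{SLSSB} almost verbatim, replacing strong step bisimulation $\sim_s$ by strong pomset bisimulation $\sim_p$ throughout, and replacing step transitions by pomset transitions. First I would observe, exactly as in the step case, that although the transition rules in Table \ref{TRForCTC} are phrased for single events, they lift to pomset transitions $P\xrightarrow{X}P'$ where $X$ is a pomset: the Prefix rules contribute the causal order inherited from nested prefixes, the Summation rules pass a pomset transition of a summand unchanged, the Composition rules $\textbf{Com}_{1,2,3,4}$ combine a pomset $X$ from $P$ and a pomset $Y$ from $Q$ into $X\uplus Y$ with the two components left causally incomparable (and in the $\textbf{Com}_4$ case the matched pair $l,\overline{l}$ collapsed to a single $\tau$), and the Restriction and Relabelling rules act on the underlying action labels while preserving the order. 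A single event is the one-element pomset, so the step-level argument is a special case; the only difference from Proposition \ref{SLSSB} is that one drops the ``pairwise concurrent'' requirement on the transition labels.

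Then, for each of the eleven laws I would build the obvious candidate posetal-free relation dictated by the law --- for (1) the relation $\{(P\parallel Q,\,Q\parallel P):P,Q\in\mathcal P\}$ closed under derivatives, for (2) the reassociating relation, for (4)--(11) the corresponding syntactic rewriting relations --- and check the transfer property in both directions. Concretely, whenever the left-hand side performs $\xrightarrow{X_1}$, I identify the matching move $\xrightarrow{X_2}$ of the right-hand side and exhibit a pomset isomorphism $X_1\sim X_2$; in every case $X_2$ is literally the same pomset (commutativity/associativity/neutral laws, $P[Id]$, $P\setminus K\setminus L$ versus $P\setminus(K\cup L)$, etc.) or its image under a relabelling (the laws involving $[f]$), and the target pair again lies in the relation. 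The side conditions ($\mathcal L(P)\cap(L\cup\overline L)=\emptyset$, $f\upharpoonright(L\cup\overline L)$ one-to-one, and so on) are exactly what is needed so that the two sides enable the same pomset transitions: in (7) and (11) they rule out a synchronisation or a name clash that would be possible on one side but not the other, and in (6) they make $f^{-1}(L)$ behave correctly under the relabelling.

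The main obstacle, as in the step case, is the parallel-composition laws (1), (2), (7) and (11): there one must carefully track how a pomset transition of a composite process decomposes into pomset transitions of its parts, verify that commuting or reassociating $\parallel$, or pushing restriction/relabelling inside it, yields exactly the same decomposition up to pomset isomorphism, and confirm that no $\textbf{Com}_4$ synchronisations are created or destroyed. Once the bookkeeping for $\textbf{Com}_{1,2,3,4}$ under pomset transitions is set up, the remaining laws (3), (5), (6), (8)--(10) are routine and essentially identical to their $\sim_s$ counterparts. I would therefore present the proof by noting that it proceeds exactly as for Proposition \ref{SLSSB} with ``step'' replaced by ``pomset'' and the pairwise-concurrency conditions removed, and spell out in detail only the $\textbf{Com}$-based cases.
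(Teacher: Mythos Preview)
Your proposal is correct and lands on the same conclusion as the paper, but the organization differs slightly. You propose to lift the transition rules of Table~\ref{TRForCTC} directly to arbitrary pomset transitions and then rerun the entire bisimulation argument of Proposition~\ref{SLSSB} at the pomset level, building the obvious relations and checking transfer for each law. The paper instead argues by decomposition: since Proposition~\ref{SLSSB} already handles the case where all events in the label are pairwise concurrent (the step case), it remains only to treat events related by causality, and a causal pomset such as $p=\{\alpha,\beta:\alpha.\beta\}$ is observed to factor as a sequence of single-event transitions, $\xrightarrow{p}=\xrightarrow{\alpha}\xrightarrow{\beta}$, which again reduces to the analysis already carried out. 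Your route is more self-contained and makes the pomset-level bookkeeping explicit; the paper's route is shorter because it reuses Proposition~\ref{SLSSB} wholesale rather than reproving it, at the cost of leaving the ``mixed'' pomsets (those with both concurrent and causally ordered events) to be handled implicitly by combining the two reductions.
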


\begin{proof}
From the definition of strong pomset bisimulation (see Definition \ref{PSB}), we know that strong pomset bisimulation is defined by pomset transitions, which are labeled by pomsets. In a pomset transition, the events in the pomset are either within causality relations (defined by the prefix $.$) or in concurrency (implicitly defined by $.$ and $+$, and explicitly defined by $\parallel$), of course, they are pairwise consistent (without conflicts). In Proposition \ref{SLSSB}, we have already proven the case that all events are pairwise concurrent, so, we only need to prove the case of events in causality. Without loss of generality, we take a pomset of $p=\{\alpha,\beta:\alpha.\beta\}$. Then the pomset transition labeled by the above $p$ is just composed of one single event transition labeled by $\alpha$ succeeded by another single event transition labeled by $\beta$, that is, $\xrightarrow{p}=\xrightarrow{\alpha}\xrightarrow{\beta}$.

Similarly to the proof of static laws for strong step bisimulation (see Proposition \ref{SLSSB}), we can prove that the static laws hold for strong pomset bisimulation, we omit them.
\end{proof}

\begin{proposition}[Static laws for strong hp-bisimulation] \label{SLSHPB}
The static laws for strong hp-bisimulation are as follows.
\begin{enumerate}
  \item $P\parallel Q\sim_{hp} Q\parallel P$;
  \item $P\parallel(Q\parallel R)\sim_{hp} (P\parallel Q)\parallel R$;
  \item $P\parallel \textbf{nil}\sim_{hp} P$;
  \item $P\setminus L\sim_{hp} P$, if $\mathcal{L}(P)\cap(L\cup\overline{L})=\emptyset$;
  \item $P\setminus K\setminus L\sim_{hp} P\setminus(K\cup L)$;
  \item $P[f]\setminus L\sim_{hp} P\setminus f^{-1}(L)[f]$;
  \item $(P\parallel Q)\setminus L\sim_{hp} P\setminus L\parallel Q\setminus L$, if $\mathcal{L}(P)\cap\overline{\mathcal{L}(Q)}\cap(L\cup\overline{L})=\emptyset$;
  \item $P[Id]\sim_{hp} P$;
  \item $P[f]\sim_{hp} P[f']$, if $f\upharpoonright\mathcal{L}(P)=f'\upharpoonright\mathcal{L}(P)$;
  \item $P[f][f']\sim_{hp} P[f'\circ f]$;
  \item $(P\parallel Q)[f]\sim_{hp} P[f]\parallel Q[f]$, if $f\upharpoonright(L\cup\overline{L})$ is one-to-one, where $L=\mathcal{L}(P)\cup\mathcal{L}(Q)$.
\end{enumerate}
\end{proposition}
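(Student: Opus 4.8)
The plan is to proceed exactly as in the proof of Proposition~\ref{SLSSB} for strong step bisimulation, the only new ingredient being that a strong hp-bisimulation is a \emph{posetal} relation, so alongside each pair of corresponding configurations I must also carry the order-isomorphism $f$ between them and check it is preserved by every move. For each of the eleven laws I would exhibit an explicit posetal relation $R\subseteq\mathcal{C}(\mathcal{E}_1)\overline{\times}\mathcal{C}(\mathcal{E}_2)$ with $(\emptyset,\emptyset,\emptyset)\in R$ and verify it is a strong hp-bisimulation. In every case the two sides of the law are built over the \emph{same} underlying events — the composition laws merely re-associate or commute parallel components, and the restriction and relabelling laws only discard or rename labels — so the canonical choice of $R$ is the collection of triples $(C_1,f,C_2)$ where $C_1,C_2$ range over the matching reachable configurations of the two processes and $f$ is the identity on events (for laws~1--5 and~7) or the identity precomposed with the evident bijection (for the relabelling laws~6 and~8--11). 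That $f$ is a configuration isomorphism follows because causality $\leq$ and conflict $\sharp$ are inherited unchanged on both sides.

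For the transfer property I would argue case by case on the transition rule that fires, just as in Proposition~\ref{SLSSB}. Given $(C_1,f,C_2)\in R$ and a single-event move $C(P_1)\xrightarrow{e_1}C(P_1')$, the rules $\textbf{Com}_{1,2,3,4}$, $\textbf{Res}_{1,2}$ and $\textbf{Rel}_{1,2}$ (whichever is relevant to the law in hand) immediately supply the matching move $C(P_2)\xrightarrow{e_2}C(P_2')$ — with $e_2=e_1$ for the composition, restriction and $\textbf{nil}$ laws and $e_2=f(e_1)$ for the relabelling laws — and one checks that $(C(P_1'),f[e_1\mapsto e_2],C(P_2'))$ again lies in $R$, i.e.\ that the updated isomorphism is still the identity-like map required. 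Because hp-bisimulation is phrased in terms of single events rather than pomsets or steps, none of the ``step versus pomset'' bookkeeping of Propositions~\ref{SLSSB} and~\ref{SLSPB} is needed here; the computation is actually lighter, and as in those proofs I would display the paired transition rules for the representative cases and omit the remaining symmetric ones.

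The genuinely delicate points are laws~7 and~11, the distribution of restriction and of relabelling over $\parallel$. There the side condition $\mathcal{L}(P)\cap\overline{\mathcal{L}(Q)}\cap(L\cup\overline{L})=\emptyset$ in law~7 is precisely what guarantees that a synchronisation admissible via $\textbf{Com}_4$ on one side is admissible on the other and yields the same $\tau$-event, so the two transition systems stay in lockstep; and the hypothesis that $f\!\upharpoonright(L\cup\overline{L})$ be one-to-one in law~11 is exactly what prevents the relabelling from collapsing two distinct events, which is needed both so that $\textbf{Com}_3/\textbf{Com}_4$ matches up step for step and so that the induced map on configurations remains injective — hence still an isomorphism and $R$ a well-defined posetal relation. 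I would make this explicit by comparing the premises of $\textbf{Com}_{3,4}$ on $(P\parallel Q)\setminus L$ with those on $P\setminus L\parallel Q\setminus L$ (respectively on $(P\parallel Q)[f]$ with those on $P[f]\parallel Q[f]$) under the stated hypotheses. Everything else goes through mechanically, and I would present all the cases in the compressed ``paired-rules-then-conclude'' style used throughout this section.
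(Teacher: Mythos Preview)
Your proposal is correct and follows essentially the same approach as the paper: reduce to the step-bisimulation argument of Proposition~\ref{SLSSB} (via Proposition~\ref{SLSPB}) and additionally track the isomorphism $f$ on the posetal product, updating it by $f'=f[\alpha\mapsto\alpha]$ at each single-event move. In fact your outline is considerably more explicit than the paper's, which simply states the hp-bisimulation condition and then writes ``we just need additionally to check the above conditions on hp-bisimulation, we omit them''; your discussion of why the side conditions in laws~7 and~11 are needed for the posetal relation to remain well-defined goes beyond what the paper supplies.
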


\begin{proof}
From the definition of strong hp-bisimulation (see Definition \ref{HHPB}), we know that strong hp-bisimulation is defined on the posetal product $(C_1,f,C_2),f:C_1\rightarrow C_2\textrm{ isomorphism}$. Two processes $P$ related to $C_1$ and $Q$ related to $C_2$, and $f:C_1\rightarrow C_2\textrm{ isomorphism}$. Initially, $(C_1,f,C_2)=(\emptyset,\emptyset,\emptyset)$, and $(\emptyset,\emptyset,\emptyset)\in\sim_{hp}$. When $P\xrightarrow{\alpha}P'$ ($C_1\xrightarrow{\alpha}C_1'$), there will be $Q\xrightarrow{\alpha}Q'$ ($C_2\xrightarrow{\alpha}C_2'$), and we define $f'=f[\alpha\mapsto \alpha]$. Then, if $(C_1,f,C_2)\in\sim_{hp}$, then $(C_1',f',C_2')\in\sim_{hp}$.

Similarly to the proof of static laws for strong pomset bisimulation (see Proposition \ref{SLSPB}), we can prove that static laws hold for strong hp-bisimulation, we just need additionally to check the above conditions on hp-bisimulation, we omit them.
\end{proof}

\begin{proposition}[Static laws for strongly hhp-bisimulation] \label{SLSHHPB}
The static laws for strongly hhp-bisimulation are as follows.
\begin{enumerate}
  \item $P\parallel Q\sim_{hhp} Q\parallel P$;
  \item $P\parallel(Q\parallel R)\sim_{hhp} (P\parallel Q)\parallel R$;
  \item $P\parallel \textbf{nil}\sim_{hhp} P$;
  \item $P\setminus L\sim_{hhp} P$, if $\mathcal{L}(P)\cap(L\cup\overline{L})=\emptyset$;
  \item $P\setminus K\setminus L\sim_{hhp} P\setminus(K\cup L)$;
  \item $P[f]\setminus L\sim_{hhp} P\setminus f^{-1}(L)[f]$;
  \item $(P\parallel Q)\setminus L\sim_{hhp} P\setminus L\parallel Q\setminus L$, if $\mathcal{L}(P)\cap\overline{\mathcal{L}(Q)}\cap(L\cup\overline{L})=\emptyset$;
  \item $P[Id]\sim_{hhp} P$;
  \item $P[f]\sim_{hhp} P[f']$, if $f\upharpoonright\mathcal{L}(P)=f'\upharpoonright\mathcal{L}(P)$;
  \item $P[f][f']\sim_{hhp} P[f'\circ f]$;
  \item $(P\parallel Q)[f]\sim_{hhp} P[f]\parallel Q[f]$, if $f\upharpoonright(L\cup\overline{L})$ is one-to-one, where $L=\mathcal{L}(P)\cup\mathcal{L}(Q)$.
\end{enumerate}
\end{proposition}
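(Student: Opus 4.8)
The plan is to mirror the proof of Proposition \ref{SLSHPB}, exploiting the fact that, by Definition \ref{HHPB}, a strongly hhp-bisimulation is precisely a downward closed strong hp-bisimulation. So for each of the eleven laws I would take the same witnessing posetal relation used in the hp-case --- namely the set $R$ of all triples $(C(P_1), f, C(P_2))$ where $(P_1, P_2)$ ranges over the pairs of processes reachable in lock-step from the two sides of the law (closed under the instances supplied by the inductive hypotheses for the smaller processes), and $f$ is the canonical event-isomorphism --- verify that it is a strong hp-bisimulation by the same case analysis on $\textbf{Com}_{1,2,3,4}$, $\textbf{Res}_{1,2}$, $\textbf{Rel}_{1,2}$ of Table \ref{TRForCTC} already carried out in Propositions \ref{SLSSB}, \ref{SLSPB} and \ref{SLSHPB}, and then additionally check that $R$ is downward closed.

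The key observation making the hp-part go through is that the two sides of every static law execute the same events, with the same causality and the same concurrency, differing only in bracketing, in the order of parallel components, or in the placement of $\setminus L$ and $[f]$; hence the correspondence $f$ is the identity on events (for laws 6 and 8--11 it is instead the relabelling-induced bijection), and the posetal structures on $C(P_1)$ and $C(P_2)$ agree automatically. For downward closure, given $(C_1, f, C_2) \subseteq (C_1', f', C_2')$ pointwise with $(C_1', f', C_2') \in R$, one argues that passing to a pointwise-smaller triple corresponds on both sides of the law to undoing the same set of transitions, so $(C_1, C_2)$ is again of the prescribed shape and $f = f' \upharpoonright C_1$ is still the canonical isomorphism; thus $(C_1, f, C_2) \in R$, as required. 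As in Proposition \ref{SLSPB}, causal (non-step) pomset-style transitions need no new work, since a causally ordered label $\{\alpha, \beta : \alpha.\beta\}$ is decomposed into $\xrightarrow{\alpha}\xrightarrow{\beta}$.

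I expect the only real difficulty to lie with the parallel-composition laws (1, 2, 7 and 11): there a sub-configuration of $C(P \parallel Q)$ need not have the form $C(P') \parallel C(Q')$ but is an arbitrary conflict-free down-closed subset, so one must track its projections onto the two components separately and invoke downward closure of the component relations --- available because $\sim_{hhp}$ is itself defined through downward closed relations --- in order to conclude downward closure of the composite relation. The monoid laws and the pure relabelling/restriction laws are routine once the hp-case and this projection bookkeeping are in place, so, similarly to the proof of Proposition \ref{SLSHPB}, I would only spell out the additional downward-closure checks and omit the repeated transition-rule derivations.
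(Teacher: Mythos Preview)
Your proposal is correct and follows essentially the same approach as the paper: both invoke Definition \ref{HHPB} to reduce the hhp-case to the hp-case plus an additional downward-closure check on the witnessing posetal relations, and both defer to Proposition \ref{SLSHPB} for the transition-rule analysis. The paper's own proof is in fact far terser than yours --- it simply states that strongly hhp-bisimulation is downward closed for strong hp-bisimulation, refers back to Proposition \ref{SLSHPB}, and omits all details --- so your elaboration of how downward closure would be verified (and your flagging of the parallel-composition laws as the nontrivial cases) goes beyond what the paper actually provides.
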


\begin{proof}
From the definition of strongly hhp-bisimulation (see Definition \ref{HHPB}), we know that strongly hhp-bisimulation is downward closed for strong hp-bisimulation.

Similarly to the proof of static laws for strong hp-bisimulation (see Proposition \ref{SLSHPB}), we can prove that static laws hold for strongly hhp-bisimulation, that is, they are downward closed for strong hp-bisimulation, we omit them.
\end{proof}

\begin{proposition}[Milner's expansion law for strongly truly concurrent bisimulations]
Milner's expansion law does not hold any more for any strongly truly concurrent bisimulation, that is,

\begin{enumerate}
  \item $\alpha\parallel\beta\nsim_p \alpha.\beta+\beta.\alpha$;
  \item $\alpha\parallel\beta\nsim_s \alpha.\beta+\beta.\alpha$;
  \item $\alpha\parallel\beta\nsim_{hp} \alpha.\beta+\beta.\alpha$;
  \item $\alpha\parallel\beta\nsim_{hhp} \alpha.\beta+\beta.\alpha$.
\end{enumerate}
\end{proposition}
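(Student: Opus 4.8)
The plan is to exhibit, for each of the four equivalences, a concrete transition of $\alpha\parallel\beta$ that cannot be matched by $\alpha.\beta+\beta.\alpha$ up to the relevant notion, and conversely. The key observation is that by rule $\textbf{Com}_3$ the process $\alpha\parallel\beta$ admits the step transition $\alpha\parallel\beta\xrightarrow{\{\alpha,\beta\}}\textbf{nil}\parallel\textbf{nil}$, whereas the only transitions of $\alpha.\beta+\beta.\alpha$ are the single-event transitions $\xrightarrow{\alpha}$ (via $\textbf{Sum}_1$ and $\textbf{Act}_1$, reaching $\beta.\textbf{nil}$) and $\xrightarrow{\beta}$ (reaching $\alpha.\textbf{nil}$); there is no rule producing a transition labelled by the set $\{\alpha,\beta\}$ from a summation whose summands are sequential prefixes. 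So the plan is: first unfold the operational semantics of both sides explicitly, listing all outgoing transitions.

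For strong step bisimulation, since $\{\alpha,\beta\}$ is a set of two concurrent events, $\alpha\parallel\beta\xrightarrow{\{\alpha,\beta\}}$ is a genuine step transition, and $\alpha.\beta+\beta.\alpha$ has no step transition of size two (its configuration graph only ever adds one event at a time), so no step bisimulation can relate the initial configurations; this gives $\alpha\parallel\beta\nsim_s\alpha.\beta+\beta.\alpha$. For strong pomset bisimulation the same step transition is in particular a pomset transition, and again it cannot be matched, so $\alpha\parallel\beta\nsim_p\alpha.\beta+\beta.\alpha$ — here I would just note that step bisimilarity is implied by pomset bisimilarity (or argue directly). For strong hp-bisimulation and strongly hhp-bisimulation, which are defined event-wise, I would instead compare the orderings on reached configurations: after performing $\alpha$ and then $\beta$, the configuration of $\alpha\parallel\beta$ has $\alpha$ and $\beta$ concurrent (by rules $\textbf{Com}_{1,2}$ the two events are added with no causal dependency), while in $\alpha.\beta+\beta.\alpha$ the event $\beta$ performed after the summand $\alpha.\beta$ was chosen is causally dependent on $\alpha$; hence no isomorphism $f$ between these configurations exists as labelled partial orders, so $(C_1,f,C_2)$ cannot lie in any hp-bisimulation, and a fortiori not in any hhp-bisimulation. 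This yields $\alpha\parallel\beta\nsim_{hp}\alpha.\beta+\beta.\alpha$ and $\alpha\parallel\beta\nsim_{hhp}\alpha.\beta+\beta.\alpha$.

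The order of the four cases is: do step first (cleanest, uses $\textbf{Com}_3$ directly), then pomset (immediate consequence or identical argument), then hp, then hhp (which follows since hhp refines hp, so a failure of hp-bisimilarity already forces a failure of hhp-bisimilarity — strictly one only needs the hp case and then invoke Definition \ref{HHPB}). I expect the main obstacle to be making the hp/hhp argument precise rather than hand-wavy: one must carefully track, through the transition rules, that the resulting configurations carry the stated causal orders, and then argue that an order-isomorphism between a two-element antichain and a two-element chain cannot exist — this is where the genuine "true concurrency" content lies, whereas the step/pomset cases are essentially a counting argument on the size of transition labels. A secondary subtlety worth flagging: one should observe $\alpha\parallel\beta$ can still do $\xrightarrow{\alpha}\xrightarrow{\beta}$ and $\xrightarrow{\beta}\xrightarrow{\alpha}$ individually, so the failure is genuinely about the extra concurrent step/the causal structure, not about reachable states differing.
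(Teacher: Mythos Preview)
Your proposal is correct. For the step and pomset cases it is exactly the paper's argument: the paper observes that by $\textbf{Com}_3$ (and $\textbf{Act}_{1,2}$, $\textbf{Sum}_{1,2}$) one has $\alpha\parallel\beta\xrightarrow{\{\alpha,\beta\}}\textbf{nil}$ while $\alpha.\beta+\beta.\alpha\nrightarrow^{\{\alpha,\beta\}}$, and concludes. Where you differ is in the hp/hhp cases: the paper does not give a separate argument there, relying on the same unmatched $\{\alpha,\beta\}$-transition together with an informal remark that the two processes ``have different causality structure.'' You instead spell out the event-wise argument appropriate to Definition~\ref{HHPB}: after $\alpha$ then $\beta$, the configuration on the $\parallel$ side is a two-element antichain while on the $+$ side it is a two-element chain, and no labelled order-isomorphism exists between them, so no triple $(C_1,f,C_2)$ can persist in a hp-bisimulation. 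This is the right level of care, since hp/hhp-bisimulation is defined via single-event transitions and poset isomorphisms rather than multi-event labels, so the step-transition counterexample does not literally apply to those two equivalences without an additional observation (either yours, or the general fact that hp-bisimilarity implies step bisimilarity). Your treatment is thus more complete than the paper's on items (3) and (4), while agreeing on the essential counterexample.
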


\begin{proof}
In nature, it is caused by $\alpha\parallel \beta$ and $\alpha.\beta + \beta.\alpha$ having different causality structure. By the transition rules for $\textbf{Com}_{1,2,3,4}$, $\textbf{Sum}_{1,2}$ and $\textbf{Act}_{1,2}$, we have

$$\alpha\parallel \beta\xrightarrow{\{\alpha,\beta\}}\textbf{nil}$$

while

$$\alpha.\beta+ \beta.\alpha\nrightarrow^{\{\alpha,\beta\}}.$$
\end{proof}

\begin{proposition}[New expansion law for strong step bisimulation]\label{NELSSB}
Let $P\equiv (P_1[f_1]\parallel\cdots\parallel P_n[f_n])\setminus L$, with $n\geq 1$. Then

\begin{eqnarray}
P\sim_s \{(f_1(\alpha_1)\parallel\cdots\parallel f_n(\alpha_n)).(P_1'[f_1]\parallel\cdots\parallel P_n'[f_n])\setminus L: \nonumber\\
P_i\xrightarrow{\alpha_i}P_i',i\in\{1,\cdots,n\},f_i(\alpha_i)\notin L\cup\overline{L}\} \nonumber\\
+\sum\{\tau.(P_1[f_1]\parallel\cdots\parallel P_i'[f_i]\parallel\cdots\parallel P_j'[f_j]\parallel\cdots\parallel P_n[f_n])\setminus L: \nonumber\\
P_i\xrightarrow{l_1}P_i',P_j\xrightarrow{l_2}P_j',f_i(l_1)=\overline{f_j(l_2)},i<j\} \nonumber
\end{eqnarray}
\end{proposition}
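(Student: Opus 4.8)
The plan is to build an explicit strong step bisimulation relation $R$ witnessing the equivalence $P \sim_s Q$, where $Q$ denotes the right-hand side of the stated equation. The natural candidate is $R = \{(P, Q)\} \cup Id$, or more precisely the smallest relation containing the pair $(P,Q)$ and closed under the obvious residuals; since the residuals on both sides are syntactically identical (both have the shape $(P_1'[f_1]\parallel\cdots\parallel P_n'[f_n])\setminus L$ after one step), adding the identity relation $Id_{\mathcal P}$ suffices. The work is then entirely in the transfer property: every step move of $P$ must be matched by $Q$ with the same (step-)label, and conversely.

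First I would analyse the step transitions of the left-hand side $P \equiv (P_1[f_1]\parallel\cdots\parallel P_n[f_n])\setminus L$ using the transition rules of Table~\ref{TRForCTC}, read in their step form as licensed by the preamble to Proposition~\ref{SLSSB}. A step transition of a parallel composition $P_1[f_1]\parallel\cdots\parallel P_n[f_n]$ is obtained by letting some subset of the components fire concurrently via (the step versions of) $\textbf{Com}_{1,2,3}$, together with possibly some synchronised $\tau$-pairs via $\textbf{Com}_4$; then $\textbf{Res}_{1,2}$ prunes those moves whose labels meet $L\cup\overline L$, and $\textbf{Rel}_{1,2}$ relabels each component action $\alpha_i$ to $f_i(\alpha_i)$. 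Because the events in a step are pairwise concurrent, no two of them can come from the same component and no causal constraint links them, so a step of $P$ decomposes uniquely into (a) a set of "solo" visible contributions $f_i(\alpha_i)$ with $P_i\xrightarrow{\alpha_i}P_i'$ and $f_i(\alpha_i)\notin L\cup\overline L$, and (b) possibly synchronisations; however, for it to be a \emph{step} the synchronisation $\tau$'s cannot be combined with visible actions into one multiset transition unless they are genuinely concurrent — here one matches each such transition against the corresponding summand of $Q$. Conversely, each summand of $Q$ begins with exactly such a prefix (a single visible step $\{f_1(\alpha_1),\dots,f_n(\alpha_n)\}$, or a $\tau$) followed by the matching residual, so by $\textbf{Act}_2$/$\textbf{Sum}_2$ (resp.\ $\textbf{Act}_1$/$\textbf{Sum}_1$) $Q$ makes exactly the corresponding move. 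I would then invoke the static laws of Proposition~\ref{SLSSB} — commutativity and associativity of $\parallel$, distribution of $\setminus L$ and $[f]$ over $\parallel$ — to argue that the residual reached from $P$ and the residual named in the chosen summand of $Q$ are in fact the \emph{same} process up to $\sim_s$, which is why closing $R$ under $Id$ (together with Proposition~\ref{SLSSB}) is enough.

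The main obstacle, and the step deserving the most care, is the bookkeeping of \emph{which} multisets of concurrent events are admissible as a single step and the precise handling of the side conditions: the constraint $f_i(\alpha_i)\notin L\cup\overline L$ from $\textbf{Res}$, the constraint that paired labels satisfy $f_i(l_1)=\overline{f_j(l_2)}$ with $i<j$ (avoiding double counting of synchronisation pairs), and the requirement $\beta\neq\overline\alpha$ in $\textbf{Com}_3$. One must check that no step of $P$ is missed (completeness of the enumeration on the right) and that $Q$ introduces no spurious step (soundness) — in particular that the prefixes appearing as distinct summands are exactly the labels of the distinct one-step moves of $P$, with multiplicities handled by the idempotence law $P+P\sim_s P$ already available. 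Finally I would remark, as in Proposition~\ref{SLSPB}, that only the "pairwise concurrent" case is genuinely new relative to the CCS expansion law, since a step never contains two causally related events; associativity/commutativity let us treat the $n$-ary parallel composition without loss of generality as built up from the binary operator, so the general $n$ follows from the structural rules by a routine induction which I would omit.
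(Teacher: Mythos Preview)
Your proposal is correct in spirit, but it follows a genuinely different route from the paper's proof.

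The paper does \emph{not} construct the bisimulation $\{(P,Q)\}\cup Id$ directly. Instead it (i) first strips away Restriction and Relabelling, reducing to the bare case $P\equiv P_1\parallel\cdots\parallel P_n$; (ii) proves that bare case by an explicit induction on $n$, the inductive step being $R\equiv P\parallel P_{n+1}$ and an appeal to the binary rules $\textbf{Com}_{1,2,3,4}$ to rewrite $R$ into the required sum-of-prefixes shape, followed by substituting the induction hypothesis for $P$ and regrouping the summands; and (iii) only afterwards reinstates the $[f_i]$'s and $\setminus L$ ``easily''. There is no explicit transfer argument at the level of the full $n$-ary composition, and the static laws of Proposition~\ref{SLSSB} are not invoked.

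Your approach---enumerate all step moves of the $n$-ary composition in one go, match each against a summand of $Q$, and close up with $Id$---is the more conceptual one and makes the role of the side conditions ($f_i(\alpha_i)\notin L\cup\overline L$, $f_i(l_1)=\overline{f_j(l_2)}$, $\beta\neq\overline\alpha$) transparent; the price is exactly the bookkeeping you flag as the ``main obstacle''. The paper's induction on $n$ avoids that global bookkeeping by delegating it to the binary $\textbf{Com}$ rules at each stage, at the cost of a somewhat opaque regrouping of summands in the inductive step. Either argument establishes the result; yours is closer to how one would actually verify the bisimulation by hand, the paper's is closer to the classical CCS presentation.
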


\begin{proof}
Though transition rules in Table \ref{TRForCTC} are defined in the flavor of single event, they can be modified into a step (a set of events within which each event is pairwise concurrent), we omit them. If we treat a single event as a step containing just one event, the proof of the new expansion law has not any problem, so we use this way and still use the transition rules in Table \ref{TRForCTC}.

Firstly, we consider the case without Restriction and Relabeling. That is, we suffice to prove the following case by induction on the size $n$.

For $P\equiv P_1\parallel\cdots\parallel P_n$, with $n\geq 1$, we need to prove

\begin{eqnarray}
P\sim_s \{(\alpha_1\parallel\cdots\parallel \alpha_n).(P_1'\parallel\cdots\parallel P_n'): P_i\xrightarrow{\alpha_i}P_i',i\in\{1,\cdots,n\}\nonumber\\
+\sum\{\tau.(P_1\parallel\cdots\parallel P_i'\parallel\cdots\parallel P_j'\parallel\cdots\parallel P_n): P_i\xrightarrow{l}P_i',P_j\xrightarrow{\overline{l}}P_j',i<j\} \nonumber
\end{eqnarray}

For $n=1$, $P_1\sim_s \alpha_1.P_1':P_1\xrightarrow{\alpha_1}P_1'$ is obvious. Then with a hypothesis $n$, we consider $R\equiv P\parallel P_{n+1}$. By the transition rules $\textbf{Com}_{1,2,3,4}$, we can get

\begin{eqnarray}
R\sim_s \{(p\parallel \alpha_{n+1}).(P'\parallel P_{n+1}'): P\xrightarrow{p}P',P_{n+1}\xrightarrow{\alpha_{n+1}}P_{n+1}',p\subseteq P\}\nonumber\\
+\sum\{\tau.(P'\parallel P_{n+1}'): P\xrightarrow{l}P',P_{n+1}\xrightarrow{\overline{l}}P_{n+1}'\} \nonumber
\end{eqnarray}

Now with the induction assumption $P\equiv P_1\parallel\cdots\parallel P_n$, the right-hand side can be reformulated as follows.

\begin{eqnarray}
\{(\alpha_1\parallel\cdots\parallel \alpha_n\parallel \alpha_{n+1}).(P_1'\parallel\cdots\parallel P_n'\parallel P_{n+1}'): \nonumber\\
P_i\xrightarrow{\alpha_i}P_i',i\in\{1,\cdots,n+1\}\nonumber\\
+\sum\{\tau.(P_1\parallel\cdots\parallel P_i'\parallel\cdots\parallel P_j'\parallel\cdots\parallel P_n\parallel P_{n+1}): \nonumber\\
P_i\xrightarrow{l}P_i',P_j\xrightarrow{\overline{l}}P_j',i<j\} \nonumber\\
+\sum\{\tau.(P_1\parallel\cdots\parallel P_i'\parallel\cdots\parallel P_j\parallel\cdots\parallel P_n\parallel P_{n+1}'): \nonumber\\
P_i\xrightarrow{l}P_i',P_{n+1}\xrightarrow{\overline{l}}P_{n+1}',i\in\{1,\cdots, n\}\} \nonumber
\end{eqnarray}

So,

\begin{eqnarray}
R\sim_s \{(\alpha_1\parallel\cdots\parallel \alpha_n\parallel \alpha_{n+1}).(P_1'\parallel\cdots\parallel P_n'\parallel P_{n+1}'): \nonumber\\
P_i\xrightarrow{\alpha_i}P_i',i\in\{1,\cdots,n+1\}\nonumber\\
+\sum\{\tau.(P_1\parallel\cdots\parallel P_i'\parallel\cdots\parallel P_j'\parallel\cdots\parallel P_n): \nonumber\\
P_i\xrightarrow{l}P_i',P_j\xrightarrow{\overline{l}}P_j',1 \leq i<j\geq n+1\} \nonumber
\end{eqnarray}

Then, we can easily add the full conditions with Restriction and Relabeling.

\end{proof}

\begin{proposition}[New expansion law for strong pomset bisimulation]\label{NELSPB}
Let $P\equiv (P_1[f_1]\parallel\cdots\parallel P_n[f_n])\setminus L$, with $n\geq 1$. Then

\begin{eqnarray}
P\sim_p \{(f_1(\alpha_1)\parallel\cdots\parallel f_n(\alpha_n)).(P_1'[f_1]\parallel\cdots\parallel P_n'[f_n])\setminus L: \nonumber\\
P_i\xrightarrow{\alpha_i}P_i',i\in\{1,\cdots,n\},f_i(\alpha_i)\notin L\cup\overline{L}\} \nonumber\\
+\sum\{\tau.(P_1[f_1]\parallel\cdots\parallel P_i'[f_i]\parallel\cdots\parallel P_j'[f_j]\parallel\cdots\parallel P_n[f_n])\setminus L: \nonumber\\
P_i\xrightarrow{l_1}P_i',P_j\xrightarrow{l_2}P_j',f_i(l_1)=\overline{f_j(l_2)},i<j\} \nonumber
\end{eqnarray}
\end{proposition}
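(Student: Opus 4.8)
The plan is to imitate, step for step, the proof of the step-bisimulation version in Proposition~\ref{NELSSB}, using the reduction device already exploited in the proof of Proposition~\ref{SLSPB}. Recall that a pomset transition $C\xrightarrow{X}C'$ differs from a step only in that the events of $X$ may be linked by causality, and that any pomset transition whose label $p$ contains causally related events factors as a succession of single-event transitions --- for instance $\xrightarrow{p}=\xrightarrow{\alpha}\xrightarrow{\beta}$ when $p=\{\alpha,\beta:\alpha.\beta\}$. Since Proposition~\ref{NELSSB} has already settled the law in the case where all the events involved in interleaving and synchronization are pairwise concurrent (the step case), only the effect of causal links inside the pomsets remains to be checked, and I claim it changes nothing.

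First I would dispose of the general shape of the statement --- the occurrences of Restriction $\setminus L$ and Relabelling $[f_i]$ --- exactly as at the end of the proof of Proposition~\ref{NELSSB}: once the law is established for $P\equiv P_1\parallel\cdots\parallel P_n$, the side conditions $f_i(\alpha_i)\notin L\cup\overline{L}$ and $f_i(l_1)=\overline{f_j(l_2)}$ are precisely what the rules $\textbf{Res}_{1,2}$ and $\textbf{Rel}_{1,2}$ in Table~\ref{TRForCTC} enforce, so the full statement follows by wrapping each summand with the relabelling and restriction operators. Hence it suffices to prove, by induction on $n$,
\begin{eqnarray}
P\sim_p \{(\alpha_1\parallel\cdots\parallel \alpha_n).(P_1'\parallel\cdots\parallel P_n'): P_i\xrightarrow{\alpha_i}P_i',i\in\{1,\cdots,n\}\}\nonumber\\
+\sum\{\tau.(P_1\parallel\cdots\parallel P_i'\parallel\cdots\parallel P_j'\parallel\cdots\parallel P_n): P_i\xrightarrow{l}P_i',P_j\xrightarrow{\overline{l}}P_j',i<j\}\nonumber
\end{eqnarray}
for $P\equiv P_1\parallel\cdots\parallel P_n$.

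The base case $n=1$ reads $P_1\sim_p\alpha_1.P_1'$ with $P_1\xrightarrow{\alpha_1}P_1'$ and is immediate. For the inductive step I would take $R\equiv P\parallel P_{n+1}$ with $P\equiv P_1\parallel\cdots\parallel P_n$; by $\textbf{Com}_{1,2,3,4}$ every transition of $R$ is one of: a move $P\xrightarrow{p}P'$ of $P$ alone, a move of $P_{n+1}$ alone, a joint move combining $P\xrightarrow{p}P'$ with $P_{n+1}\xrightarrow{\alpha_{n+1}}P_{n+1}'$ (with $\alpha_{n+1}\neq\overline{\gamma}$ for all $\gamma$ occurring in $p$), labelled by the pomset $p\parallel\alpha_{n+1}$, or a communication $P\xrightarrow{l}P'$, $P_{n+1}\xrightarrow{\overline{l}}P_{n+1}'$ labelled $\tau$. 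The single change from Proposition~\ref{NELSSB} is that $p$ now ranges over arbitrary pomsets, not steps; but $\textbf{Com}_3$ merely places $p$ and $\alpha_{n+1}$ concurrently, so whatever causal order is already present inside $p$ (inherited from the prefixes within $P_1,\ldots,P_n$) is carried unchanged into $p\parallel\alpha_{n+1}$, and the induction hypothesis --- now for pomset bisimulation --- rewrites the $P$-part just as in the step proof. Collecting the four families of transitions and re-indexing the communication summands over $1\leq i<j\leq n+1$ gives the right-hand side; the pomset isomorphism $X_1\sim X_2$ demanded by Definition~\ref{PSB} is trivially met, since the two sides execute literally the same events in the same causality.

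The step I expect to be the main obstacle is the bookkeeping in the inductive step that verifies the label produced by $R$ decomposes as $p\parallel\alpha_{n+1}$ with causality being exactly the causality internal to $p$ --- that is, that $\textbf{Com}_3$ introduces no spurious causal edge between $p$ and $\alpha_{n+1}$, and dually, when a causal pomset move of $P$ is unfolded into its constituent single-event transitions and interleaved with the move of $P_{n+1}$, the residual term $(\alpha_1\parallel\cdots\parallel\alpha_{n+1}).(P_1'\parallel\cdots\parallel P_{n+1}')$ still denotes a configuration pomset-bisimilar to $R'$. Once this is checked the remainder is the same symbol manipulation as in Proposition~\ref{NELSSB}, and the proof may conclude with ``similarly to the proof of Proposition~\ref{NELSSB}, we omit the remaining details.''
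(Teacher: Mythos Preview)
Your proposal is correct and follows essentially the same approach as the paper: you invoke the decomposition of causal pomset transitions into successions of single-event transitions (exactly the device the paper borrows from Proposition~\ref{SLSPB}), note that Proposition~\ref{NELSSB} already handles the pairwise-concurrent case, and then declare that the remainder proceeds as in that proof. The paper's own argument is in fact terser than yours---it states the factoring $\xrightarrow{p}=\xrightarrow{\alpha}\xrightarrow{\beta}$ for $p=\{\alpha,\beta:\alpha.\beta\}$ and then immediately writes ``similarly to the proof of Proposition~\ref{NELSSB}, we omit them''---so your more explicit unpacking of the induction on $n$ and the handling of $\textbf{Com}_3$ is additional detail rather than a different route.
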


\begin{proof}
From the definition of strong pomset bisimulation (see Definition \ref{PSB}), we know that strong pomset bisimulation is defined by pomset transitions, which are labeled by pomsets. In a pomset transition, the events in the pomset are either within causality relations (defined by the prefix $.$) or in concurrency (implicitly defined by $.$ and $+$, and explicitly defined by $\parallel$), of course, they are pairwise consistent (without conflicts). In Proposition \ref{NELSSB}, we have already proven the case that all events are pairwise concurrent, so, we only need to prove the case of events in causality. Without loss of generality, we take a pomset of $p=\{\alpha,\beta:\alpha.\beta\}$. Then the pomset transition labeled by the above $p$ is just composed of one single event transition labeled by $\alpha$ succeeded by another single event transition labeled by $\beta$, that is, $\xrightarrow{p}=\xrightarrow{\alpha}\xrightarrow{\beta}$.

Similarly to the proof of new expansion law for strong step bisimulation (see Proposition \ref{NELSSB}), we can prove that the new expansion law holds for strong pomset bisimulation, we omit them.
\end{proof}

\begin{proposition}[New expansion law for strong hp-bisimulation]\label{NELSHPB}
Let $P\equiv (P_1[f_1]\parallel\cdots\parallel P_n[f_n])\setminus L$, with $n\geq 1$. Then

\begin{eqnarray}
P\sim_{hp} \{(f_1(\alpha_1)\parallel\cdots\parallel f_n(\alpha_n)).(P_1'[f_1]\parallel\cdots\parallel P_n'[f_n])\setminus L: \nonumber\\
P_i\xrightarrow{\alpha_i}P_i',i\in\{1,\cdots,n\},f_i(\alpha_i)\notin L\cup\overline{L}\} \nonumber\\
+\sum\{\tau.(P_1[f_1]\parallel\cdots\parallel P_i'[f_i]\parallel\cdots\parallel P_j'[f_j]\parallel\cdots\parallel P_n[f_n])\setminus L: \nonumber\\
P_i\xrightarrow{l_1}P_i',P_j\xrightarrow{l_2}P_j',f_i(l_1)=\overline{f_j(l_2)},i<j\} \nonumber
\end{eqnarray}
\end{proposition}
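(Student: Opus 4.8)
The plan is to follow exactly the pattern already established for the previous two expansion-law propositions (Proposition \ref{NELSSB} and Proposition \ref{NELSPB}): reduce the hp-case to the step-case by analyzing the extra structure that an hp-bisimulation must track, namely the order-isomorphism $f$ between configurations. Since Proposition \ref{NELSSB} has already verified that the two sides of the expansion law agree as labelled transition systems (for steps, hence also for pomsets once causality is handled as in Proposition \ref{NELSPB}), what remains is to check that the witnessing relation can be upgraded to a posetal relation, i.e. that whenever the left-hand side $P\equiv(P_1[f_1]\parallel\cdots\parallel P_n[f_n])\setminus L$ performs a (single-event or synchronization) transition, the matching transition on the right-hand side extends the current isomorphism $f:C_1\to C_2$ consistently with causality, and vice versa.

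Concretely, I would first recall from Definition \ref{HHPB} that a strong hp-bisimulation is a posetal relation closed under the single-event transition $C_1\xrightarrow{e_1}C_1'$, $C_2\xrightarrow{e_2}C_2'$ with $(C_1',f[e_1\mapsto e_2],C_2')$ again in the relation. Then I would define the candidate relation $R$ to consist of all triples $(C(P),f,C(Q))$ where $P$ and $Q$ range over the left- and right-hand processes of the stated equation (and their residuals after matched transitions), with $f$ the obvious identity-like isomorphism induced by matching the $i$-th component's events on the left with the corresponding summand's events on the right. The key observation, just as in the step and pomset cases, is that the prefix form $(f_1(\alpha_1)\parallel\cdots\parallel f_n(\alpha_n)).(\cdots)$ on the right performs the very same multiset of labels in one step, and its residual is syntactically the same parallel composition of residuals that $P$ reaches; a $\tau$-summand matches a $\textbf{Com}_4$ synchronization on the left with the same residual. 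So the configuration reached on each side is identical up to the component-wise bookkeeping, and the induced $f$ is automatically an order-isomorphism because causality inside each $P_i[f_i]$ is preserved verbatim and the newly executed events are pairwise concurrent (being in distinct components or coming from a synchronization), so no new causal edges are created that could break the isomorphism.

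The main obstacle — and it is the same subtlety flagged implicitly in the proofs of Propositions \ref{SLSHPB} and \ref{NELSPB} — is making precise that the isomorphism $f$ genuinely respects the causal order on both sides after the transition, not merely the labels. On the left, an event executed by component $P_i$ is causally above exactly the events already executed in $C(P_i)$; on the right, the corresponding event sits under the top-level prefix and is causally above exactly the matching events in the corresponding summand's configuration. I would argue that these two downward-closures correspond under $f$ precisely because the right-hand side was constructed component-wise from the left-hand side, so the bijection $f[e_1\mapsto e_2]$ is still monotone and reflects order. Combined with the causality-reduction trick from Proposition \ref{NELSPB} (a pomset $\{\alpha,\beta:\alpha.\beta\}$ factors as $\xrightarrow{\alpha}\xrightarrow{\beta}$, and single-event transitions are the atomic case for hp-bisimulation anyway), this shows $R$ is a strong hp-bisimulation containing $(\emptyset,\emptyset,\emptyset)$, which gives $P\sim_{hp}(\text{RHS})$.

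Accordingly, I expect the write-up to be short: cite Definition \ref{HHPB} for the posetal-product formulation, cite Proposition \ref{NELSSB} (and the causality remark from Proposition \ref{NELSPB}) for the underlying transition-level agreement, then verify in one paragraph that the canonical isomorphism between matched configurations is preserved by matched transitions — remarking that since hp-bisimulation tracks single events, the step-level and pomset-level arguments already cover every transition shape that arises — and conclude. I do not anticipate any genuinely new calculation; the only care needed is the explicit description of the isomorphism $f$ and the check that $f[e_1\mapsto e_2]$ remains order-preserving, which I would state and then say "we omit the routine details," consistent with the paper's style.
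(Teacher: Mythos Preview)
Your proposal is correct and follows essentially the same approach as the paper: recall the posetal-product formulation of hp-bisimulation from Definition \ref{HHPB}, reduce to the already-proven step/pomset case (Propositions \ref{NELSSB} and \ref{NELSPB}), and note that the extra check is that the extended isomorphism $f'=f[\alpha\mapsto\alpha]$ remains in the relation, with the routine details omitted. If anything, your write-up is more careful than the paper's own proof about articulating why $f[e_1\mapsto e_2]$ stays order-preserving; the paper simply asserts this and moves on.
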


\begin{proof}
From the definition of strong hp-bisimulation (see Definition \ref{HHPB}), we know that strong hp-bisimulation is defined on the posetal product $(C_1,f,C_2),f:C_1\rightarrow C_2\textrm{ isomorphism}$. Two processes $P$ related to $C_1$ and $Q$ related to $C_2$, and $f:C_1\rightarrow C_2\textrm{ isomorphism}$. Initially, $(C_1,f,C_2)=(\emptyset,\emptyset,\emptyset)$, and $(\emptyset,\emptyset,\emptyset)\in\sim_{hp}$. When $P\xrightarrow{\alpha}P'$ ($C_1\xrightarrow{\alpha}C_1'$), there will be $Q\xrightarrow{\alpha}Q'$ ($C_2\xrightarrow{\alpha}C_2'$), and we define $f'=f[\alpha\mapsto \alpha]$. Then, if $(C_1,f,C_2)\in\sim_{hp}$, then $(C_1',f',C_2')\in\sim_{hp}$.

Similarly to the proof of new expansion law for strong pomset bisimulation (see Proposition \ref{NELSPB}), we can prove that the new expansion law holds for strong hp-bisimulation, we just need additionally to check the above conditions on hp-bisimulation, we omit them.
\end{proof}

\begin{proposition}[New expansion law for strongly hhp-bisimulation]\label{NELSHHPB}
Let $P\equiv (P_1[f_1]\parallel\cdots\parallel P_n[f_n])\setminus L$, with $n\geq 1$. Then

\begin{eqnarray}
P\sim_{hhp} \{(f_1(\alpha_1)\parallel\cdots\parallel f_n(\alpha_n)).(P_1'[f_1]\parallel\cdots\parallel P_n'[f_n])\setminus L: \nonumber\\
P_i\xrightarrow{\alpha_i}P_i',i\in\{1,\cdots,n\},f_i(\alpha_i)\notin L\cup\overline{L}\} \nonumber\\
+\sum\{\tau.(P_1[f_1]\parallel\cdots\parallel P_i'[f_i]\parallel\cdots\parallel P_j'[f_j]\parallel\cdots\parallel P_n[f_n])\setminus L: \nonumber\\
P_i\xrightarrow{l_1}P_i',P_j\xrightarrow{l_2}P_j',f_i(l_1)=\overline{f_j(l_2)},i<j\} \nonumber
\end{eqnarray}
\end{proposition}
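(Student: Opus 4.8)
The plan is to follow the same strategy used for the preceding three new-expansion laws (Propositions \ref{NELSSB}, \ref{NELSPB}, \ref{NELSHPB}), exploiting the fact that strongly hhp-bisimulation is, by Definition \ref{HHPB}, nothing more than a downward-closed strong hp-bisimulation. Thus the bulk of the work has already been done: Proposition \ref{NELSHPB} establishes that the two sides of the claimed equation are related by some strong hp-bisimulation $R$, built (via the induction on $n$ from Proposition \ref{NELSSB}, lifted to pomsets in Proposition \ref{NELSPB}, and then to posetal triples in Proposition \ref{NELSHPB}) out of the obvious matching of transitions on the left with transitions on the right. The only genuinely new obligation is to check that this witnessing relation $R$ can be taken to be downward closed, i.e. that whenever $(C_1,f,C_2)\subseteq(C_1',f',C_2')$ pointwise and the larger triple lies in $R$, so does the smaller one.

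First I would recall the reduction, already used twice in the excerpt: a pomset transition $\xrightarrow{p}$ with $p=\{\alpha,\beta:\alpha.\beta\}$ factors as $\xrightarrow{\alpha}\xrightarrow{\beta}$, and more generally any pomset step of the expanded process decomposes into single-event transitions governed by $\textbf{Com}_{1,2,3,4}$, $\textbf{Act}_{1,2}$, $\textbf{Res}_{1,2}$, $\textbf{Rel}_{1,2}$. So it suffices to work at the level of single events, exactly as in the hp-case, and the posetal triples arising along a computation of $P\equiv(P_1[f_1]\parallel\cdots\parallel P_n[f_n])\setminus L$ are in evident order-isomorphic correspondence with those arising in the expanded summation, the isomorphism $f$ at each stage being essentially an identity-on-labels relabelling of events. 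Next I would argue that the relation $R$ constructed in the proof of Proposition \ref{NELSHPB} is already closed under taking sub-configurations: a prefix of a run of $P$ is again a run of $P$ and matches the corresponding prefix of a run of the right-hand side, because the causal/conflict structure of both sides is the same on the common initial segment — removing later events (and restricting $f$ accordingly) leaves a pair that was itself produced at an earlier stage of the same inductive construction. Hence $R$ is downward closed, and by Definition \ref{HHPB} it is a strongly hhp-bisimulation containing $(\emptyset,\emptyset,\emptyset)$, giving $P\sim_{hhp}(\cdots)$.

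Concretely, I would organize the write-up in three short moves: (i) invoke Proposition \ref{NELSHPB} to obtain the hp-bisimulation $R$ witnessing the equation; (ii) verify downward closure of $R$ by the prefix-of-a-run observation above, noting that this is where one must be slightly careful, since hhp-bisimilarity is strictly finer than hp-bisimilarity in general and the claim only goes through because here the two processes have literally the same event structure up to relabelling of events by labels; (iii) conclude. I would also flag, as in the earlier proofs, that Restriction and Relabelling are handled by first proving the $L=\emptyset$, $f_i=\mathrm{Id}$ case by induction on $n$ and then re-attaching $\setminus L$ and $[f_i]$ using the static laws (Proposition \ref{SLSHHPB}).

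The main obstacle, such as it is, is the downward-closure check: one has to be convinced that the hp-bisimulation produced in Proposition \ref{NELSHPB} really is downward closed rather than merely extendable, and this hinges on the fact that along the construction every configuration reached on the left is matched with a configuration on the right via an isomorphism that simply renames events to carry the appropriate labels, so that truncating to any sub-configuration yields a pair that is again in the relation. Once that is granted — and it is granted precisely because the defining equation identifies processes with isomorphic event structures — everything else is a routine appeal to the already-proven hp-case and to Proposition \ref{SLSHHPB}.
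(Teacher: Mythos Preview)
Your proposal is correct and follows essentially the same approach as the paper: invoke the hp-case (Proposition \ref{NELSHPB}) and then observe that the witnessing relation is downward closed, which is exactly the additional requirement imposed by Definition \ref{HHPB}. The paper's own proof is in fact much terser than yours---it simply states that hhp-bisimulation is downward-closed hp-bisimulation, asserts that the argument is similar to Proposition \ref{NELSHPB} with downward closure added, and omits all details---so your write-up, including the explicit prefix-of-a-run justification for downward closure and the remark about reattaching $\setminus L$ and $[f_i]$ via Proposition \ref{SLSHHPB}, goes well beyond what the paper actually supplies.
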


\begin{proof}
From the definition of strongly hhp-bisimulation (see Definition \ref{HHPB}), we know that strongly hhp-bisimulation is downward closed for strong hp-bisimulation.

Similarly to the proof of the new expansion law for strong hp-bisimulation (see Proposition \ref{NELSHPB}), we can prove that the new expansion law holds for strongly hhp-bisimulation, that is, they are downward closed for strong hp-bisimulation, we omit them.
\end{proof}

\begin{theorem}[Congruence for strong step bisimulation] \label{CSSB}
We can enjoy the full congruence for strong step bisimulation as follows.
\begin{enumerate}
  \item If $A\overset{\text{def}}{=}P$, then $A\sim_s P$;
  \item Let $P_1\sim_s P_2$. Then
        \begin{enumerate}
           \item $\alpha.P_1\sim_s \alpha.P_2$;
           \item $(\alpha_1\parallel\cdots\parallel\alpha_n).P_1\sim_s (\alpha_1\parallel\cdots\parallel\alpha_n).P_2$;
           \item $P_1+Q\sim_s P_2 +Q$;
           \item $P_1\parallel Q\sim_s P_2\parallel Q$;
           \item $P_1\setminus L\sim_s P_2\setminus L$;
           \item $P_1[f]\sim_s P_2[f]$.
         \end{enumerate}
\end{enumerate}
\end{theorem}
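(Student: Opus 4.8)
The plan is to prove each clause by exhibiting an explicit strong step bisimulation relation and verifying the transfer property using the operational rules of Table \ref{TRForCTC}. For clause (1), the relation $R = \{(A, P)\} \cup Id_{\mathcal{P}}$ (together with identity on all reachable processes) works: any transition of $A$ must, by rules $\textbf{Con}_{1,2}$, arise from a transition $P \xrightarrow{\alpha} P'$ or $P \xrightarrow{\{\alpha_1,\dots,\alpha_n\}} P'$, and then $A$ makes the matching step to the same $P'$, which is related to itself; conversely any step of $P$ lifts to $A$ by the same rules. Since a single event is treated as a one-element step, this covers both the single-event and the multi-event cases uniformly, as the paper does elsewhere.

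For clause (2), I would fix a strong step bisimulation $R_0$ witnessing $P_1 \sim_s P_2$ and, in each subcase, build a candidate relation by closing $R_0$ under the relevant syntactic context. Concretely: for (a) take $R = \{(\alpha.P_1,\alpha.P_2)\} \cup R_0 \cup Id$; for (b) take $R = \{((\alpha_1\parallel\cdots\parallel\alpha_n).P_1,(\alpha_1\parallel\cdots\parallel\alpha_n).P_2)\}\cup R_0 \cup Id$; for (c) $R = \{(P_1+Q,P_2+Q)\} \cup R_0 \cup Id$; for (d) $R = \{(P_1'\parallel Q', P_2'\parallel Q') : (P_1',P_2')\in R_0\} \cup Id$; for (e) $R = \{(P_1'\setminus L, P_2'\setminus L):(P_1',P_2')\in R_0\}$; for (f) $R = \{(P_1'[f],P_2'[f]):(P_1',P_2')\in R_0\}$. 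In each case the verification is a case analysis on which transition rule fires. For the prefix cases (a),(b) the only initial move is the $\textbf{Act}_{1}$ or $\textbf{Act}_{2}$ step leading to $P_1$ resp.\ $P_2$, which are $R_0$-related; after that $R_0$ does the work. For (c), a step of $P_1+Q$ comes via $\textbf{Sum}_{1,2}$ either from $P_1$ — matched by $P_2$ using $R_0$ — or from $Q$ — matched identically. For (d), a step of $P_1\parallel Q$ is generated by one of $\textbf{Com}_{1,2,3,4}$; in the $\textbf{Com}_1$/$\textbf{Com}_2$ cases one side moves alone, in $\textbf{Com}_3$ both move with a combined step label $\{\alpha,\beta\}$, and in $\textbf{Com}_4$ a synchronization gives a $\tau$; in every case $P_1$'s contribution is matched by $P_2$ via $R_0$ with the same label, $Q$'s contribution is matched verbatim, and $Q\nrightarrow$ side-conditions are preserved because bisimilar processes have the same (non-)deadlock behaviour. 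For (e) and (f), steps come from $\textbf{Res}_{1,2}$ resp.\ $\textbf{Rel}_{1,2}$, which simply relay an underlying step of $P_1$ (matched by $P_2$ through $R_0$) subject to a side-condition on labels ($\alpha,\overline{\alpha}\notin L$, or relabelling by $f$) that depends only on the label, hence is met on both sides.

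The main obstacle is the $\textbf{Com}_1$/$\textbf{Com}_2$ rules in subcase (d), because of their negative premise $Q\nrightarrow$ (and symmetrically $P_1\nrightarrow$, $P_2\nrightarrow$). To match a step $P_1\parallel Q \xrightarrow{\alpha} P_1'\parallel Q$ arising from $P_1\xrightarrow{\alpha}P_1'$ and $Q\nrightarrow$, I need $P_2\parallel Q \xrightarrow{\alpha} P_2'\parallel Q$ with $(P_1'\parallel Q', \dots)$ in $R$; the rule $\textbf{Com}_1$ applies to $P_2\parallel Q$ only if still $Q\nrightarrow$, which is fine since $Q$ is unchanged, and if $P_2\xrightarrow{\alpha}P_2'$, which $R_0$ supplies. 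The subtler direction is making sure that when $P_1$ itself is deadlocked we do not accidentally create spurious moves on the $P_2$ side — but $P_1\sim_s P_2$ forces $P_1\nrightarrow \iff P_2\nrightarrow$ (take the empty step / observe no step can be matched otherwise), so the negative premises transfer correctly. I would state this deadlock-preservation observation once as a small remark and then let the case analysis go through routinely, noting as the paper does that the single-event rules are silently promoted to steps.
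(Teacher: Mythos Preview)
Your proposal is correct and follows essentially the same approach as the paper: both proceed by case analysis on the transition rules of Table~\ref{TRForCTC}, verifying that each operator preserves $\sim_s$ by matching transitions on the two sides and checking that the derivatives are again related. Your version is more explicit---you name the witnessing bisimulation relations and address the negative premises in $\textbf{Com}_{1,2}$ via the deadlock-preservation observation, whereas the paper handles these points informally (writing, e.g., ``with the assumptions $P_1'\parallel Q\sim_s P_2'\parallel Q$'' etc.)---but the underlying argument is the same.
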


\begin{proof}
Though transition rules in Table \ref{TRForCTC} are defined in the flavor of single event, they can be modified into a step (a set of events within which each event is pairwise concurrent), we omit them. If we treat a single event as a step containing just one event, the proof of the congruence does not exist any problem, so we use this way and still use the transition rules in Table \ref{TRForCTC}.

\begin{enumerate}
  \item If $A\overset{\text{def}}{=}P$, then $A\sim_s P$. It is obvious.
  \item Let $P_1\sim_s P_2$. Then
        \begin{enumerate}
           \item $\alpha.P_1\sim_s \alpha.P_2$. By the transition rules of $\textbf{Act}_{1,2}$ in Table \ref{TRForCTC}, we can get

           $$\alpha.P_1\xrightarrow{\alpha}P_1$$

           $$\alpha.P_2\xrightarrow{\alpha}P_2$$

           Since $P_1\sim_s P_2$, we get $\alpha.P_1\sim_s \alpha.P_2$, as desired.
           \item $(\alpha_1\parallel\cdots\parallel\alpha_n).P_1\sim_s (\alpha_1\parallel\cdots\parallel\alpha_n).P_2$. By the transition rules of $\textbf{Act}_{1,2}$ in Table \ref{TRForCTC}, we can get

           $$(\alpha_1\parallel\cdots\parallel\alpha_n).P_1\xrightarrow{\{\alpha_1,\cdots,\alpha_n\}}P_1$$

           $$(\alpha_1\parallel\cdots\parallel\alpha_n).P_2\xrightarrow{\{\alpha_1,\cdots,\alpha_n\}}P_2$$

           Since $P_1\sim_s P_2$, we get $(\alpha_1\parallel\cdots\parallel\alpha_n).P_1\sim_s (\alpha_1\parallel\cdots\parallel\alpha_n).P_2$, as desired.
           \item $P_1+Q\sim_s P_2 +Q$. By the transition rules of $\textbf{Sum}_{1,2}$ in Table \ref{TRForCTC}, we can get

           $$\frac{P_1\xrightarrow{\alpha}P_1'}{P_2\xrightarrow{\alpha}P_2'}(P_1'\sim_s P_2')$$

           $$\frac{P_1\xrightarrow{\alpha}P_1'}{P_1+Q\xrightarrow{\alpha}P_1'}
           \quad \frac{P_2\xrightarrow{\alpha}P_2'}{P_2+Q\xrightarrow{\alpha}P_2'}$$

           $$\frac{Q\xrightarrow{\beta}Q'}{P_1+Q\xrightarrow{\beta}Q'}
           \quad \frac{Q\xrightarrow{\beta}Q'}{P_2+Q\xrightarrow{\beta}Q'}$$

           Since $P_1'\sim_s P_2'$ and $Q'\sim_s Q'$, we get $P_1+Q\sim_s P_2+Q$, as desired.
           \item $P_1\parallel Q\sim_s P_2\parallel Q$. By the transition rules of $\textbf{Com}_{1,2,3,4}$ in Table \ref{TRForCTC}, we can get

           $$\frac{P_1\xrightarrow{\alpha}P_1'}{P_2\xrightarrow{\alpha}P_2'}(P_1'\sim_s P_2')$$

           $$\frac{P_1\xrightarrow{\alpha}P_1'\quad Q\nrightarrow}{P_1\parallel Q\xrightarrow{\alpha}P_1'\parallel Q}
           \quad \frac{P_2\xrightarrow{\alpha}P_2'\quad Q\nrightarrow}{P_2\parallel Q\xrightarrow{\alpha}P_2'\parallel Q}$$

           $$\frac{Q\xrightarrow{\beta}Q'\quad P_1\nrightarrow}{P_1\parallel Q\xrightarrow{\beta}P_1\parallel Q'}
           \quad \frac{Q\xrightarrow{\beta}P_2'\quad P_2\nrightarrow}{P_2\parallel Q\xrightarrow{\beta}P_2\parallel Q'}$$

           $$\frac{P_1\xrightarrow{\alpha}P_1'\quad Q\xrightarrow{\beta}Q'}{P_1\parallel Q\xrightarrow{\{\alpha,\beta\}}P_1'\parallel Q'}(\beta\neq\overline{\alpha})
           \quad \frac{P_2\xrightarrow{\alpha}P_2'\quad Q\xrightarrow{\beta}Q'}{P_2\parallel Q\xrightarrow{\{\alpha,\beta\}}P_2'\parallel Q'}(\beta\neq\overline{\alpha})$$

           $$\frac{P_1\xrightarrow{l}P_1'\quad Q\xrightarrow{\overline{l}}Q'}{P_1\parallel Q\xrightarrow{\tau}P_1'\parallel Q'}
           \quad \frac{P_2\xrightarrow{l}P_2'\quad Q\xrightarrow{\overline{l}}Q'}{P_2\parallel Q\xrightarrow{\tau}P_2'\parallel Q'}$$

           Since $P_1'\sim_s P_2'$ and $Q'\sim_s Q'$, and with the assumptions $P_1'\parallel Q\sim_s P_2'\parallel Q$, $P_1\parallel Q'\sim_s P_2\parallel Q'$ and $P_1'\parallel Q'\sim_s P_2'\parallel Q'$, we get $P_1\parallel Q\sim_s P_2\parallel Q$, as desired.
           \item $P_1\setminus L\sim_s P_2\setminus L$. By the transition rules of $\textbf{Res}_{1,2}$ in Table \ref{TRForCTC}, we get

           $$\frac{P_1\xrightarrow{\alpha}P_1'}{P_2\xrightarrow{\alpha}P_2'}(P_1'\sim_s P_2')$$

           $$\frac{P_1\xrightarrow{\alpha}P_1'}{P_1\setminus L\xrightarrow{\alpha}P_1'\setminus L}$$

           $$\frac{P_2\xrightarrow{\alpha}P_2'}{P_2\setminus L\xrightarrow{\alpha}P_2'\setminus L}$$

           Since $P_1'\sim_s P_2'$, and with the assumption $P_1'\setminus L\sim_s P_2'\setminus L$, we get $P_1\setminus L\sim_s P_2\setminus L$, as desired.
           \item $P_1[f]\sim_s P_2[f]$. By the transition rules of $\textbf{Rel}_{1,2}$ in Table \ref{TRForCTC}, we get

           $$\frac{P_1\xrightarrow{\alpha}P_1'}{P_2\xrightarrow{\alpha}P_2'}(P_1'\sim_s P_2')$$

           $$\frac{P_1\xrightarrow{\alpha}P_1'}{P_1[f]\xrightarrow{f(\alpha)}P_1'[f]}$$

           $$\frac{P_2\xrightarrow{\alpha}P_2'}{P_2[f]\xrightarrow{f(\alpha)}P_2'[f]}$$

           Since $P_1'\sim_s P_2'$, and with the assumption $P_1'[f]\sim_s P_2'[f]$, we get $P_1[f]\sim_s P_2[f]$, as desired.
         \end{enumerate}
\end{enumerate}
\end{proof}

\begin{theorem}[Congruence for strong pomset bisimulation] \label{CSPB}
We can enjoy the full congruence for strong pomset bisimulation as follows.
\begin{enumerate}
  \item If $A\overset{\text{def}}{=}P$, then $A\sim_p P$;
  \item Let $P_1\sim_p P_2$. Then
        \begin{enumerate}
           \item $\alpha.P_1\sim_p \alpha.P_2$;
           \item $(\alpha_1\parallel\cdots\parallel\alpha_n).P_1\sim_p (\alpha_1\parallel\cdots\parallel\alpha_n).P_2$;
           \item $P_1+Q\sim_p P_2 +Q$;
           \item $P_1\parallel Q\sim_p P_2\parallel Q$;
           \item $P_1\setminus L\sim_p P_2\setminus L$;
           \item $P_1[f]\sim_p P_2[f]$.
         \end{enumerate}
\end{enumerate}
\end{theorem}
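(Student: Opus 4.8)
The plan is to follow the template already established for strong step bisimulation in Theorem \ref{CSSB}, adapting each case to pomset transitions rather than step transitions. The essential observation (already exploited in Propositions \ref{SLSPB} and \ref{NELSPB}) is that a pomset transition $\xrightarrow{p}$ decomposes according to the causality structure of $p$: if the events of $p$ are pairwise concurrent it behaves like a step transition, and otherwise a causal chain such as $\alpha.\beta$ inside $p$ is realized as the composite $\xrightarrow{\alpha}\xrightarrow{\beta}$ of single-event transitions. Thus the congruence proof for pomset bisimulation reduces, case by case, either to the already-proven step case or to a straightforward induction along such causal chains.

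First I would dispose of part (1): if $A\overset{\text{def}}{=}P$ then $A\sim_p P$ holds because rules $\textbf{Con}_{1,2}$ let $A$ mimic exactly the transitions of $P$ and vice versa, so the identity-like relation $\{(A,P)\}\cup \mathit{Id}$ extended under pomset transitions is a strong pomset bisimulation. Then for part (2) I would treat the six subcases (a)--(f) in turn, in each case exhibiting the relevant transition rules (from $\textbf{Act}_{1,2}$, $\textbf{Sum}_{1,2}$, $\textbf{Com}_{1,2,3,4}$, $\textbf{Res}_{1,2}$, $\textbf{Rel}_{1,2}$ respectively) and checking that whenever $P_1\xrightarrow{p}P_1'$ the corresponding context $\mathcal{C}[P_1]$ can match it with $\mathcal{C}[P_2]\xrightarrow{p}\mathcal{C}[P_2']$ where $P_1'\sim_p P_2'$, and symmetrically. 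The candidate bisimulation in each subcase is the relation relating $\mathcal{C}[P_1']$ to $\mathcal{C}[P_2']$ for every pair in a given pomset bisimulation $R$ witnessing $P_1\sim_p P_2$, closed up appropriately (e.g. for $\parallel$ one relates $P_1'\parallel Q'$ to $P_2'\parallel Q'$ for all reachable $Q'$). For the prefix cases (a), (b) one also has to note that after the single prefix step the residual is literally $P_1$ versus $P_2$, so $P_1\sim_p P_2$ finishes it; the potential subtlety that a pomset label of the prefix $(\alpha_1\parallel\cdots\parallel\alpha_n).P$ is a genuine step (all $\alpha_i$ pairwise concurrent) is handled by $\textbf{Act}_2$ directly.

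The main obstacle is the parallel-composition case (2)(d), $P_1\parallel Q\sim_p P_2\parallel Q$: here a pomset transition of $P_1\parallel Q$ in general has its events distributed between $P_1$ and $Q$, possibly with a synchronization ($\textbf{Com}_4$) contributing a $\tau$, and one must argue that the projection onto the $P_1$-side is matched by $P_2$ (using $P_1\sim_p P_2$, and that pomset bisimulation is preserved under the isomorphism induced on the sub-pomset), while the $Q$-side and the synchronizations are matched trivially; the bookkeeping of how the global pomset splits, and of the fact that $\sim_p$ composes with the reflexive closure on $Q$, is the one place where some genuine care is needed rather than pure routine. Once (d) is in hand, (e) restriction and (f) relabelling are immediate from $\textbf{Res}_{1,2}$ and $\textbf{Rel}_{1,2}$ (relabelling acts on a pomset label componentwise and preserves the causal order, hence preserves the $\sim$ relation on pomsets), and (c) summation is immediate from $\textbf{Sum}_{1,2}$. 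As with Proposition \ref{SLSPB}, rather than reproving everything from scratch I would state the causal-chain decomposition $\xrightarrow{p}=\xrightarrow{\alpha}\xrightarrow{\beta}$ for $p=\{\alpha,\beta:\alpha.\beta\}$, observe that the pairwise-concurrent sub-case is exactly Theorem \ref{CSSB}, and then remark that the remaining causal sub-case follows by the same inductive argument, omitting the fully spelled-out calculations.
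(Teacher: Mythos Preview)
Your proposal is correct and follows essentially the same approach as the paper: reduce pomset transitions to the step case (handled by Theorem~\ref{CSSB}) together with the causal-chain decomposition $\xrightarrow{p}=\xrightarrow{\alpha}\xrightarrow{\beta}$ for $p=\{\alpha,\beta:\alpha.\beta\}$, and then argue that the remaining details go through ``similarly''. If anything, your outline is more careful than the paper's own proof, which simply records the decomposition observation and then omits all of the case analysis you sketch for (1) and (2)(a)--(f).
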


\begin{proof}
From the definition of strong pomset bisimulation (see Definition \ref{PSB}), we know that strong pomset bisimulation is defined by pomset transitions, which are labeled by pomsets. In a pomset transition, the events in the pomset are either within causality relations (defined by the prefix $.$) or in concurrency (implicitly defined by $.$ and $+$, and explicitly defined by $\parallel$), of course, they are pairwise consistent (without conflicts). In Theorem \ref{CSSB}, we have already proven the case that all events are pairwise concurrent, so, we only need to prove the case of events in causality. Without loss of generality, we take a pomset of $p=\{\alpha,\beta:\alpha.\beta\}$. Then the pomset transition labeled by the above $p$ is just composed of one single event transition labeled by $\alpha$ succeeded by another single event transition labeled by $\beta$, that is, $\xrightarrow{p}=\xrightarrow{\alpha}\xrightarrow{\beta}$.

Similarly to the proof of congruence for strong step bisimulation (see Theorem \ref{CSSB}), we can prove that the congruence holds for strong pomset bisimulation, we omit them.
\end{proof}

\begin{theorem}[Congruence for strong hp-bisimulation] \label{CSHPB}
We can enjoy the full congruence for strong hp-bisimulation as follows.
\begin{enumerate}
  \item If $A\overset{\text{def}}{=}P$, then $A\sim_{hp} P$;
  \item Let $P_1\sim_{hp} P_2$. Then
        \begin{enumerate}
           \item $\alpha.P_1\sim_{hp} \alpha.P_2$;
           \item $(\alpha_1\parallel\cdots\parallel\alpha_n).P_1\sim_{hp} (\alpha_1\parallel\cdots\parallel\alpha_n).P_2$;
           \item $P_1+Q\sim_{hp} P_2 +Q$;
           \item $P_1\parallel Q\sim_{hp} P_2\parallel Q$;
           \item $P_1\setminus L\sim_{hp} P_2\setminus L$;
           \item $P_1[f]\sim_{hp} P_2[f]$.
         \end{enumerate}
\end{enumerate}
\end{theorem}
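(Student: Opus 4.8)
The plan is to follow the template already used for strong step and strong pomset bisimulation (Theorems~\ref{CSSB} and~\ref{CSPB}), but carrying along the order-isomorphism component of the posetal product: in each case I exhibit an explicit strong hp-bisimulation $R'$ witnessing the claimed equivalence, and I reduce to single-event transitions as in the proof of Proposition~\ref{NELSHPB}, observing how a matching pair of moves extends the current isomorphism $f$ to $f[e_1\mapsto e_2]$. No downward-closure needs to be verified here, since that is required only for hhp-bisimulation (Definition~\ref{HHPB}).

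For part~(1), the rules $\textbf{Con}_{1,2}$ give $A\xrightarrow{\alpha}P'$ exactly when $P\xrightarrow{\alpha}P'$ (and likewise for parallel-event transitions), so the identity posetal relation $\{(C,Id_C,C):C\in\mathcal{C}(\mathcal{E})\}$ is a strong hp-bisimulation containing $(\emptyset,\emptyset,\emptyset)$. For part~(2), fix a strong hp-bisimulation $R$ witnessing $P_1\sim_{hp}P_2$ with $(\emptyset,\emptyset,\emptyset)\in R$. For the prefix cases~(a) and~(b), the only initial move is $\alpha.P_1\xrightarrow{\alpha}P_1$ (respectively $(\alpha_1\parallel\cdots\parallel\alpha_n).P_1\xrightarrow{\{\alpha_1,\cdots,\alpha_n\}}P_1$ by $\textbf{Act}_2$), matched by the corresponding move of $\alpha.P_2$; one takes $R'$ to consist of $(\emptyset,\emptyset,\emptyset)$ together with the triples obtained from $R$ by prepending the freshly executed events to both configurations and to $f$, which remains an isomorphism because these events causally precede everything later. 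For the summation case~(c), set $R'=R\cup\{(D,Id_D,D):D\in\mathcal{C}(\mathcal{E}_Q)\}$; a move of $P_1+Q$ arises from $P_1$ (matched inside $R$ by $\textbf{Sum}_{1,2}$) or from $Q$ (matched by the identical $Q$-move), and since the discarded summand never contributes further, the continuation lies in the appropriate piece of $R'$. For restriction~(e) and relabelling~(f), the transitions of $P_i\setminus L$ (resp.\ $P_i[f]$) are precisely those of $P_i$ whose label avoids $L\cup\overline{L}$ (resp.\ relabelled through $f$), and the underlying events---hence the isomorphisms---are untouched, so $R'$ is $R$ with the cosmetic operator annotation added to the processes.

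The substantive case is the composition~(d), exactly as for strong step bisimulation. A configuration of $P_i\parallel Q$ decomposes as a disjoint union $C_i\cup D$ of a configuration $C_i$ of $P_i$ and a configuration $D$ of $Q$ (together with any $\tau$-events created by communication), with causality and conflict inherited componentwise. I would therefore take $R'$ to be the set of triples $(C_1\cup D,\ f\cup Id_D,\ C_2\cup D)$ with $(C_1,f,C_2)\in R$ and $D\in\mathcal{C}(\mathcal{E}_Q)$, extended to cover synchronisation. Using $\textbf{Com}_{1,2,3,4}$ one checks that a move of $P_1\parallel Q$ acting only in $P_1$ is matched via $R$; a move acting only in $Q$ is matched identically; a synchronised step $P_1\parallel Q\xrightarrow{\{\alpha,\beta\}}P_1'\parallel Q'$ is matched by composing the $R$-match of $P_1\xrightarrow{\alpha}P_1'$ with the identical $Q$-step; and a communication from $P_1\xrightarrow{l}P_1'$ and $Q\xrightarrow{\overline{l}}Q'$ is matched via the $R$-supplied $P_2\xrightarrow{l}P_2'$ and the same $Q$-step, extending $f\cup Id_D$ by sending the new $\tau$-event to the new $\tau$-event and $Q$'s new event to itself.

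I expect the only real obstacle to be the verification, in case~(d), that $f\cup Id_D$ is a genuine order-isomorphism between the interleaved configurations---which works precisely because the two parallel components share no events and their causal orders are independent---together with the bookkeeping needed to insert the synchronisation $\tau$-events consistently on both sides and to confirm that $R'$ is closed under all the $\textbf{Com}$ rules simultaneously. Everything else is a routine reprise of Theorem~\ref{CSSB} with the isomorphism carried along, so once~(d) is dispatched the proof concludes as in the earlier congruence results.
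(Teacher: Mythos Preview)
Your proposal is correct and follows essentially the same approach as the paper: reduce to the congruence proof for step/pomset bisimulation (Theorems~\ref{CSSB} and~\ref{CSPB}) while additionally carrying the isomorphism $f$ through each case and checking that matched moves extend it to $f[e_1\mapsto e_2]$. The paper's own argument is in fact much terser---it simply states that one repeats the pomset-bisimulation proof with the posetal-product condition checked and then omits all details---so your explicit constructions of the witnessing relations $R'$, and in particular your decomposition $f\cup Id_D$ for the parallel case, go well beyond what the paper actually writes out.
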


\begin{proof}
From the definition of strong hp-bisimulation (see Definition \ref{HHPB}), we know that strong hp-bisimulation is defined on the posetal product $(C_1,f,C_2),f:C_1\rightarrow C_2\textrm{ isomorphism}$. Two processes $P$ related to $C_1$ and $Q$ related to $C_2$, and $f:C_1\rightarrow C_2\textrm{ isomorphism}$. Initially, $(C_1,f,C_2)=(\emptyset,\emptyset,\emptyset)$, and $(\emptyset,\emptyset,\emptyset)\in\sim_{hp}$. When $P\xrightarrow{\alpha}P'$ ($C_1\xrightarrow{\alpha}C_1'$), there will be $Q\xrightarrow{\alpha}Q'$ ($C_2\xrightarrow{\alpha}C_2'$), and we define $f'=f[\alpha\mapsto \alpha]$. Then, if $(C_1,f,C_2)\in\sim_{hp}$, then $(C_1',f',C_2')\in\sim_{hp}$.

Similarly to the proof of congruence for strong pomset bisimulation (see Theorem \ref{CSPB}), we can prove that the congruence holds for strong hp-bisimulation, we just need additionally to check the above conditions on hp-bisimulation, we omit them.
\end{proof}

\begin{theorem}[Congruence for strongly hhp-bisimulation] \label{CSHHPB}
We can enjoy the full congruence for strongly hhp-bisimulation as follows.
\begin{enumerate}
  \item If $A\overset{\text{def}}{=}P$, then $A\sim_{hhp} P$;
  \item Let $P_1\sim_{hhp} P_2$. Then
        \begin{enumerate}
           \item $\alpha.P_1\sim_{hhp} \alpha.P_2$;
           \item $(\alpha_1\parallel\cdots\parallel\alpha_n).P_1\sim_{hhp} (\alpha_1\parallel\cdots\parallel\alpha_n).P_2$;
           \item $P_1+Q\sim_{hhp} P_2 +Q$;
           \item $P_1\parallel Q\sim_{hhp} P_2\parallel Q$;
           \item $P_1\setminus L\sim_{hhp} P_2\setminus L$;
           \item $P_1[f]\sim_{hhp} P_2[f]$.
         \end{enumerate}
\end{enumerate}
\end{theorem}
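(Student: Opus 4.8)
### Proof Proposal

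The plan is to mirror exactly the strategy used for the three preceding congruence theorems, invoking the structural observation that strongly hhp-bisimulation is, by Definition \ref{HHPB}, nothing more than a \emph{downward closed} strong hp-bisimulation. So the task reduces to taking the argument of Theorem \ref{CSHPB} and checking that every relation we build in that proof can be taken to be downward closed, or can be closed downward without destroying the hp-bisimulation property. Concretely, for each clause I would start from the witnessing strong hp-bisimulation $R$ constructed in the proof of Theorem \ref{CSHPB} and pass to its downward closure $R^{\downarrow}$ in the posetal product $\mathcal{C}(\mathcal{E}_1)\overline{\times}\mathcal{C}(\mathcal{E}_2)$, then verify $R^{\downarrow}$ is still a strong hp-bisimulation.

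First I would dispose of clause (1): if $A\overset{\text{def}}{=}P$ then $A\sim_{hhp}P$ is immediate, since $A$ and $P$ have literally the same transitions by $\textbf{Con}_{1,2}$, so the identity posetal relation on the configurations of $P$ (which is trivially downward closed) witnesses the equivalence. Then for clause (2), I would treat each of the six operators (a)--(f) in turn. For the prefix cases (a) and (b), given $P_1\sim_{hhp}P_2$ with downward closed witness $R$, the relation $R\cup\{(\emptyset,\emptyset,\emptyset)\}$ (suitably read over the configurations of $\alpha.P_1$ and $\alpha.P_2$, resp. $(\alpha_1\parallel\cdots\parallel\alpha_n).P_1$ and $(\alpha_1\parallel\cdots\parallel\alpha_n).P_2$) works, and downward closure is preserved because the only configuration below the initial ones after the first step is the empty one. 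For the context cases $+$, $\parallel$, $\setminus L$, $[f]$, the relations built in Theorem \ref{CSHPB} have the form ``$R$ composed with, or lifted through, the context,'' e.g. for $\parallel Q$ one takes $\{(C(P_1'\parallel Q'),\,g,\,C(P_2'\parallel Q'))\}$ built from the isomorphisms supplied by $R$ on the left components together with identity on the $Q$-part; I would argue that lifting a downward closed relation through any of these contexts yields a downward closed relation, essentially because the causality/conflict structure contributed by the context is the same on both sides, so restricting a triple in the lifted relation to a sub-configuration pointwise just restricts to a smaller triple still of the same lifted form.

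The only genuinely delicate point — and the one I expect to be the main obstacle — is the parallel composition clause (d), because hhp-bisimulation is notoriously \emph{not} preserved under the naive pointwise constructions that suffice for hp-bisimulation; the back-and-forth on parallel contexts must respect the possibility of undoing (downward closure) events on either component, and a sub-configuration of $C(P_i'\parallel Q')$ need not decompose as $C(\text{something related})\parallel C(\text{something related})$ unless one is careful about which events of the composite came from which side. I would handle this by defining the witnessing relation not as a bare set of reachable triples but as the downward closure of the set of triples of the form $(C_1\cup D,\ f\cup \mathrm{id}_D,\ C_2\cup D)$ where $(C_1,f,C_2)\in R$ and $D$ ranges over configurations of $Q$ that are consistent with the projections, and then check directly that taking a pointwise sub-triple either shrinks the $D$-part (harmless, since $\mathrm{id}_D$ restricts to $\mathrm{id}_{D'}$) or shrinks the $C_i$-part (harmless by downward closure of $R$), or both. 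Once this decomposition lemma is in place the hp-bisimulation transfer conditions are checked exactly as in Theorem \ref{CSHPB}. The remaining clauses (c), (e), (f) present no new difficulty: summation only ever affects the initial configuration, and restriction and relabelling act injectively enough on events that sub-configurations transfer verbatim. In keeping with the style of the preceding proofs, after setting up this decomposition I would simply say that the detailed verification is analogous to Theorem \ref{CSHPB}, additionally checking downward closure as above, and omit the routine diagram chases.
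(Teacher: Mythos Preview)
Your proposal takes essentially the same approach as the paper: invoke Definition~\ref{HHPB} to note that strongly hhp-bisimulation is a downward closed strong hp-bisimulation, and then defer to the proof of Theorem~\ref{CSHPB} with the additional check of downward closure. The paper's own proof is considerably terser---it simply states this reduction and omits all details---so your clause-by-clause discussion, and in particular your flagging of the parallel composition case (d) as the genuinely delicate point, goes well beyond what the paper supplies.
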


\begin{proof}
From the definition of strongly hhp-bisimulation (see Definition \ref{HHPB}), we know that strongly hhp-bisimulation is downward closed for strong hp-bisimulation.

Similarly to the proof of congruence for strong hp-bisimulation (see Theorem \ref{CSHPB}), we can prove that the congruence holds for strongly hhp-bisimulation, we omit them.
\end{proof}

\subsection{Recursion}

\begin{definition}[Weakly guarded recursive expression]
$X$ is weakly guarded in $E$ if each occurrence of $X$ is with some subexpression $\alpha.F$ or $(\alpha_1\parallel\cdots\parallel\alpha_n).F$ of $E$.
\end{definition}

\begin{lemma}\label{LUS}
If the variables $\widetilde{X}$ are weakly guarded in $E$, and $E\{\widetilde{P}/\widetilde{X}\}\xrightarrow{\{\alpha_1,\cdots,\alpha_n\}}P'$, then $P'$ takes the form $E'\{\widetilde{P}/\widetilde{X}\}$ for some expression $E'$, and moreover, for any $\widetilde{Q}$, $E\{\widetilde{Q}/\widetilde{X}\}\xrightarrow{\{\alpha_1,\cdots,\alpha_n\}}E'\{\widetilde{Q}/\widetilde{X}\}$.
\end{lemma}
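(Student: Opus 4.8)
The plan is to argue by induction on the structure of the recursive expression $E$ (equivalently, on the depth of the inference of the transition $E\{\widetilde{P}/\widetilde{X}\}\xrightarrow{\alpha_1,\cdots,\alpha_n}P'$), reading the statement uniformly for a label that is a single action, a step $\{\alpha_1,\cdots,\alpha_n\}$, or $\tau$, so that all the rules of Table \ref{TRForCTC} are covered at once. In each case one inspects the last rule applied, reads off which subexpression of $E$ actually performed the move, applies the induction hypothesis to that strictly smaller subexpression (in which $\widetilde{X}$ is again weakly guarded), and then rebuilds the witnessing expression $E'$ by placing the surrounding context back around the residuals, finally checking that the same rule fires for the $\widetilde{Q}$-instance.

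The base and prefix cases carry the essential content and use no induction hypothesis. If $E\equiv\textbf{nil}$ there is nothing to prove, and if $E$ is a bare variable the claim is vacuous (a variable from $\widetilde{X}$ is excluded by weak guardedness, and any other variable has no transition after substitution). If $E\equiv A$ is a process constant then $E$ is closed, so $E\{\widetilde{P}/\widetilde{X}\}\equiv A\equiv E\{\widetilde{Q}/\widetilde{X}\}$, and any residual $P'$ of $A$ via $\textbf{Con}_{1,2}$ is itself closed, so $E'\equiv P'$ works. If $E\equiv\alpha.F$ or $E\equiv(\alpha_1\parallel\cdots\parallel\alpha_n).F$, then by $\textbf{Act}_1$, resp.\ $\textbf{Act}_2$, the unique transition goes to $F\{\widetilde{P}/\widetilde{X}\}$; take $E'\equiv F$, and the same rule makes $E\{\widetilde{Q}/\widetilde{X}\}$ move to $F\{\widetilde{Q}/\widetilde{X}\}$. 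This is precisely where the guard is consumed: $\widetilde{X}$ need not be weakly guarded in $F$.

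For the remaining operators the induction hypothesis does the work. If $E\equiv F_1+F_2$, the move comes from $\textbf{Sum}_{1,2}$ applied to some summand $F_i\{\widetilde{P}/\widetilde{X}\}$, where $\widetilde{X}$ is still weakly guarded; the hypothesis yields $E_i'$, and $E'\equiv E_i'$. If $E\equiv F_1\parallel F_2$, the move comes from one of $\textbf{Com}_{1,2,3,4}$; applying the hypothesis to whichever of $F_1,F_2$ moved gives residual expressions, and $E'$ is the corresponding parallel composition (with the non-moving component left in place for $\textbf{Com}_{1,2}$, and both residuals for $\textbf{Com}_{3,4}$). The cases $E\equiv F\setminus L$ and $E\equiv F[f]$ go the same way through $\textbf{Res}_{1,2}$ and $\textbf{Rel}_{1,2}$, with $E'\equiv E_1'\setminus L$, resp.\ $E'\equiv E_1'[f]$; crucially, the side conditions in these rules ($\alpha,\overline{\alpha}\notin L$, $\beta\neq\overline{\alpha}$, $\alpha_i\neq\overline{\alpha_j}$, and so on) constrain only the labels, not the substituted processes, so they transfer verbatim to the $\widetilde{Q}$-instance.

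The one real obstacle is the negative premise $Q\nrightarrow$ in $\textbf{Com}_1$ and $\textbf{Com}_2$: to reinstate the transition for $E\{\widetilde{Q}/\widetilde{X}\}$ I must know that the non-moving component, say $F_2$, still satisfies $F_2\{\widetilde{Q}/\widetilde{X}\}\nrightarrow$. Here weak guardedness is used a second time: because $\widetilde{X}$ is weakly guarded in $F_2$, whether $F_2\{\widetilde{R}/\widetilde{X}\}$ has an initial transition does not depend on $\widetilde{R}$. Concretely, if $F_2\{\widetilde{Q}/\widetilde{X}\}$ could move, then applying the lemma to the strictly smaller $F_2$ with the roles of $\widetilde{P}$ and $\widetilde{Q}$ interchanged would force $F_2\{\widetilde{P}/\widetilde{X}\}$ to move, contradicting the premise of $\textbf{Com}_1$. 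I will therefore either fold this symmetric appeal to the induction hypothesis into the $\textbf{Com}_{1,2}$ cases directly, or record it once at the outset as an auxiliary claim that a weakly guarded expression has initial capabilities independent of the substitution. Everything else is routine bookkeeping about threading contexts through the transition rules.
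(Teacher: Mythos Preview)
Your proposal is correct and follows essentially the same induction on the derivation of the transition as the paper's own proof, with the same case split over the syntactic shape of $E$ and the same choices of $E'$ in each case. One point where you are actually more careful than the paper: in the $\textbf{Com}_{1,2}$ subcases the paper simply asserts the transition $E\{\widetilde{Q}/\widetilde{X}\}\xrightarrow{\alpha}(E_1'\parallel E_2)\{\widetilde{Q}/\widetilde{X}\}$ without discussing why the negative premise $E_2\{\widetilde{Q}/\widetilde{X}\}\nrightarrow$ still holds, whereas you explicitly close this gap by a symmetric appeal to the induction hypothesis on $E_2$ (equivalently, by the auxiliary observation that initial capabilities of a weakly guarded expression are substitution-independent).
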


\begin{proof}
It needs to induct on the depth of the inference of $E\{\widetilde{P}/\widetilde{X}\}\xrightarrow{\{\alpha_1,\cdots,\alpha_n\}}P'$.

\begin{enumerate}
  \item Case $E\equiv Y$, a variable. Then $Y\notin \widetilde{X}$. Since $\widetilde{X}$ are weakly guarded, $Y\{\widetilde{P}/\widetilde{X}\equiv Y\}\nrightarrow$, this case is impossible.
  \item Case $E\equiv\beta.F$. Then we must have $\alpha=\beta$, and $P'\equiv F\{\widetilde{P}/\widetilde{X}\}$, and $E\{\widetilde{Q}/\widetilde{X}\}\equiv \beta.F\{\widetilde{Q}/\widetilde{X}\} \xrightarrow{\beta}F\{\widetilde{Q}/\widetilde{X}\}$, then, let $E'$ be $F$, as desired.
  \item Case $E\equiv(\beta_1\parallel\cdots\parallel\beta_n).F$. Then we must have $\alpha_i=\beta_i$ for $1\leq i\leq n$, and $P'\equiv F\{\widetilde{P}/\widetilde{X}\}$, and $E\{\widetilde{Q}/\widetilde{X}\}\equiv (\beta_1\parallel\cdots\parallel\beta_n).F\{\widetilde{Q}/\widetilde{X}\} \xrightarrow{\{\beta_1,\cdots,\beta_n\}}F\{\widetilde{Q}/\widetilde{X}\}$, then, let $E'$ be $F$, as desired.
  \item Case $E\equiv E_1+E_2$. Then either $E_1\{\widetilde{P}/\widetilde{X}\} \xrightarrow{\{\alpha_1,\cdots,\alpha_n\}}P'$ or $E_2\{\widetilde{P}/\widetilde{X}\} \xrightarrow{\{\alpha_1,\cdots,\alpha_n\}}P'$, then, we can apply this lemma in either case, as desired.
  \item Case $E\equiv E_1\parallel E_2$. There are four possibilities.
  \begin{enumerate}
    \item We may have $E_1\{\widetilde{P}/\widetilde{X}\} \xrightarrow{\alpha}P_1'$ and $E_2\{\widetilde{P}/\widetilde{X}\}\nrightarrow$ with $P'\equiv P_1'\parallel (E_2\{\widetilde{P}/\widetilde{X}\})$, then by applying this lemma, $P_1'$ is of the form $E_1'\{\widetilde{P}/\widetilde{X}\}$, and for any $Q$, $E_1\{\widetilde{Q}/\widetilde{X}\}\xrightarrow{\alpha} E_1'\{\widetilde{Q}/\widetilde{X}\}$. So, $P'$ is of the form $E_1'\parallel E_2\{\widetilde{P}/\widetilde{X}\}$, and for any $Q$, $E\{\widetilde{Q}/\widetilde{X}\}\equiv E_1\{\widetilde{Q}/\widetilde{X}\}\parallel E_2\{\widetilde{Q}/\widetilde{X}\}\xrightarrow{\alpha} (E_1'\parallel E_2)\{\widetilde{Q}/\widetilde{X}\}$, then, let $E'$ be $E_1'\parallel E_2$, as desired.
    \item We may have $E_2\{\widetilde{P}/\widetilde{X}\} \xrightarrow{\alpha}P_2'$ and $E_1\{\widetilde{P}/\widetilde{X}\}\nrightarrow$ with $P'\equiv P_2'\parallel (E_1\{\widetilde{P}/\widetilde{X}\})$, this case can be prove similarly to the above subcase, as desired.
    \item We may have $E_1\{\widetilde{P}/\widetilde{X}\} \xrightarrow{\alpha}P_1'$ and $E_2\{\widetilde{P}/\widetilde{X}\}\xrightarrow{\beta}P_2'$ with $\alpha\neq\overline{\beta}$ and $P'\equiv P_1'\parallel P_2'$, then by applying this lemma, $P_1'$ is of the form $E_1'\{\widetilde{P}/\widetilde{X}\}$, and for any $Q$, $E_1\{\widetilde{Q}/\widetilde{X}\}\xrightarrow{\alpha} E_1'\{\widetilde{Q}/\widetilde{X}\}$; $P_2'$ is of the form $E_2'\{\widetilde{P}/\widetilde{X}\}$, and for any $Q$, $E_2\{\widetilde{Q}/\widetilde{X}\}\xrightarrow{\alpha} E_2'\{\widetilde{Q}/\widetilde{X}\}$. So, $P'$ is of the form $E_1'\parallel E_2'\{\widetilde{P}/\widetilde{X}\}$, and for any $Q$, $E\{\widetilde{Q}/\widetilde{X}\}\equiv E_1\{\widetilde{Q}/\widetilde{X}\}\parallel E_2\{\widetilde{Q}/\widetilde{X}\}\xrightarrow{\{\alpha,\beta\}} (E_1'\parallel E_2')\{\widetilde{Q}/\widetilde{X}\}$, then, let $E'$ be $E_1'\parallel E_2'$, as desired.
    \item We may have $E_1\{\widetilde{P}/\widetilde{X}\} \xrightarrow{l}P_1'$ and $E_2\{\widetilde{P}/\widetilde{X}\}\xrightarrow{\overline{l}}P_2'$ with $P'\equiv P_1'\parallel P_2'$, then by applying this lemma, $P_1'$ is of the form $E_1'\{\widetilde{P}/\widetilde{X}\}$, and for any $Q$, $E_1\{\widetilde{Q}/\widetilde{X}\}\xrightarrow{l} E_1'\{\widetilde{Q}/\widetilde{X}\}$; $P_2'$ is of the form $E_2'\{\widetilde{P}/\widetilde{X}\}$, and for any $Q$, $E_2\{\widetilde{Q}/\widetilde{X}\}\xrightarrow{\overline{l}} E_2'\{\widetilde{Q}/\widetilde{X}\}$. So, $P'$ is of the form $E_1'\parallel E_2'\{\widetilde{P}/\widetilde{X}\}$, and for any $Q$, $E\{\widetilde{Q}/\widetilde{X}\}\equiv E_1\{\widetilde{Q}/\widetilde{X}\}\parallel E_2\{\widetilde{Q}/\widetilde{X}\}\xrightarrow{\tau} (E_1'\parallel E_2')\{\widetilde{Q}/\widetilde{X}\}$, then, let $E'$ be $E_1'\parallel E_2'$, as desired.
  \end{enumerate}
  \item Case $E\equiv F[R]$ and $E\equiv F\setminus L$. These cases can be prove similarly to the above case.
  \item Case $E\equiv C$, an agent constant defined by $C\overset{\text{def}}{=}R$. Then there is no $X\in\widetilde{X}$ occurring in $E$, so $C\xrightarrow{\{\alpha_1,\cdots,\alpha_n\}}P'$, let $E'$ be $P'$, as desired.
\end{enumerate}
\end{proof}

\begin{theorem}[Unique solution of equations for strong step bisimulation]\label{USSSB}
Let the recursive expressions $E_i(i\in I)$ contain at most the variables $X_i(i\in I)$, and let each $X_j(j\in I)$ be weakly guarded in each $E_i$. Then,

If $\widetilde{P}\sim_s \widetilde{E}\{\widetilde{P}/\widetilde{X}\}$ and $\widetilde{Q}\sim_s \widetilde{E}\{\widetilde{Q}/\widetilde{X}\}$, then $\widetilde{P}\sim_s \widetilde{Q}$.
\end{theorem}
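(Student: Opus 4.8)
The plan is to follow Milner's classic argument for unique solutions, adapted to step transitions and $\sim_s$. Define the relation
$$\mathcal{R}=\{(H,K): H\sim_s G\{\widetilde{P}/\widetilde{X}\}\textrm{ and }K\sim_s G\{\widetilde{Q}/\widetilde{X}\}\textrm{ for some expression }G\textrm{ with free variables among }\widetilde{X}\}.$$
Taking $G\equiv X_i$ shows at once that $(P_i,Q_i)\in\mathcal{R}$ for every $i\in I$, so it suffices to prove that $\mathcal{R}$ is a strong step bisimulation relating each $P_i$ to $Q_i$ (identifying processes with their configurations as agreed in Section~\ref{bac}); then $\widetilde{P}\sim_s\widetilde{Q}$ follows immediately.

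First I would record the auxiliary observation that the expression $G'\equiv G\{\widetilde{E}/\widetilde{X}\}$, obtained by replacing every free occurrence of each $X_i$ in $G$ by $E_i$, has all the variables $\widetilde{X}$ weakly guarded: every occurrence of an $X_j$ in $G'$ sits inside a substituted copy of some $E_i$, and $X_j$ is weakly guarded there by hypothesis. Next, using the hypotheses $P_i\sim_s E_i\{\widetilde{P}/\widetilde{X}\}$ together with the congruence property of $\sim_s$ (Theorem~\ref{CSSB}, applied componentwise through the context $G$), one obtains $G\{\widetilde{P}/\widetilde{X}\}\sim_s (G\{\widetilde{E}/\widetilde{X}\})\{\widetilde{P}/\widetilde{X}\}=G'\{\widetilde{P}/\widetilde{X}\}$, and symmetrically $G\{\widetilde{Q}/\widetilde{X}\}\sim_s G'\{\widetilde{Q}/\widetilde{X}\}$.

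Now for the bisimulation clause: suppose $(H,K)\in\mathcal{R}$ via $G$, and $H\xrightarrow{X_H}H'$ (a step). Since $H\sim_s G\{\widetilde{P}/\widetilde{X}\}\sim_s G'\{\widetilde{P}/\widetilde{X}\}$, there is a step $G'\{\widetilde{P}/\widetilde{X}\}\xrightarrow{X'}P''$ with $X_H\sim X'$ and $H'\sim_s P''$. Because $\widetilde{X}$ are weakly guarded in $G'$, Lemma~\ref{LUS} applies: $P''$ has the form $G''\{\widetilde{P}/\widetilde{X}\}$ for some $G''$, and moreover $G'\{\widetilde{Q}/\widetilde{X}\}\xrightarrow{X'}G''\{\widetilde{Q}/\widetilde{X}\}$. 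Since $K\sim_s G\{\widetilde{Q}/\widetilde{X}\}\sim_s G'\{\widetilde{Q}/\widetilde{X}\}$, this move is matched by $K\xrightarrow{X_K}K'$ with $X'\sim X_K$ and $K'\sim_s G''\{\widetilde{Q}/\widetilde{X}\}$. Then $H'\sim_s G''\{\widetilde{P}/\widetilde{X}\}$ and $K'\sim_s G''\{\widetilde{Q}/\widetilde{X}\}$ witness $(H',K')\in\mathcal{R}$, while $X_H\sim X'\sim X_K$ gives $X_H\sim X_K$; the converse clause is symmetric. Hence $\mathcal{R}$ is a strong step bisimulation and the theorem follows.

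I expect the main obstacle to be exactly the case where $G$ is not weakly guarded in $\widetilde{X}$ (e.g.\ $G\equiv X_i$ itself), so that Lemma~\ref{LUS} cannot be invoked on $G$ directly; the remedy is the detour through $G'\equiv G\{\widetilde{E}/\widetilde{X}\}$, which is legitimate only because $\sim_s$ is a congruence (Theorem~\ref{CSSB}) and because one checks carefully that $G'$ inherits weak guardedness from the defining expressions. A secondary point needing attention is that Lemma~\ref{LUS} is stated for step labels $\alpha_1,\cdots,\alpha_n$ and transfers them verbatim, so the step/pomset relation is preserved along the chain $X_H\sim X'\sim X_K$ with no new concurrency obligations arising, since the labels themselves never change in the Lemma~\ref{LUS} step.
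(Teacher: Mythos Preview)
Your proposal is correct and takes essentially the same approach as the paper: both arguments hinge on Lemma~\ref{LUS} applied after a one-step unfolding of the variables via the defining expressions $\widetilde{E}$, with the paper organising this as an induction on the depth of the inference $E\{\widetilde{P}/\widetilde{X}\}\xrightarrow{\alpha_1,\cdots,\alpha_n}P'$ and a case analysis on the shape of $E$ (the case $E\equiv X_i$ being where the unfolding to $E_i$ happens), while you make the candidate bisimulation $\mathcal{R}$ explicit and invoke congruence (Theorem~\ref{CSSB}) to pass uniformly to $G'\equiv G\{\widetilde{E}/\widetilde{X}\}$. Your presentation is the cleaner of the two, but the underlying proof idea is identical.
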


\begin{proof}
It is sufficient to induct on the depth of the inference of $E\{\widetilde{P}/\widetilde{X}\}\xrightarrow{\{\alpha_1,\cdots,\alpha_n\}}P'$.

\begin{enumerate}
  \item Case $E\equiv X_i$. Then we have $E\{\widetilde{P}/\widetilde{X}\}\equiv P_i\xrightarrow{\{\alpha_1,\cdots,\alpha_n\}}P'$, since $P_i\sim_s E_i\{\widetilde{P}/\widetilde{X}\}$, we have $E_i\{\widetilde{P}/\widetilde{X}\}\xrightarrow{\{\alpha_1,\cdots,\alpha_n\}}P''\sim_s P'$. Since $\widetilde{X}$ are weakly guarded in $E_i$, by Lemma \ref{LUS}, $P''\equiv E'\{\widetilde{P}/\widetilde{X}\}$ and $E_i\{\widetilde{P}/\widetilde{X}\}\xrightarrow{\{\alpha_1,\cdots,\alpha_n\}} E'\{\widetilde{P}/\widetilde{X}\}$. Since $E\{\widetilde{Q}/\widetilde{X}\}\equiv X_i\{\widetilde{Q}/\widetilde{X}\} \equiv Q_i\sim_s E_i\{\widetilde{Q}/\widetilde{X}\}$, $E\{\widetilde{Q}/\widetilde{X}\}\xrightarrow{\{\alpha_1,\cdots,\alpha_n\}}Q'\sim_s E'\{\widetilde{Q}/\widetilde{X}\}$. So, $P'\sim_s Q'$, as desired.
  \item Case $E\equiv\alpha.F$. This case can be proven similarly.
  \item Case $E\equiv(\alpha_1\parallel\cdots\parallel\alpha_n).F$. This case can be proven similarly.
  \item Case $E\equiv E_1+E_2$. We have $E_i\{\widetilde{P}/\widetilde{X}\} \xrightarrow{\{\alpha_1,\cdots,\alpha_n\}}P'$, $E_i\{\widetilde{Q}/\widetilde{X}\} \xrightarrow{\{\alpha_1,\cdots,\alpha_n\}}Q'$, then, $P'\sim_s Q'$, as desired.
  \item Case $E\equiv E_1\parallel E_2$, $E\equiv F[R]$ and $E\equiv F\setminus L$, $E\equiv C$. These cases can be prove similarly to the above case.
\end{enumerate}
\end{proof}

\begin{theorem}[Unique solution of equations for strong pomset bisimulation]\label{USSPB}
Let the recursive expressions $E_i(i\in I)$ contain at most the variables $X_i(i\in I)$, and let each $X_j(j\in I)$ be weakly guarded in each $E_i$. Then,

If $\widetilde{P}\sim_p \widetilde{E}\{\widetilde{P}/\widetilde{X}\}$ and $\widetilde{Q}\sim_p \widetilde{E}\{\widetilde{Q}/\widetilde{X}\}$, then $\widetilde{P}\sim_p \widetilde{Q}$.
\end{theorem}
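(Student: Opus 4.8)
The plan is to reuse, almost verbatim, the argument for the step case (Theorem~\ref{USSSB}), exploiting the device that recurs throughout this section: a pomset transition $\xrightarrow{p}$ decomposes into a sequence of step transitions that respects the causal order of $p$, and in the extreme case where all events of $p$ are pairwise concurrent it \emph{is} a single step, a situation already covered by Theorem~\ref{USSSB}. So the only genuinely new work is to account for causal dependencies inside a pomset; following the paper's convention it suffices to treat the representative pomset $p=\{\alpha,\beta:\alpha.\beta\}$, for which $\xrightarrow{p}=\xrightarrow{\alpha}\xrightarrow{\beta}$, and then observe that the general case is obtained by iterating this.

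First I would prove a pomset analogue of Lemma~\ref{LUS}: if the variables $\widetilde{X}$ are weakly guarded in $E$ and $E\{\widetilde{P}/\widetilde{X}\}\xrightarrow{p}P'$ for a pomset $p$, then $P'\equiv E'\{\widetilde{P}/\widetilde{X}\}$ for some expression $E'$, and $E\{\widetilde{Q}/\widetilde{X}\}\xrightarrow{p}E'\{\widetilde{Q}/\widetilde{X}\}$ for every $\widetilde{Q}$. To see this, write $\xrightarrow{p}$ as a composition $\xrightarrow{p_1}\xrightarrow{p_2}\cdots\xrightarrow{p_k}$ of step transitions compatible with the order of $p$. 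Apply Lemma~\ref{LUS} to the first step $E\{\widetilde{P}/\widetilde{X}\}\xrightarrow{p_1}$ to get $E_1'$ with the transferred transition under any $\widetilde{Q}$; from then on the $\widetilde{X}$ need no longer be guarded in $E_1'$, but each remaining step transition is simply tracked structurally (as in the proof of Lemma~\ref{LUS}, without invoking guardedness) so that it remains available after the substitution $\widetilde{Q}/\widetilde{X}$, and the "has the form $E'\{\widetilde{P}/\widetilde{X}\}$" property is preserved along the whole sequence.

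With that lemma available, I would run the same induction on the depth of the inference of $E\{\widetilde{P}/\widetilde{X}\}\xrightarrow{p}P'$ as in Theorem~\ref{USSSB}. In the key case $E\equiv X_i$ we have $P_i\xrightarrow{p}P'$; since $\widetilde{P}\sim_p\widetilde{E}\{\widetilde{P}/\widetilde{X}\}$ gives $P_i\sim_p E_i\{\widetilde{P}/\widetilde{X}\}$, we obtain $E_i\{\widetilde{P}/\widetilde{X}\}\xrightarrow{p}P''\sim_p P'$; the pomset lemma rewrites $P''$ as $E'\{\widetilde{P}/\widetilde{X}\}$ with $E_i\{\widetilde{Q}/\widetilde{X}\}\xrightarrow{p}E'\{\widetilde{Q}/\widetilde{X}\}$; and $Q_i\sim_p E_i\{\widetilde{Q}/\widetilde{X}\}$ then yields a matching $Q'$ with $Q'\sim_p E'\{\widetilde{Q}/\widetilde{X}\}$, so $P'\sim_p Q'$. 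The cases $E\equiv\alpha.F$, $E\equiv(\alpha_1\parallel\cdots\parallel\alpha_n).F$, $E\equiv E_1+E_2$, $E\equiv E_1\parallel E_2$, $E\equiv F[R]$, $E\equiv F\setminus L$ and $E\equiv C$ go through exactly as before (the parallel case additionally uses that a pomset transition of $E_1\parallel E_2$ splits into pomset transitions of the components plus possible communications, as in Proposition~\ref{NELSPB}). Concretely, one verifies that the relation $\{(G\{\widetilde{P}/\widetilde{X}\},G\{\widetilde{Q}/\widetilde{X}\}) : G \text{ an expression}\}$, closed under $\sim_p$ on each side, is a strong pomset bisimulation containing $(\emptyset,\emptyset)$.

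I expect the main obstacle to be precisely the pomset analogue of Lemma~\ref{LUS}: one must argue carefully that decomposing $\xrightarrow{p}$ into steps consistent with the causal order is legitimate, that weak guardedness is needed only to license the \emph{first} step, and that the "residual expression" shape is stable under all subsequent steps so that the transfer to $\widetilde{Q}/\widetilde{X}$ survives the whole sequence. Once that bookkeeping is in place the induction is a routine adaptation of Theorem~\ref{USSSB}, and the remaining cases can safely be omitted as in the step and hp/hhp versions.
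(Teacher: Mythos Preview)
Your proposal is correct and follows essentially the same route as the paper: both reduce the pomset case to the step case (Theorem~\ref{USSSB}) by decomposing a pomset transition into a sequence of step transitions, taking $p=\{\alpha,\beta:\alpha.\beta\}$ with $\xrightarrow{p}=\xrightarrow{\alpha}\xrightarrow{\beta}$ as the representative causal case, and then invoking the step argument. Your write-up is in fact more careful than the paper's---you explicitly isolate the pomset analogue of Lemma~\ref{LUS} and flag the guardedness-only-at-the-first-step issue---whereas the paper simply states the decomposition and writes ``similarly \ldots\ we omit them.''
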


\begin{proof}
From the definition of strong pomset bisimulation (see Definition \ref{PSB}), we know that strong pomset bisimulation is defined by pomset transitions, which are labeled by pomsets. In a pomset transition, the events in the pomset are either within causality relations (defined by the prefix $.$) or in concurrency (implicitly defined by $.$ and $+$, and explicitly defined by $\parallel$), of course, they are pairwise consistent (without conflicts). In Theorem \ref{USSSB}, we have already proven the case that all events are pairwise concurrent, so, we only need to prove the case of events in causality. Without loss of generality, we take a pomset of $p=\{\alpha,\beta:\alpha.\beta\}$. Then the pomset transition labeled by the above $p$ is just composed of one single event transition labeled by $\alpha$ succeeded by another single event transition labeled by $\beta$, that is, $\xrightarrow{p}=\xrightarrow{\alpha}\xrightarrow{\beta}$.

Similarly to the proof of unique solution of equations for strong step bisimulation (see Theorem \ref{USSSB}), we can prove that the unique solution of equations holds for strong pomset bisimulation, we omit them.
\end{proof}

\begin{theorem}[Unique solution of equations for strong hp-bisimulation]\label{USSHPB}
Let the recursive expressions $E_i(i\in I)$ contain at most the variables $X_i(i\in I)$, and let each $X_j(j\in I)$ be weakly guarded in each $E_i$. Then,

If $\widetilde{P}\sim_{hp} \widetilde{E}\{\widetilde{P}/\widetilde{X}\}$ and $\widetilde{Q}\sim_{hp} \widetilde{E}\{\widetilde{Q}/\widetilde{X}\}$, then $\widetilde{P}\sim_{hp} \widetilde{Q}$.
\end{theorem}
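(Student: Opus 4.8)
The plan is to follow the pattern established for the step-bisimulation case in Theorem~\ref{USSSB}, reducing the history-preserving version to it via the now-standard observation used throughout this section. Concretely, strong hp-bisimulation differs from strong step bisimulation only in that the matching of transitions is carried along a bookkeeping isomorphism $f:C_1\to C_2$ between configurations, with each single-event move $C_1\xrightarrow{e_1}C_1'$ matched by $C_2\xrightarrow{e_2}C_2'$ and the isomorphism updated to $f[e_1\mapsto e_2]$. So I would first restate Lemma~\ref{LUS} in the single-event form it already has (it is stated for a transition $\xrightarrow{\alpha_1,\cdots,\alpha_n}$, which specializes to a single event), and note that its conclusion — that $P'$ has the shape $E'\{\widetilde P/\widetilde X\}$ and that the same move is available under any substitution $\widetilde Q/\widetilde X$ — is exactly what is needed to make the induction go through regardless of which flavor of bisimulation we track.

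Next I would carry out the induction on the depth of the inference of $E\{\widetilde P/\widetilde X\}\xrightarrow{e}P'$, exactly as in the proof of Theorem~\ref{USSSB}, but now maintaining a posetal triple. Assume $(C(\widetilde P),f,C(\widetilde E\{\widetilde P/\widetilde X\}))$ and $(C(\widetilde E\{\widetilde Q/\widetilde X\}),g,C(\widetilde Q))$ are in the respective hp-bisimulations witnessing $\widetilde P\sim_{hp}\widetilde E\{\widetilde P/\widetilde X\}$ and $\widetilde Q\sim_{hp}\widetilde E\{\widetilde Q/\widetilde X\}$. The base case $E\equiv X_i$ is the only interesting one: a move $P_i\xrightarrow{e}P'$ is matched, using $P_i\sim_{hp}E_i\{\widetilde P/\widetilde X\}$, by $E_i\{\widetilde P/\widetilde X\}\xrightarrow{e'}P''$ with $(C(P'),h[e\mapsto e'],C(P''))$ in the relation; weak guardedness plus Lemma~\ref{LUS} gives $P''\equiv E'\{\widetilde P/\widetilde X\}$ and the same derived transition $E_i\{\widetilde Q/\widetilde X\}\xrightarrow{e'}E'\{\widetilde Q/\widetilde X\}$; then $Q_i\sim_{hp}E_i\{\widetilde Q/\widetilde X\}$ supplies the matching $Q_i\xrightarrow{e''}Q'$, and composing the three isomorphisms (update by $e\mapsto e',\ e'\mapsto e'',$ hence $e\mapsto e''$) yields $(C(P'),\cdot,C(Q'))$ in the candidate relation. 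The remaining cases ($E\equiv\alpha.F$, $(\alpha_1\parallel\cdots\parallel\alpha_n).F$, $E_1+E_2$, $E_1\parallel E_2$, $F[R]$, $F\setminus L$, $C$) decompose the transition and invoke the induction hypothesis componentwise, with the isomorphism on the composite configuration being the (disjoint) union of the component isomorphisms; for the parallel case one uses the four subcases of rule $\textbf{Com}$ just as in Lemma~\ref{LUS}. One then checks that the union over all such triples, together with the initial triple $(\emptyset,\emptyset,\emptyset)$, forms a strong hp-bisimulation, and that it is symmetric by the same argument run in the other direction.

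Finally, for pomset transitions labelled by a genuine pomset rather than a step, I would reuse the reduction already made explicit in Proposition~\ref{NELSPB} and Theorem~\ref{USSPB}: a pomset transition $\xrightarrow{p}$ with $p=\{\alpha,\beta:\alpha.\beta\}$ (and more generally any pomset) factors as a sequence of single-event transitions $\xrightarrow{\alpha}\xrightarrow{\beta}$, so the single-event induction above covers the causal part, while the concurrent part is already handled by the step-level reasoning. Since hp-bisimulation is defined event-by-event anyway, this factorization is automatic and requires no extra work beyond appealing to it.

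I expect the only real subtlety to be purely bookkeeping: making sure the isomorphism one builds on the right-hand configuration genuinely respects causality and conflict (i.e.\ is an isomorphism of the induced partial orders, not merely a bijection), and that the three-way composition of updated isomorphisms in the base case is well-defined — in particular that the intermediate event $e'$ in $E_i\{\widetilde P/\widetilde X\}$ matches up consistently on both sides. This is exactly the ``additional condition on hp-bisimulation'' that the paper flags in the proofs of Theorems~\ref{CSHPB} and \ref{NELSHPB}, and here too it is routine to verify given that Lemma~\ref{LUS} hands us the \emph{same} derived transition under the substituted processes, so the event structure on the $E'\{\widetilde P/\widetilde X\}$ side and the $E'\{\widetilde Q/\widetilde X\}$ side coincide on the fresh part. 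Hence, mirroring the pattern of the earlier proofs, I would simply say: similarly to the proof of unique solution of equations for strong pomset bisimulation (Theorem~\ref{USSPB}), we can prove the unique solution of equations for strong hp-bisimulation, checking additionally the posetal conditions as above.
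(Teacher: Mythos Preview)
Your proposal is correct and follows essentially the same approach as the paper: both reduce the hp-case to the already-proved pomset (and ultimately step) case via Theorem~\ref{USSPB} and Lemma~\ref{LUS}, adding only the bookkeeping of the posetal isomorphism $f'=f[\alpha\mapsto\alpha]$ at each single-event step. Your write-up is considerably more detailed than the paper's, which merely states the definition of the posetal product, notes the update $f'=f[\alpha\mapsto\alpha]$, and then defers to Theorem~\ref{USSPB} with the remark that one ``just need[s] additionally to check the above conditions on hp-bisimulation, we omit them''; your explicit treatment of the base case $E\equiv X_i$ and the composition of the three isomorphisms is precisely what the paper leaves implicit.
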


\begin{proof}
From the definition of strong hp-bisimulation (see Definition \ref{HHPB}), we know that strong hp-bisimulation is defined on the posetal product $(C_1,f,C_2),f:C_1\rightarrow C_2\textrm{ isomorphism}$. Two processes $P$ related to $C_1$ and $Q$ related to $C_2$, and $f:C_1\rightarrow C_2\textrm{ isomorphism}$. Initially, $(C_1,f,C_2)=(\emptyset,\emptyset,\emptyset)$, and $(\emptyset,\emptyset,\emptyset)\in\sim_{hp}$. When $P\xrightarrow{\alpha}P'$ ($C_1\xrightarrow{\alpha}C_1'$), there will be $Q\xrightarrow{\alpha}Q'$ ($C_2\xrightarrow{\alpha}C_2'$), and we define $f'=f[\alpha\mapsto \alpha]$. Then, if $(C_1,f,C_2)\in\sim_{hp}$, then $(C_1',f',C_2')\in\sim_{hp}$.

Similarly to the proof of unique solution of equations for strong pomset bisimulation (see Theorem \ref{USSPB}), we can prove that the unique solution of equations holds for strong hp-bisimulation, we just need additionally to check the above conditions on hp-bisimulation, we omit them.
\end{proof}

\begin{theorem}[Unique solution of equations for strongly hhp-bisimulation]\label{USSHHPB}
Let the recursive expressions $E_i(i\in I)$ contain at most the variables $X_i(i\in I)$, and let each $X_j(j\in I)$ be weakly guarded in each $E_i$. Then,

If $\widetilde{P}\sim_{hhp} \widetilde{E}\{\widetilde{P}/\widetilde{X}\}$ and $\widetilde{Q}\sim_{hhp} \widetilde{E}\{\widetilde{Q}/\widetilde{X}\}$, then $\widetilde{P}\sim_{hhp} \widetilde{Q}$.
\end{theorem}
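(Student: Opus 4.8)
The plan is to follow the two-step pattern used throughout Section~\ref{stcb}: first reduce the statement to the corresponding result for strong hp-bisimulation (Theorem~\ref{USSHPB}), and then upgrade it by checking that the witnessing posetal relation one constructs is downward closed, which is exactly what distinguishes $\sim_{hhp}$ from $\sim_{hp}$ (Definition~\ref{HHPB}).

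First I would recall the core inductive argument behind Theorem~\ref{USSHPB}. Given $\widetilde P\sim_{hhp}\widetilde E\{\widetilde P/\widetilde X\}$ and $\widetilde Q\sim_{hhp}\widetilde E\{\widetilde Q/\widetilde X\}$, one builds a posetal relation $R$ containing all triples $(C(H\{\widetilde P/\widetilde X\}),g,C(H\{\widetilde Q/\widetilde X\}))$, where $H$ ranges over the expressions reachable from the $E_i$ and $g$ is the isomorphism induced by the common syntactic skeleton, composed with the isomorphisms supplied by the hypotheses $\widetilde P\sim_{hp}\widetilde E\{\widetilde P/\widetilde X\}$ and $\widetilde Q\sim_{hp}\widetilde E\{\widetilde Q/\widetilde X\}$. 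To show $R$ is a strong hp-bisimulation one inducts on the depth of the inference of a single-event transition $E\{\widetilde P/\widetilde X\}\xrightarrow{e}P'$, using weak guardedness together with Lemma~\ref{LUS} to conclude $P'\equiv E'\{\widetilde P/\widetilde X\}$ and $E\{\widetilde Q/\widetilde X\}\xrightarrow{e}E'\{\widetilde Q/\widetilde X\}$, so the matching move on the $\widetilde Q$-side exists and keeps us inside $R$ with the event map extended by $f[e\mapsto e]$; the case analysis runs over $E\equiv X_i$, $E\equiv\alpha.F$, $E\equiv(\alpha_1\parallel\cdots\parallel\alpha_n).F$, $E\equiv E_1+E_2$, $E\equiv E_1\parallel E_2$, $E\equiv F\setminus L$, $E\equiv F[f]$ and $E\equiv C$, just as in the congruence proof (Theorem~\ref{CSHHPB}).

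Then I would add the one genuinely new ingredient: verifying that $R$ is downward closed. Suppose $(C_1,g,C_2)\in R$ with skeleton $H$, and let $(C_1',g',C_2')\subseteq(C_1,g,C_2)$ pointwise. A causally-closed subset $C_1'$ of $C(H\{\widetilde P/\widetilde X\})$ corresponds to a prefix of the run that produced it, hence again to a reachable expression $H'$ with $C_1'=C(H'\{\widetilde P/\widetilde X\})$ and $g'$ the induced isomorphism; since the defining hypotheses are $\sim_{hhp}$, and not merely $\sim_{hp}$, the auxiliary isomorphisms they contribute are themselves drawn from downward-closed relations, so the restricted triple is again of the admissible form and lies in $R$. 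Thus $R$ is a downward-closed strong hp-bisimulation relating $\widetilde P$ and $\widetilde Q$, which yields $\widetilde P\sim_{hhp}\widetilde Q$.

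The main obstacle I anticipate is precisely this downward-closure step in the presence of Composition and of constant unfolding (the cases $E\equiv E_1\parallel E_2$ and $E\equiv C$): one must ensure that restricting a configuration of a parallel product to a causally-closed subset still factors as a parallel product of reachable sub-expressions on both the $\widetilde P$- and $\widetilde Q$-sides compatibly with $g$, since hhp-bisimilarity is notoriously sensitive to exactly such ``backtracking'' behaviour. Leaning on the fact that the hypotheses already supply downward-closed witnesses — so one never has to manufacture downward closure, only propagate it along the induction — is what makes the argument go through; I would spell this out in detail only for the parallel case and the constant case and note that the remaining cases are routine, in keeping with the style of the preceding proofs.
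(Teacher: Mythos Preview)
Your proposal is correct and follows essentially the same approach as the paper: reduce to the strong hp-bisimulation case (Theorem~\ref{USSHPB}) and then add the downward-closure check that distinguishes $\sim_{hhp}$ from $\sim_{hp}$. The paper's own proof is in fact far terser than yours --- it merely notes that strongly hhp-bisimulation is a downward-closed strong hp-bisimulation and says ``similarly to the proof of Theorem~\ref{USSHPB}, we omit them'' --- so your write-up, which actually spells out how the posetal relation is built via Lemma~\ref{LUS}, how the event map is extended by $f[e\mapsto e]$, and where the downward-closure of the hypotheses is used to propagate downward-closure through the parallel and constant cases, supplies considerably more justification than the paper itself gives.
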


\begin{proof}
From the definition of strongly hhp-bisimulation (see Definition \ref{HHPB}), we know that strongly hhp-bisimulation is downward closed for strong hp-bisimulation.

Similarly to the proof of unique solution of equations for strong hp-bisimulation (see Theorem \ref{USSHPB}), we can prove that the unique solution of equations holds for strongly hhp-bisimulation, we omit them.
\end{proof}

\section{Weakly Truly Concurrent Bisimulations}\label{wtcb}

\subsection{Basic Definitions}\label{WTCC}

In this subsection, we introduce several weakly truly concurrent bisimulation equivalences, including weak pomset bisimulation, weak step bisimulation, weak history-preserving (hp-)bisimulation and weakly hereditary history-preserving (hhp-)bisimulation.

\begin{definition}[Weak pomset transitions and weak step]\label{WPT}
Let $\mathcal{E}$ be a PES and let $C\in\mathcal{C}(\mathcal{E})$, and $\emptyset\neq X\subseteq \hat{\mathbb{E}}$, if $C\cap X=\emptyset$ and $\hat{C'}=\hat{C}\cup X\in\mathcal{C}(\mathcal{E})$, then $C\xRightarrow{X} C'$ is called a weak pomset transition from $C$ to $C'$, where we define $\xRightarrow{e}\triangleq\xrightarrow{\tau^*}\xrightarrow{e}\xrightarrow{\tau^*}$. And $\xRightarrow{X}\triangleq\xrightarrow{\tau^*}\xrightarrow{e}\xrightarrow{\tau^*}$, for every $e\in X$. When the events in $X$ are pairwise concurrent, we say that $C\xRightarrow{X}C'$ is a weak step.
\end{definition}

\begin{definition}[Weak pomset, step bisimulation]\label{WPSB}
Let $\mathcal{E}_1$, $\mathcal{E}_2$ be PESs. A weak pomset bisimulation is a relation $R\subseteq\mathcal{C}(\mathcal{E}_1)\times\mathcal{C}(\mathcal{E}_2)$, such that if $(C_1,C_2)\in R$, and $C_1\xRightarrow{X_1}C_1'$ then $C_2\xRightarrow{X_2}C_2'$, with $X_1\subseteq \hat{\mathbb{E}_1}$, $X_2\subseteq \hat{\mathbb{E}_2}$, $X_1\sim X_2$ and $(C_1',C_2')\in R$, and vice-versa. We say that $\mathcal{E}_1$, $\mathcal{E}_2$ are weak pomset bisimilar, written $\mathcal{E}_1\approx_p\mathcal{E}_2$, if there exists a weak pomset bisimulation $R$, such that $(\emptyset,\emptyset)\in R$. By replacing weak pomset transitions with weak steps, we can get the definition of weak step bisimulation. When PESs $\mathcal{E}_1$ and $\mathcal{E}_2$ are weak step bisimilar, we write $\mathcal{E}_1\approx_s\mathcal{E}_2$.
\end{definition}

\begin{definition}[Weakly posetal product]
Given two PESs $\mathcal{E}_1$, $\mathcal{E}_2$, the weakly posetal product of their configurations, denoted $\mathcal{C}(\mathcal{E}_1)\overline{\times}\mathcal{C}(\mathcal{E}_2)$, is defined as

$$\{(C_1,f,C_2)|C_1\in\mathcal{C}(\mathcal{E}_1),C_2\in\mathcal{C}(\mathcal{E}_2),f:\hat{C_1}\rightarrow \hat{C_2} \textrm{ isomorphism}\}.$$

A subset $R\subseteq\mathcal{C}(\mathcal{E}_1)\overline{\times}\mathcal{C}(\mathcal{E}_2)$ is called a weakly posetal relation. We say that $R$ is downward closed when for any $(C_1,f,C_2),(C_1',f,C_2')\in \mathcal{C}(\mathcal{E}_1)\overline{\times}\mathcal{C}(\mathcal{E}_2)$, if $(C_1,f,C_2)\subseteq (C_1',f',C_2')$ pointwise and $(C_1',f',C_2')\in R$, then $(C_1,f,C_2)\in R$.

For $f:X_1\rightarrow X_2$, we define $f[x_1\mapsto x_2]:X_1\cup\{x_1\}\rightarrow X_2\cup\{x_2\}$, $z\in X_1\cup\{x_1\}$,(1)$f[x_1\mapsto x_2](z)=
x_2$,if $z=x_1$;(2)$f[x_1\mapsto x_2](z)=f(z)$, otherwise. Where $X_1\subseteq \hat{\mathbb{E}_1}$, $X_2\subseteq \hat{\mathbb{E}_2}$, $x_1\in \hat{\mathbb{E}}_1$, $x_2\in \hat{\mathbb{E}}_2$. Also, we define $f(\tau^*)=f(\tau^*)$.
\end{definition}

\begin{definition}[Weak (hereditary) history-preserving bisimulation]\label{WHHPB}
A weak history-preserving (hp-) bisimulation is a weakly posetal relation $R\subseteq\mathcal{C}(\mathcal{E}_1)\overline{\times}\mathcal{C}(\mathcal{E}_2)$ such that if $(C_1,f,C_2)\in R$, and $C_1\xRightarrow{e_1} C_1'$, then $C_2\xRightarrow{e_2} C_2'$, with $(C_1',f[e_1\mapsto e_2],C_2')\in R$, and vice-versa. $\mathcal{E}_1,\mathcal{E}_2$ are weak history-preserving (hp-)bisimilar and are written $\mathcal{E}_1\approx_{hp}\mathcal{E}_2$ if there exists a hp-bisimulation $R$ such that $(\emptyset,\emptyset,\emptyset)\in R$.

A weakly hereditary history-preserving (hhp-)bisimulation is a downward closed weak hp-bisimulation. $\mathcal{E}_1,\mathcal{E}_2$ are weakly hereditary history-preserving (hhp-)bisimilar and are written $\mathcal{E}_1\approx_{hhp}\mathcal{E}_2$.
\end{definition}

\begin{proposition}[Weakly concurrent behavioral equivalence]\label{WSCBE}
(Strongly) concurrent behavioral equivalences imply weakly concurrent behavioral equivalences. That is, $\sim_p$ implies $\approx_p$, $\sim_s$ implies $\approx_s$, $\sim_{hp}$ implies $\approx_{hp}$, $\sim_{hhp}$ implies $\approx_{hhp}$.
\end{proposition}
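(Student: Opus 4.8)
The plan is to exploit the obvious fact that every strong transition is a degenerate weak transition: since $\xRightarrow{e}\triangleq\xrightarrow{\tau^*}\xrightarrow{e}\xrightarrow{\tau^*}$ and the $\tau^*$ prefix and suffix may be empty, we have $\xrightarrow{e}\;\subseteq\;\xRightarrow{e}$, and likewise $\xrightarrow{X}\;\subseteq\;\xRightarrow{\widehat{X}}$ for pomset and step transitions once any silent events are stripped from $X$. Consequently it suffices to show that each strong bisimulation \emph{is} (or canonically induces) a weak bisimulation of the corresponding flavor; the four implications $\sim_p\Rightarrow\approx_p$, $\sim_s\Rightarrow\approx_s$, $\sim_{hp}\Rightarrow\approx_{hp}$, $\sim_{hhp}\Rightarrow\approx_{hhp}$ then follow immediately from the definitions, the required seed $(\emptyset,\emptyset)$ (resp.\ $(\emptyset,\emptyset,\emptyset)$) already being available.

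First I would treat the pomset and step cases. Let $R$ be a strong pomset bisimulation with $(\emptyset,\emptyset)\in R$; I claim $R$ itself is a weak pomset bisimulation. Given $(C_1,C_2)\in R$ and a weak move $C_1\xRightarrow{X_1}C_1'$, Definition \ref{WPT} decomposes this move into a finite chain of strong single-event transitions: a $\tau^*$-prefix, then the events of $X_1$ each padded by internal $\tau$-steps, then a $\tau^*$-suffix. I would simulate this chain one strong transition at a time using the defining property of $R$, observing that a $\tau$-labelled transition on the left must be answered by a $\tau$-labelled transition on the right (pomset isomorphism preserves labels and $\lambda(\tau)=\tau$), while an observable transition is answered by an observably-isomorphic one. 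Concatenating the answering transitions yields $C_2\xRightarrow{X_2}C_2'$ with $X_1\sim X_2$ on the observable parts and $(C_1',C_2')\in R$; the converse direction is symmetric. Replacing ``pomset'' by ``step'' throughout (the events of $X_1$ now being pairwise concurrent) gives $\sim_s\Rightarrow\approx_s$.

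Next I would handle the (hereditary) history-preserving cases. Given a strong hp-bisimulation $R\subseteq\mathcal{C}(\mathcal{E}_1)\overline{\times}\mathcal{C}(\mathcal{E}_2)$, I would put $\widehat{R}=\{(C_1,f\upharpoonright\widehat{C_1},C_2):(C_1,f,C_2)\in R\}$, first checking that $f\upharpoonright\widehat{C_1}$ is an isomorphism onto $\widehat{C_2}$ (since $f$ is an isomorphism respecting labels, it maps silent events to silent events). Then, exactly as in the pomset case, I would simulate a weak move $C_1\xRightarrow{e_1}C_1'$ by a chain of strong moves, matching $\tau$-steps with $\tau$-steps — these extend a configuration only by silent events, hence do not alter the restricted isomorphism — and the observable step $e_1$ with some $e_2$, obtaining $C_2\xRightarrow{e_2}C_2'$ with $(C_1',(f\upharpoonright\widehat{C_1})[e_1\mapsto e_2],C_2')\in\widehat{R}$; thus $\widehat{R}$ is a weak hp-bisimulation and $\sim_{hp}\Rightarrow\approx_{hp}$. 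For hhp I would additionally verify that $\widehat{R}$ inherits downward closure from $R$, i.e.\ that the pointwise order on (weakly) posetal triples is compatible with restricting the isomorphism to observable events; this gives $\sim_{hhp}\Rightarrow\approx_{hhp}$.

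The main obstacle I anticipate is the silent-event bookkeeping rather than any deep structural difficulty: one must be careful that a strong pomset or step transition may carry $\tau$-events inside its label set, which in the weak setting are to be absorbed into the $\xrightarrow{\tau^*}$ padding of Definition \ref{WPT}, so the ``obvious'' inclusion of strong into weak transitions needs those silent events stripped before the labels are compared; and for the hp/hhp cases one must confirm that restricting the order-isomorphism to $\widehat{C_1}$ is well defined, still an isomorphism, and preserves downward closure. These points are routine but should be stated explicitly. The remainder is the standard ``strong implies weak'' argument by induction on the length of the simulating chain.
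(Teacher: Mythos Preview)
Your proposal is correct and rests on the same observation as the paper: a strong transition $\xrightarrow{e}$ is the special case $\xrightarrow{\epsilon}\xrightarrow{e}\xrightarrow{\epsilon}$ of a weak transition, so every strong bisimulation is (or canonically induces) a weak one. The paper's proof is literally that one sentence, whereas you spell out the chain-decomposition of weak moves and the isomorphism-restriction $f\mapsto f\upharpoonright\widehat{C_1}$ needed for the hp/hhp cases; these are exactly the details the paper leaves implicit, so your argument is the same route worked out more carefully rather than a different one.
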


\begin{proof}
From the definition of weak pomset transition, weak step transition, weakly posetal product and weakly concurrent behavioral equivalence, it is easy to see that $\xrightarrow{e}=\xrightarrow{\epsilon}\xrightarrow{e}\xrightarrow{\epsilon}$ for $e\in \mathbb{E}$, where $\epsilon$ is the empty event.
\end{proof}

The weak transition rules for CTC are listed in Table \ref{WTRForCTC}.

\begin{center}
    \begin{table}
        \[\textbf{WAct}_1\quad \frac{}{\alpha.P\xRightarrow{\alpha}P}\]

        \[\textbf{WSum}_1\quad \frac{P\xRightarrow{\alpha}P'}{P+Q\xRightarrow{\alpha}P'}\]

        \[\textbf{WCom}_1\quad \frac{P\xRightarrow{\alpha}P'\quad Q\nrightarrow}{P\parallel Q\xRightarrow{\alpha}P'\parallel Q}\]

        \[\textbf{WCom}_2\quad \frac{Q\xRightarrow{\alpha}Q'\quad P\nrightarrow}{P\parallel Q\xRightarrow{\alpha}P\parallel Q'}\]

        \[\textbf{WCom}_3\quad \frac{P\xRightarrow{\alpha}P'\quad Q\xRightarrow{\beta}Q'}{P\parallel Q\xRightarrow{\{\alpha,\beta\}}P'\parallel Q'}\quad (\beta\neq\overline{\alpha})\]

        \[\textbf{WCom}_4\quad \frac{P\xRightarrow{l}P'\quad Q\xRightarrow{\overline{l}}Q'}{P\parallel Q\xRightarrow{\tau}P'\parallel Q'}\]

        \[\textbf{WAct}_2\quad \frac{}{(\alpha_1\parallel\cdots\parallel\alpha_n).P\xRightarrow{\{\alpha_1,\cdots,\alpha_n\}}P}\quad (\alpha_i\neq\overline{\alpha_j}\quad i,j\in\{1,\cdots,n\})\]

        \[\textbf{WSum}_2\quad \frac{P\xRightarrow{\{\alpha_1,\cdots,\alpha_n\}}P'}{P+Q\xRightarrow{\{\alpha_1,\cdots,\alpha_n\}}P'}\]

        \[\textbf{WRes}_1\quad \frac{P\xRightarrow{\alpha}P'}{P\setminus L\xRightarrow{\alpha}P'\setminus L}\quad (\alpha,\overline{\alpha}\notin L)\]

        \[\textbf{WRes}_2\quad \frac{P\xRightarrow{\{\alpha_1,\cdots,\alpha_n\}}P'}{P\setminus L\xRightarrow{\{\alpha_1,\cdots,\alpha_n\}}P'\setminus L}\quad (\alpha_1,\overline{\alpha_1},\cdots,\alpha_n,\overline{\alpha_n}\notin L)\]

        \[\textbf{WRel}_1\quad \frac{P\xRightarrow{\alpha}P'}{P[f]\xRightarrow{f(\alpha)}P'[f]}\]

        \[\textbf{WRel}_2\quad \frac{P\xRightarrow{\{\alpha_1,\cdots,\alpha_n\}}P'}{P[f]\xRightarrow{\{f(\alpha_1),\cdots,f(\alpha_n)\}}P'[f]}\]

        \[\textbf{WCon}_1\quad\frac{P\xRightarrow{\alpha}P'}{A\xRightarrow{\alpha}P'}\quad (A\overset{\text{def}}{=}P)\]

        \[\textbf{WCon}_2\quad\frac{P\xRightarrow{\{\alpha_1,\cdots,\alpha_n\}}P'}{A\xRightarrow{\{\alpha_1,\cdots,\alpha_n\}}P'}\quad (A\overset{\text{def}}{=}P)\]
        \caption{Weak transition rules of CTC}
        \label{WTRForCTC}
    \end{table}
\end{center}

\subsection{Laws and Congruence}

Remembering that $\tau$ can neither be restricted nor relabeled, by Proposition \ref{WSCBE}, we know that the monoid laws, the static laws and the new expansion law in section \ref{stcb} still hold with respect to the corresponding weakly truly concurrent bisimulations. And also, we can enjoy the full congruence of Prefix, Summation, Composition, Restriction, Relabelling and Constants with respect to corresponding weakly truly concurrent bisimulations. We will not retype these laws, and just give the $\tau$-specific laws.

\begin{proposition}[$\tau$ laws for weak step bisimulation]\label{TAUWSB}
The $\tau$ laws for weak step bisimulation is as follows.
\begin{enumerate}
  \item $P\approx_s \tau.P$;
  \item $\alpha.\tau.P\approx_s \alpha.P$;
  \item $(\alpha_1\parallel\cdots\parallel\alpha_n).\tau.P\approx_s (\alpha_1\parallel\cdots\parallel\alpha_n).P$;
  \item $P+\tau.P\approx_s \tau.P$;
  \item $\alpha.(P+\tau.Q)+\alpha.Q\approx_s\alpha.(P+\tau.Q)$;
  \item $(\alpha_1\parallel\cdots\parallel\alpha_n).(P+\tau.Q)+ (\alpha_1\parallel\cdots\parallel\alpha_n).Q\approx_s (\alpha_1\parallel\cdots\parallel\alpha_n).(P+\tau.Q)$;
  \item $P\approx_s \tau\parallel P$.
\end{enumerate}
\end{proposition}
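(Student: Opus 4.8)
The idea is to treat the seven laws in the standard CCS fashion: for the base laws exhibit an explicit weak step bisimulation, and for the remaining ones reduce to the base cases by means of the congruence properties for $\approx_s$ recorded immediately before this proposition (congruence of Prefix, of the parallel Prefix $(\alpha_1\parallel\cdots\parallel\alpha_n).{-}$, and of Composition). The single observation used throughout is that weak step transitions absorb silent moves: $\xRightarrow{X}$ is unchanged by prepending or appending $\xrightarrow{\tau^*}$, and the only transition of $\tau.P$ is $\tau.P\xrightarrow{\tau}P$. I will also use that a step (a set of pairwise concurrent events) emanating from a Summation of Prefixes is necessarily a singleton — the events guarded by a Prefix all lie causally below it — and that $\textbf{nil}\parallel P\sim_s P$ (the static law, Proposition \ref{SLSSB}), hence $\textbf{nil}\parallel P\approx_s P$ by Proposition \ref{WSCBE}.

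For (1), the relation $R=Id_{\mathcal P}\cup\{(P,\tau.P)\}$ is a weak step bisimulation: every weak step of $\tau.P$ factors as $\tau.P\xrightarrow{\tau}P\xRightarrow{X}P'$, hence is a weak step $P\xRightarrow{X}P'$ to the same target, while conversely $P\xRightarrow{X}P'$ is matched by $\tau.P\xrightarrow{\tau}P\xRightarrow{X}P'$; all resulting pairs lie in $Id_{\mathcal P}$. Laws (2) and (3) are then immediate from (1) by congruence of Prefix and of $(\alpha_1\parallel\cdots\parallel\alpha_n).{-}$. For (4), take $R=Id_{\mathcal P}\cup\{(P+\tau.P,\tau.P)\}$: a weak step of $P+\tau.P$ is either a weak step of $P$ (from the left summand), or begins with the silent $P+\tau.P\xrightarrow{\tau}P$ and then proceeds inside $P$ — in both cases it coincides with some $P\xRightarrow{X}P'$, which $\tau.P$ answers via its leading $\tau$; conversely $\tau.P\xrightarrow{\tau}P\xRightarrow{X}P'$ is matched by $P+\tau.P$; targets lie in $Id_{\mathcal P}$.

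For (5) and (6), take $R=Id_{\mathcal P}\cup\{(\text{l.h.s.},\text{r.h.s.})\}$. Since both sides are Summations of ($\alpha$- resp. $(\alpha_1\parallel\cdots\parallel\alpha_n)$-) Prefixes, the only weak steps to analyse out of the new pair are $\xRightarrow{\{\alpha\}}$ (resp. $\xRightarrow{\{\alpha_1,\cdots,\alpha_n\}}$). The l.h.s. may move into $P+\tau.Q$, matched identically by the r.h.s., or into $Q$; in the latter case the r.h.s. replies $\alpha.(P+\tau.Q)\xrightarrow{\alpha}P+\tau.Q\xrightarrow{\tau}Q$, i.e. $\xRightarrow{\{\alpha\}}Q$, landing in the same $Q$; and the r.h.s.'s only move, into $P+\tau.Q$, is matched by the left summand of the l.h.s. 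All target pairs lie in $Id_{\mathcal P}$, so $R$ is a weak step bisimulation. For (7), read the standalone $\tau$ as $\tau.\textbf{nil}$; from $\tau.\textbf{nil}\approx_s\textbf{nil}$ (instance of (1) with $P=\textbf{nil}$) and congruence of Composition we obtain $\tau.\textbf{nil}\parallel P\approx_s\textbf{nil}\parallel P$, and composing with $\textbf{nil}\parallel P\sim_s P\subseteq\approx_s$ and transitivity of $\approx_s$ gives $\tau\parallel P\approx_s P$.

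\textbf{Main obstacle.} The delicate point is (7), and more precisely checking that the congruence of Composition we invoke for $\approx_s$ really interacts correctly with the step-style rules $\textbf{Com}_{1,2,3,4}$ (equivalently $\textbf{WCom}_{1,2,3,4}$): the enabled $\tau$ of $\tau.\textbf{nil}$ cannot fire on its own while $P$ performs a move — that would need $\tau.\textbf{nil}\nrightarrow$ — so it must fire synchronously inside a step $\{\tau,\beta\}$ together with each move $\beta$ of $P$. One has to observe that the $\tau$-free part of $\{\tau,\beta\}$ is $\{\beta\}$, so such a transition counts weakly as the step $\{\beta\}$, and that once the $\tau$ is consumed the residual is $\textbf{nil}\parallel P'$, which the static law collapses to $P'$. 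If one prefers a self-contained argument to invoking congruence, the witnessing relation for (7) is $\{(\tau.\textbf{nil}\parallel P',P'):P'\in\mathcal P\}$ composed with $\sim_s$, and this synchronisation bookkeeping is the only place where the verification needs genuine care; every other check above is a routine instance of $\tau$-absorption.
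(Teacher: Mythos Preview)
Your proof is correct and takes a genuinely different, more modular route than the paper. The paper verifies each of the seven laws independently by exhibiting matching weak transitions via the rules $\textbf{WAct}_{1,2}$, $\textbf{WSum}_{1,2}$, $\textbf{WCom}_{1,2,3,4}$ of Table~\ref{WTRForCTC} and concluding each time from ``target $\approx_s$ target''; it never names the witnessing relation and treats (2), (3), (7) on exactly the same footing as the others. You instead give explicit bisimulations $Id_{\mathcal P}\cup\{(\text{l.h.s.},\text{r.h.s.})\}$ for the base laws (1), (4), (5), (6), and then \emph{derive} (2) and (3) from (1) via the Prefix congruences, and (7) from (1) together with the static law $\textbf{nil}\parallel P\sim_s P$ and Composition congruence. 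Your route is more economical (four direct verifications instead of seven) and more explicit about the bisimulation machinery; in particular your ``Main obstacle'' paragraph correctly isolates the interaction between the negative premises in $\textbf{Com}_{1,2}$ and the enabled $\tau$ of $\tau\parallel P$ --- a point the paper's one-line treatment of (7) simply glosses over. Conversely, the paper's approach is entirely self-contained within the proposition and does not appeal to the (asserted but not re-proved) weak congruence of Composition; that makes it marginally less sensitive to any later scrutiny of whether $\parallel$ really is an $\approx_s$-congruence under these step-style rules with negative premises. Your fallback direct witness for (7) covers that concern.
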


\begin{proof}
Though transition rules in Table \ref{TRForCTC} are defined in the flavor of single event, they can be modified into a step (a set of events within which each event is pairwise concurrent), we omit them. If we treat a single event as a step containing just one event, the proof of $\tau$ laws does not exist any problem, so we use this way and still use the transition rules in Table \ref{TRForCTC}.

\begin{enumerate}
  \item $P\approx_s \tau.P$. By the weak transition rules $\textbf{WAct}_{1,2}$ of CTC in Table \ref{WTRForCTC}, we get

  $$\frac{P\xRightarrow{\alpha}P'}{P\xRightarrow{\alpha}P'}
  \quad \frac{P\xRightarrow{\alpha}P'}{\tau.P\xRightarrow{\alpha} P'}$$

  Since $P'\approx_s P'$, we get $P\approx_s \tau.P$, as desired.
  \item $\alpha.\tau.P\approx_s \alpha.P$. By the weak transition rules $\textbf{WAct}_{1,2}$ in Table \ref{WTRForCTC}, we get

  $$\frac{}{\alpha.\tau.P\xRightarrow{\alpha}P}
  \quad \frac{}{\alpha.P\xRightarrow{\alpha}P}$$

  Since $P\approx_s P$, we get $\alpha.\tau.P\approx_s \alpha.P$, as desired.
  \item $(\alpha_1\parallel\cdots\parallel\alpha_n).\tau.P\approx_s (\alpha_1\parallel\cdots\parallel\alpha_n).P$. By the weak transition rules $\textbf{WAct}_{1,2}$ in Table \ref{WTRForCTC}, we get

  $$\frac{}{(\alpha_1\parallel\cdots\parallel\alpha_n).\tau.P\xRightarrow{\{\alpha_1,\cdots,\alpha_n\}}P}
  \quad \frac{}{(\alpha_1\parallel\cdots\parallel\alpha_n).P\xRightarrow{\{\alpha_1,\cdots,\alpha_n\}}P}$$

  Since $P\approx_s P$, we get $(\alpha_1\parallel\cdots\parallel\alpha_n).\tau.P\approx_s (\alpha_1\parallel\cdots\parallel\alpha_n).P$, as desired.
  \item $P+\tau.P\approx_s \tau.P$. By the weak transition rules $\textbf{WSum}_{1,2}$ of CTC in Table \ref{WTRForCTC}, we get

  $$\frac{P\xRightarrow{\alpha}P'}{P+\tau.P\xRightarrow{\alpha}P'}
  \quad \frac{P\xRightarrow{\alpha}P'}{\tau.P\xRightarrow{\alpha} P'}$$

  Since $P'\approx_s P'$, we get $P+\tau.P\approx_s \tau.P$, as desired.
  \item $\alpha.(P+\tau.Q)+\alpha.Q\approx_s\alpha.(P+\tau.Q)$. By the weak transition rules $\textbf{WAct}_{1,2}$ and $\textbf{WSum}_{1,2}$ of CTC in Table \ref{WTRForCTC}, we get

  $$\frac{}{\alpha.(P+\tau.Q)+\alpha.Q\xRightarrow{\alpha}Q}
  \quad \frac{}{\alpha.(P+\tau.Q)\xRightarrow{\alpha} Q}$$

  Since $Q\approx_s Q$, we get $\alpha.(P+\tau.Q)+\alpha.Q\approx_s\alpha.(P+\tau.Q)$, as desired.
  \item $(\alpha_1\parallel\cdots\parallel\alpha_n).(P+\tau.Q)+ (\alpha_1\parallel\cdots\parallel\alpha_n).Q\approx_s (\alpha_1\parallel\cdots\parallel\alpha_n).(P+\tau.Q)$. By the weak transition rules $\textbf{WAct}_{1,2}$ and $\textbf{WSum}_{1,2}$ of CTC in Table \ref{WTRForCTC}, we get

  $$\frac{}{(\alpha_1\parallel\cdots\parallel\alpha_n).(P+\tau.Q)+(\alpha_1\parallel\cdots\parallel\alpha_n).Q \xRightarrow{\{\alpha_1,\cdots,\alpha_n\}}Q}
  \quad \frac{}{(\alpha_1\parallel\cdots\parallel\alpha_n).(P+\tau.Q)\xRightarrow{\{\alpha_1,\cdots,\alpha_n\}} Q}$$

  Since $Q\approx_s Q$, we get $(\alpha_1\parallel\cdots\parallel\alpha_n).(P+\tau.Q)+ (\alpha_1\parallel\cdots\parallel\alpha_n).Q\approx_s(\alpha_1\parallel\cdots\parallel\alpha_n).(P+\tau.Q)$, as desired.
  \item $P\approx_s \tau\parallel P$. By the weak transition rules $\textbf{WCom}_{1,2,3,4}$ of CTC in Table \ref{WTRForCTC}, we get

  $$\frac{P\xRightarrow{\alpha}P'}{P\xRightarrow{\alpha}P'}
  \quad \frac{P\xRightarrow{\alpha}P'}{\tau\parallel P\xRightarrow{\alpha} P'}$$

  Since $P'\approx_s P'$, we get $P\approx_s \tau\parallel P$, as desired.
\end{enumerate}
\end{proof}

\begin{proposition}[$\tau$ laws for weak pomset bisimulation]\label{TAUWPB}
The $\tau$ laws for weak pomset bisimulation is as follows.
\begin{enumerate}
  \item $P\approx_p \tau.P$;
  \item $\alpha.\tau.P\approx_p \alpha.P$;
  \item $(\alpha_1\parallel\cdots\parallel\alpha_n).\tau.P\approx_p (\alpha_1\parallel\cdots\parallel\alpha_n).P$;
  \item $P+\tau.P\approx_p \tau.P$;
  \item $\alpha.(P+\tau.Q)+\alpha.Q\approx_p\alpha.(P+\tau.Q)$;
  \item $(\alpha_1\parallel\cdots\parallel\alpha_n).(P+\tau.Q)+ (\alpha_1\parallel\cdots\parallel\alpha_n).Q\approx_p (\alpha_1\parallel\cdots\parallel\alpha_n).(P+\tau.Q)$;
  \item $P\approx_p \tau\parallel P$.
\end{enumerate}
\end{proposition}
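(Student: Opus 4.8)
The plan is to follow exactly the template the paper has used for every pomset-level result so far (Propositions \ref{SLSPB}, \ref{NELSPB}, Theorems \ref{CSPB}, \ref{USSPB}): reduce the pomset case to the step case already settled in Proposition \ref{TAUWSB}. First I would recall from Definition \ref{WPSB} that weak pomset bisimulation is witnessed by weak pomset transitions $C_1 \xRightarrow{X_1} C_1'$ whose labels $X_1$ are pomsets, and that the events inside such a pomset split into two kinds: those related by causality (introduced syntactically by the Prefix $.$) and those that are pairwise concurrent (introduced by $\parallel$, or implicitly by $.$ and $+$); in all cases they are pairwise consistent. Proposition \ref{TAUWSB} has already disposed of the purely concurrent case, so the only thing left to handle is causality.

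Second, I would invoke the standard decomposition: without loss of generality a pomset such as $p=\{\alpha,\beta:\alpha.\beta\}$ induces a weak pomset transition that factors as $\xRightarrow{p}=\xRightarrow{\alpha}\xRightarrow{\beta}$, i.e.\ a weak single-event transition on $\alpha$ followed by one on $\beta$, with $\xRightarrow{e}\triangleq\xrightarrow{\tau^*}\xrightarrow{e}\xrightarrow{\tau^*}$ as in Definition \ref{WPT}. This lets me replay, transition by transition, the seven arguments of Proposition \ref{TAUWSB}: for (1) and (7) I match $P\xRightarrow{\alpha}P'$ against $\tau.P\xrightarrow{\tau}P\xRightarrow{\alpha}P'$ and $\tau\parallel P\xRightarrow{\alpha}P'$, absorbing the leading $\tau$; for (2) and (3) I use that $\alpha.\tau.P\xRightarrow{\alpha}P$ and $(\alpha_1\parallel\cdots\parallel\alpha_n).\tau.P\xRightarrow{\{\alpha_1,\cdots,\alpha_n\}}P$ because the trailing $\tau$ is swallowed by the closing $\xrightarrow{\tau^*}$; for (4), (5), (6) I use the weak Summation rules $\textbf{WSum}_{1,2}$ together with the same absorption, exactly as in the step case. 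In each law the matched residuals are literally equal (or $\approx_p$ to themselves), so the relation that closes each pair is a weak pomset bisimulation containing $(\emptyset,\emptyset)$.

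Third, I should check that inserting or deleting a $\tau$ does not change the causality structure seen by a pomset observer: since $\hat{C}=C\setminus\{\tau\}$ and $\hat{\tau^*}=\epsilon$, the configurations reached on the two sides carry the same visible pomset, so the matching $X_1\sim X_2$ required by Definition \ref{WPSB} holds with $X_1=X_2$ throughout, and downward issues do not arise since we are at the level of pomset (not hp-) bisimulation.

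The main obstacle --- such as it is --- will be convincing oneself that the factorization $\xRightarrow{p}=\xRightarrow{\alpha}\xRightarrow{\beta}$ and the $\tau$-absorption interact correctly when the silent-step closures around two consecutive visible events overlap (the trailing $\xrightarrow{\tau^*}$ of the first event and the leading $\xrightarrow{\tau^*}$ of the second being the same block of $\tau$'s), so that no spurious behaviour is introduced or lost; once that bookkeeping is pinned down, the rest is the routine replay of Proposition \ref{TAUWSB}, which I would omit in the paper's style.
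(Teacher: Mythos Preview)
Your proposal is correct and follows essentially the same approach as the paper: the paper's proof also reduces to Proposition \ref{TAUWSB} by observing that weak pomset transitions split into causality and concurrency, that the concurrent case is already handled, and that a causal pomset such as $p=\{\alpha,\beta:\alpha.\beta\}$ factors as $\xRightarrow{p}=\xRightarrow{\alpha}\xRightarrow{\beta}$, after which the details are omitted. Your write-up is in fact more detailed than the paper's (which stops at ``we omit them''), including the bookkeeping about $\hat{C}$ and overlapping $\tau^*$ blocks that the paper does not spell out.
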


\begin{proof}
From the definition of weak pomset bisimulation $\approx_{p}$ (see Definition \ref{WPSB}), we know that weak pomset bisimulation $\approx_{p}$ is defined by weak pomset transitions, which are labeled by pomsets with $\tau$. In a weak pomset transition, the events in the pomset are either within causality relations (defined by $.$) or in concurrency (implicitly defined by $.$ and $+$, and explicitly defined by $\parallel$), of course, they are pairwise consistent (without conflicts). In Proposition \ref{TAUWSB}, we have already proven the case that all events are pairwise concurrent, so, we only need to prove the case of events in causality. Without loss of generality, we take a pomset of $p=\{\alpha,\beta:\alpha.\beta\}$. Then the weak pomset transition labeled by the above $p$ is just composed of one single event transition labeled by $\alpha$ succeeded by another single event transition labeled by $\beta$, that is, $\xRightarrow{p}=\xRightarrow{\alpha}\xRightarrow{\beta}$.

Similarly to the proof of $\tau$ laws for weak step bisimulation $\approx_{s}$ (Proposition \ref{TAUWSB}), we can prove that $\tau$ laws hold for weak pomset bisimulation $\approx_{p}$, we omit them.
\end{proof}

\begin{proposition}[$\tau$ laws for weak hp-bisimulation]\label{TAUWHPB}
The $\tau$ laws for weak hp-bisimulation is as follows.
\begin{enumerate}
  \item $P\approx_{hp} \tau.P$;
  \item $\alpha.\tau.P\approx_{hp} \alpha.P$;
  \item $(\alpha_1\parallel\cdots\parallel\alpha_n).\tau.P\approx_{hp} (\alpha_1\parallel\cdots\parallel\alpha_n).P$;
  \item $P+\tau.P\approx_{hp} \tau.P$;
  \item $\alpha.(P+\tau.Q)+\alpha.Q\approx_{hp}\alpha.(P+\tau.Q)$;
  \item $(\alpha_1\parallel\cdots\parallel\alpha_n).(P+\tau.Q)+ (\alpha_1\parallel\cdots\parallel\alpha_n).Q\approx_{hp} (\alpha_1\parallel\cdots\parallel\alpha_n).(P+\tau.Q)$;
  \item $P\approx_{hp} \tau\parallel P$.
\end{enumerate}
\end{proposition}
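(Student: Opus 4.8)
The plan is to reuse the reduction strategy that the preceding hp-bisimulation results follow (Propositions \ref{SLSHPB} and \ref{TAUWPB}, Theorem \ref{CSHPB}): first reduce the seven laws to the already-established $\tau$ laws for weak pomset bisimulation in Proposition \ref{TAUWPB}, and then supplement that argument by checking the extra conditions that the weakly posetal product imposes on hp-bisimulation. Concretely, for the fragments of each law whose labels are pomsets with internal causality I would invoke the decomposition $\xRightarrow{p}=\xRightarrow{\alpha}\xRightarrow{\beta}$ exactly as in the proof of Proposition \ref{TAUWPB}, so that the whole matter comes down to single-event weak transitions, which are handled law-by-law just as in Proposition \ref{TAUWSB}.

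The key observation I would isolate up front, from Definition \ref{WHHPB}, is that a weak hp-bisimulation is a weakly posetal relation consisting of triples $(C_1,f,C_2)$ with $f:\hat{C_1}\rightarrow\hat{C_2}$ an isomorphism, and that $f$ ranges over $\hat{C}$, which \emph{excludes} $\tau$. Since $\hat{\tau^*}=\epsilon$, any $\tau$-step — and more generally the $\xrightarrow{\tau^*}$ prefix and suffix hidden inside a weak transition $\xRightarrow{e}\triangleq\xrightarrow{\tau^*}\xrightarrow{e}\xrightarrow{\tau^*}$ — neither enlarges the domain of $f$ nor changes its codomain. Hence in every law the candidate relation built from the matching transitions of the two sides automatically respects the isomorphism requirement, for precisely the same reason it did in the strong hp-bisimulation case and in the weak pomset case.

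Then I would run through the laws in the style of Proposition \ref{TAUWSB}, tracking the event-labelled update of $f$. For laws (1), (4), (7) a matching visible transition $P\xRightarrow{\alpha}P'$ on both sides yields a triple whose third component is $f[\alpha\mapsto\alpha]$, while the silent prefix ($\tau.P$), the silent summand ($P+\tau.P$), or the silent parallel component ($\tau\parallel P$) contributes only $\tau$-steps invisible to $f$; and the matching side can always pad with $\tau$-steps to align. For laws (2), (3), (5), (6) the (multi-)event prefix fires identically on both sides, giving the trivial update $f[\alpha\mapsto\alpha]$ (resp.\ $f[\alpha_i\mapsto\alpha_i]$), and the absorbed internal $\tau$ (as in $\alpha.\tau.P$ versus $\alpha.P$) is again invisible to $\hat{C}$. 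In each case I would exhibit the obvious identity-on-configurations weakly posetal relation, check it is a weak hp-bisimulation, and then state the reduction and omit the repetitive case analysis, following the convention of the earlier propositions.

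The main obstacle — indeed the only genuine subtlety — is making precise that the weak transition relation interacts correctly with the posetal structure: one must be certain that the $\tau^*$ padding surrounding an observable $e$ really does leave $\hat{C}$ unchanged, so that $f$ need only be updated by $e$, and that this padding can always be inserted on the matching side without destroying the isomorphism. This is exactly what the conventions $\hat{\tau^*}=\epsilon$ and $f(\tau^*)=f(\tau^*)$ from the weakly posetal product definition are there to guarantee; once this is stated, the remaining verifications are routine and parallel to Proposition \ref{TAUWSB}, so I would present the reduction and leave the mechanical details implicit, as the paper does for the analogous strong hp-bisimulation statements.
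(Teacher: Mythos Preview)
Your proposal is correct and follows essentially the same approach as the paper: recall that weak hp-bisimulation is a weakly posetal relation on triples $(C_1,f,C_2)$ with $f:\hat{C_1}\rightarrow\hat{C_2}$ an isomorphism, observe that a matching visible transition updates $f$ to $f[\alpha\mapsto\alpha]$ while $\tau$-steps are invisible to $\hat{C}$, and then defer to the proof of Proposition~\ref{TAUWPB} for the remaining case analysis. Your write-up is in fact more explicit than the paper's, which states the posetal-product setup and the $f'=f[\alpha\mapsto\alpha]$ update and then simply says ``similarly to the proof of $\tau$ laws for weak pomset bisimulation \ldots\ we just need additionally to check the above conditions on weak hp-bisimulation, we omit them.''
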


\begin{proof}
From the definition of weak hp-bisimulation $\approx_{hp}$ (see Definition \ref{WHHPB}), we know that weak hp-bisimulation $\approx_{hp}$ is defined on the weakly posetal product $(C_1,f,C_2),f:\hat{C_1}\rightarrow \hat{C_2}\textrm{ isomorphism}$. Two processes $P$ related to $C_1$ and $Q$ related to $C_2$, and $f:\hat{C_1}\rightarrow \hat{C_2}\textrm{ isomorphism}$. Initially, $(C_1,f,C_2)=(\emptyset,\emptyset,\emptyset)$, and $(\emptyset,\emptyset,\emptyset)\in\approx_{hp}$. When $P\xrightarrow{\alpha}P'$ ($C_1\xrightarrow{\alpha}C_1'$), there will be $Q\xRightarrow{\alpha}Q'$ ($C_2\xRightarrow{\alpha}C_2'$), and we define $f'=f[\alpha\mapsto \alpha]$. Then, if $(C_1,f,C_2)\in\approx_{hp}$, then $(C_1',f',C_2')\in\approx_{hp}$.

Similarly to the proof of $\tau$ laws for weak pomset bisimulation (Proposition \ref{TAUWPB}), we can prove that $\tau$ laws hold for weak hp-bisimulation, we just need additionally to check the above conditions on weak hp-bisimulation, we omit them.
\end{proof}

\begin{proposition}[$\tau$ laws for weakly hhp-bisimulation]\label{TAUWHHPB}
The $\tau$ laws for weakly hhp-bisimulation is as follows.
\begin{enumerate}
  \item $P\approx_{hhp} \tau.P$;
  \item $\alpha.\tau.P\approx_{hhp} \alpha.P$;
  \item $(\alpha_1\parallel\cdots\parallel\alpha_n).\tau.P\approx_{hhp} (\alpha_1\parallel\cdots\parallel\alpha_n).P$;
  \item $P+\tau.P\approx_{hhp} \tau.P$;
  \item $\alpha.(P+\tau.Q)+\alpha.Q\approx_{hhp}\alpha.(P+\tau.Q)$;
  \item $(\alpha_1\parallel\cdots\parallel\alpha_n).(P+\tau.Q)+ (\alpha_1\parallel\cdots\parallel\alpha_n).Q\approx_{hhp} (\alpha_1\parallel\cdots\parallel\alpha_n).(P+\tau.Q)$;
  \item $P\approx_{hhp} \tau\parallel P$.
\end{enumerate}
\end{proposition}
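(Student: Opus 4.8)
The plan is to follow exactly the reduction the paper uses for the other hhp-level results (e.g. Propositions \ref{SLSHHPB} and \ref{NELSHHPB}): by Definition \ref{WHHPB} a weakly hhp-bisimulation is precisely a downward closed weak hp-bisimulation, so it suffices to reuse the witnessing relations already constructed in the proof of the $\tau$ laws for weak hp-bisimulation (Proposition \ref{TAUWHPB}) and to check that each of them can be taken downward closed under the pointwise order on $\mathcal{C}(\mathcal{E}_1)\overline{\times}\mathcal{C}(\mathcal{E}_2)$.

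Concretely, for each of the seven laws I would take the weak hp-bisimulation $R$ exhibited for Proposition \ref{TAUWHPB} --- for (1) the relation linking $P$ (configuration $C_1$) with $\tau.P$ (configuration $C_2$) along the obvious isomorphism $f:\hat{C_1}\rightarrow\hat{C_2}$, and the analogous finite relations for (2)--(7) --- and replace it by its downward closure $R'=\{(C_1',f',C_2'):(C_1',f',C_2')\subseteq(C_1,f,C_2)\text{ pointwise for some }(C_1,f,C_2)\in R\}$. The key step is to verify that $R'$ is still a weak hp-bisimulation. Since $\tau$ is neither restricted nor relabelled, and since every process appearing in these laws is built only from Prefix and Summation over a bounded alphabet, all configurations in play are finite and each sub-configuration is a genuine prefix of a run; the matching weak move on the opposite side for a sub-triple is then obtained simply by truncating the matching run used for the original triple, so $R'$ inherits the back-and-forth transfer property together with downward closure.

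The only non-routine point --- and hence the main obstacle --- is the downward-closure check for the laws in which a choice absorbs a leading (or enclosed) $\tau$, namely (4) $P+\tau.P\approx_{hhp}\tau.P$, (5) $\alpha.(P+\tau.Q)+\alpha.Q\approx_{hhp}\alpha.(P+\tau.Q)$ and its parallel-prefixed variant (6): here one must confirm that after the initial weak move the two sides land in hp-related residuals whose every common sub-configuration is again in $R'$, which holds because the $\tau$-branch and the direct branch reach the same residual up to $\approx_{hp}$. With Proposition \ref{WSCBE} already supplying $\sim_{hhp}\Rightarrow\approx_{hhp}$ and the transfer of the monoid and static laws, the remaining bookkeeping is identical to the proof of Proposition \ref{TAUWHPB} with the downward-closure clause appended; I would therefore simply state that the $\tau$ laws hold for $\approx_{hhp}$ by this argument and omit the repetitive details, exactly as the paper does for the analogous hhp-level statements.
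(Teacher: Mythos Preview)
Your proposal is correct and follows exactly the paper's own approach: invoke Definition~\ref{WHHPB} to reduce weakly hhp-bisimulation to a downward closed weak hp-bisimulation, then defer to the proof of Proposition~\ref{TAUWHPB} and omit the repetitive details. In fact you supply considerably more of the mechanics (the explicit downward closure $R'$ and the discussion of laws (4)--(6)) than the paper itself, whose proof consists of just those two sentences.
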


\begin{proof}
From the definition of weakly hhp-bisimulation (see Definition \ref{WHHPB}), we know that weakly hhp-bisimulation is downward closed for weak hp-bisimulation.

Similarly to the proof of $\tau$ laws for weak hp-bisimulation (see Proposition \ref{TAUWHPB}), we can prove that the $\tau$ laws hold for weakly hhp-bisimulation, we omit them.
\end{proof}

\subsection{Recursion}

\begin{definition}[Sequential]
$X$ is sequential in $E$ if every subexpression of $E$ which contains $X$, apart from $X$ itself, is of the form $\alpha.F$, or $(\alpha_1\parallel\cdots\parallel\alpha_n).F$, or $\sum\widetilde{F}$.
\end{definition}

\begin{definition}[Guarded recursive expression]
$X$ is guarded in $E$ if each occurrence of $X$ is with some subexpression $l.F$ or $(l_1\parallel\cdots\parallel l_n).F$ of $E$.
\end{definition}

\begin{lemma}\label{LUSWW}
Let $G$ be guarded and sequential, $Vars(G)\subseteq\widetilde{X}$, and let $G\{\widetilde{P}/\widetilde{X}\}\xrightarrow{\{\alpha_1,\cdots,\alpha_n\}}P'$. Then there is an expression $H$ such that $G\xrightarrow{\{\alpha_1,\cdots,\alpha_n\}}H$, $P'\equiv H\{\widetilde{P}/\widetilde{X}\}$, and for any $\widetilde{Q}$, $G\{\widetilde{Q}/\widetilde{X}\}\xrightarrow{\{\alpha_1,\cdots,\alpha_n\}} H\{\widetilde{Q}/\widetilde{X}\}$. Moreover $H$ is sequential, $Vars(H)\subseteq\widetilde{X}$, and if $\alpha_1=\cdots=\alpha_n=\tau$, then $H$ is also guarded.
\end{lemma}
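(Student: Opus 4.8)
The plan is to argue by induction on the depth of the inference of $G\{\widetilde{P}/\widetilde{X}\}\xrightarrow{\{\alpha_1,\cdots,\alpha_n\}}P'$, with a case analysis on the shape of $G$ (equivalently, on the last transition rule applied). Because $G$ is sequential, $G$ must be one of the following: a variable $X\in\widetilde{X}$; a prefix $\alpha.F$ or $(\beta_1\parallel\cdots\parallel\beta_m).F$; a summation $G_1+G_2$ (or $\sum\widetilde{G}$); or one of the remaining forms $G_1\parallel G_2$, $F\setminus L$, $F[f]$, or a constant $A$ (the case $\textbf{nil}$ is vacuous, since it makes no moves). Unlike Lemma \ref{LUS}, which is about weakly guarded expressions, here the sequentiality assumption collapses the Composition, Restriction and Relabelling cases, as I explain next.

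I would first handle the cases in which $G$ carries no recursion variable. If $G\equiv X$ with $X\in\widetilde{X}$, then this occurrence of $X$ sits beneath no prefix, contradicting the guardedness of $G$, so this case cannot occur. If $G$ is $G_1\parallel G_2$, $F\setminus L$, $F[f]$, or a constant $A$, then sequentiality forbids a recursion variable from being enclosed by Composition, Restriction or Relabelling, and constant bodies are closed; together with $Vars(G)\subseteq\widetilde{X}$ this forces $Vars(G)=\emptyset$. Hence $G\{\widetilde{P}/\widetilde{X}\}\equiv G$, and I take $H\equiv P'$; since transitions introduce no variables, $P'$ is variable-free, so $H$ is vacuously sequential and guarded, $Vars(H)\subseteq\widetilde{X}$, $P'\equiv H\{\widetilde{P}/\widetilde{X}\}$, and $G\{\widetilde{Q}/\widetilde{X}\}\equiv G\xrightarrow{\{\alpha_1,\cdots,\alpha_n\}}P'\equiv H\{\widetilde{Q}/\widetilde{X}\}$ for every $\widetilde{Q}$.

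The substantive cases are the prefix and the summation ones. For $G\equiv\alpha.F$ (and symmetrically for $G\equiv(\beta_1\parallel\cdots\parallel\beta_m).F$), the only applicable rule is $\textbf{Act}_1$ (respectively $\textbf{Act}_2$), so the move is $\alpha.(F\{\widetilde{P}/\widetilde{X}\})\xrightarrow{\alpha}F\{\widetilde{P}/\widetilde{X}\}$; setting $H\equiv F$, we get $G\xrightarrow{\alpha}H$, $P'\equiv H\{\widetilde{P}/\widetilde{X}\}$, and $G\{\widetilde{Q}/\widetilde{X}\}\xrightarrow{\alpha}H\{\widetilde{Q}/\widetilde{X}\}$ by the same axiom, and $H$ inherits sequentiality and $Vars(H)=Vars(G)\subseteq\widetilde{X}$ since its subexpressions are subexpressions of $G$. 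For $G\equiv G_1+G_2$, rule $\textbf{Sum}_{1,2}$ produces the move from, say, $G_1\{\widetilde{P}/\widetilde{X}\}\xrightarrow{\{\alpha_1,\cdots,\alpha_n\}}P'$ by a strictly shorter inference; $G_1$ inherits guardedness, sequentiality and $Vars(G_1)\subseteq\widetilde{X}$ from $G$, so the induction hypothesis supplies a suitable $H$ for $G_1$, and re-applying $\textbf{Sum}$ gives $G\xrightarrow{\{\alpha_1,\cdots,\alpha_n\}}H$ together with $G\{\widetilde{Q}/\widetilde{X}\}\xrightarrow{\{\alpha_1,\cdots,\alpha_n\}}H\{\widetilde{Q}/\widetilde{X}\}$, the remaining properties of $H$ carrying over verbatim; $\sum\widetilde{G}$ is identical.

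The only place where the hypotheses are used in full, and the part I expect to be delicate, is the final clause: if $\alpha_1=\cdots=\alpha_n=\tau$ then $H$ must be guarded. This can only be an issue in the prefix case with $H\equiv F$, where the consumed prefix then consists solely of $\tau$'s. By the definition of guardedness a $\tau$-only prefix does not guard a variable occurrence; hence in $G\equiv\tau.F$ or $G\equiv(\tau\parallel\cdots\parallel\tau).F$ every occurrence of every $X\in\widetilde{X}$ must be guarded by a visible-action prefix lying strictly inside $F$, i.e.\ $F$ is already guarded. In the summation case the clause follows at once from the induction hypothesis, since the transition label is unchanged; in the variable-free cases it holds vacuously. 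Accordingly I would write out the prefix and summation cases in full and merely note that the Composition, Restriction, Relabelling and Constant cases reduce to the observation that $G$ then contains no recursion variable.
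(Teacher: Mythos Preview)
Your proof is correct and follows essentially the same approach as the paper: a case analysis on the shape of $G$, with the Composition/Restriction/Relabelling/Constant cases collapsed by sequentiality to the variable-free situation, the variable case ruled out by guardedness, and the prefix and summation cases handled directly (the paper phrases the induction as ``on the structure of $G$'' rather than on derivation depth, but the cases and arguments coincide). Your treatment of the final clause---that a $\tau$-only prefix does not guard a variable, so $F$ must itself be guarded---is in fact more explicit than the paper's, which simply lists $G\equiv\tau.H$ as another prefix case without spelling out why guardedness of $H$ survives.
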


\begin{proof}
We need to induct on the structure of $G$.

If $G$ is a Constant, a Composition, a Restriction or a Relabeling then it contains no variables, since $G$ is sequential and guarded, then $G\xrightarrow{\{\alpha_1,\cdots,\alpha_n\}}P'$, then let $H\equiv P'$, as desired.

$G$ cannot be a variable, since it is guarded.

If $G\equiv G_1+G_2$. Then either $G_1\{\widetilde{P}/\widetilde{X}\} \xrightarrow{\{\alpha_1,\cdots,\alpha_n\}}P'$ or $G_2\{\widetilde{P}/\widetilde{X}\} \xrightarrow{\{\alpha_1,\cdots,\alpha_n\}}P'$, then, we can apply this lemma in either case, as desired.

If $G\equiv\beta.H$. Then we must have $\alpha=\beta$, and $P'\equiv H\{\widetilde{P}/\widetilde{X}\}$, and $G\{\widetilde{Q}/\widetilde{X}\}\equiv \beta.H\{\widetilde{Q}/\widetilde{X}\} \xrightarrow{\beta}H\{\widetilde{Q}/\widetilde{X}\}$, then, let $G'$ be $H$, as desired.

If $G\equiv(\beta_1\parallel\cdots\parallel\beta_n).H$. Then we must have $\alpha_i=\beta_i$ for $1\leq i\leq n$, and $P'\equiv H\{\widetilde{P}/\widetilde{X}\}$, and $G\{\widetilde{Q}/\widetilde{X}\}\equiv (\beta_1\parallel\cdots\parallel\beta_n).H\{\widetilde{Q}/\widetilde{X}\} \xrightarrow{\{\beta_1,\cdots,\beta_n\}}H\{\widetilde{Q}/\widetilde{X}\}$, then, let $G'$ be $H$, as desired.

If $G\equiv\tau.H$. Then we must have $\tau=\tau$, and $P'\equiv H\{\widetilde{P}/\widetilde{X}\}$, and $G\{\widetilde{Q}/\widetilde{X}\}\equiv \tau.H\{\widetilde{Q}/\widetilde{X}\} \xrightarrow{\tau}H\{\widetilde{Q}/\widetilde{X}\}$, then, let $G'$ be $H$, as desired.
\end{proof}

\begin{theorem}[Unique solution of equations for weak step bisimulation]\label{USWSB}
Let the guarded and sequential expressions $\widetilde{E}$ contain free variables $\subseteq \widetilde{X}$, then,

If $\widetilde{P}\approx_s \widetilde{E}\{\widetilde{P}/\widetilde{X}\}$ and $\widetilde{Q}\approx_s \widetilde{E}\{\widetilde{Q}/\widetilde{X}\}$, then $\widetilde{P}\approx_s \widetilde{Q}$.
\end{theorem}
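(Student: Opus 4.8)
The plan is to follow Milner's classical argument for unique solutions of equations, transferred to weak step bisimulation by means of Lemma \ref{LUSWW}, so that at the structural level it parallels the proof of Theorem \ref{USSSB}. First I would fix the candidate relation
\[
R=\{(G\{\widetilde{P}/\widetilde{X}\},\,G\{\widetilde{Q}/\widetilde{X}\}) : G\ \textrm{guarded and sequential},\ Vars(G)\subseteq\widetilde{X}\},
\]
closed under symmetry, and argue that $R$ is a weak step bisimulation up to $\approx_s$. Since every $E_i$ is guarded and sequential with $Vars(E_i)\subseteq\widetilde{X}$, the pair $(E_i\{\widetilde{P}/\widetilde{X}\},E_i\{\widetilde{Q}/\widetilde{X}\})$ lies in $R$; once $R$ is shown to be contained in $\approx_s$, the hypotheses $P_i\approx_s E_i\{\widetilde{P}/\widetilde{X}\}$ and $Q_i\approx_s E_i\{\widetilde{Q}/\widetilde{X}\}$ together with transitivity of $\approx_s$ give $P_i\approx_s Q_i$ for every $i\in I$, as required.

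The core is the transfer property. Suppose $(G\{\widetilde{P}/\widetilde{X}\},G\{\widetilde{Q}/\widetilde{X}\})\in R$ and $G\{\widetilde{P}/\widetilde{X}\}\xRightarrow{X}P'$ for a step $X=\{\alpha_1,\cdots,\alpha_n\}$ (a single visible event being the one-element case, as per the paper's standing convention). I would decompose this weak step as $G\{\widetilde{P}/\widetilde{X}\}\xrightarrow{\tau^*}\cdot\xrightarrow{X}\cdot\xrightarrow{\tau^*}P'$ and walk along it applying Lemma \ref{LUSWW} at each single transition: the $\tau$-prefix keeps us at processes $G_1\{\widetilde{P}/\widetilde{X}\}$ with $G_1$ still guarded and sequential (this is the clause ``if $\alpha_1=\cdots=\alpha_n=\tau$ then $H$ is also guarded''), the visible step lands at $G_2\{\widetilde{P}/\widetilde{X}\}$ with $G_2$ sequential, and the lemma simultaneously exhibits the matching moves $G\{\widetilde{Q}/\widetilde{X}\}\xrightarrow{\tau^*}G_1\{\widetilde{Q}/\widetilde{X}\}\xrightarrow{X}G_2\{\widetilde{Q}/\widetilde{X}\}$, i.e. $G\{\widetilde{Q}/\widetilde{X}\}\xRightarrow{X}G_2\{\widetilde{Q}/\widetilde{X}\}$. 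It then remains to absorb the trailing $\tau^*$ from $G_2\{\widetilde{P}/\widetilde{X}\}$ to $P'$ and to check that $P'$ is related by $R$ (up to $\approx_s$) to the corresponding $\widetilde{Q}$-instance; here one uses that $\tau$-derivatives of a sequential expression are again sequential, together with the $\tau$-laws of Proposition \ref{TAUWSB}, so that the residual $\tau$'s can be folded into the weak transition on the $\widetilde{Q}$-side and into the ``up to $\approx_s$'' slack on the $\widetilde{P}$-side. The converse direction (matching a $\widetilde{Q}$-move) is symmetric.

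The main obstacle is precisely this last point: after a visible step Lemma \ref{LUSWW} no longer guarantees guardedness of $G_2$, so the naive choice of $R$ is not closed under transitions, and one must either carry the expressions only up to $\approx_s$ (being careful that the up-to technique is applied in the restricted, sound form --- essentially Milner's) or strengthen the bookkeeping to show the trailing $\tau$-derivatives stay within sequential expressions that eventually re-guard. Everything else --- the base case $G\equiv X_i$, the cases $G\equiv\alpha.F$, $G\equiv(\alpha_1\parallel\cdots\parallel\alpha_n).F$, $G\equiv\tau.F$, $G\equiv F_1+F_2$, and the lifting from single events to steps --- is routine and mirrors Theorem \ref{USSSB}.
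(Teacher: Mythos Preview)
Your plan is close in spirit to the paper's proof --- both follow Milner's argument and both rely on Lemma~\ref{LUSWW} --- but you diverge at one technical choice that creates exactly the obstacle you identify, and that the paper avoids.

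The paper takes as its candidate relation $\{(H(P),H(Q)):H\text{ sequential}\}$, requiring only that $H$ be \emph{sequential}, not guarded. It then starts from a \emph{strong} step $H(P)\xrightarrow{\{\alpha_1,\cdots,\alpha_n\}}P'$, invokes the hypothesis $P\approx_s E(P)$ together with congruence for sequential contexts to obtain $H(E(P))\xRightarrow{\{\alpha_1,\cdots,\alpha_n\}}P''\approx_s P'$, and only now applies Lemma~\ref{LUSWW}. The point is that the composite $H(E)$ is automatically guarded (because $E$ is) and sequential (because both are), so the lemma applies; and the $H'$ it produces is sequential, which is all the relation demands. By contrast, you put ``guarded'' into the definition of $R$ and try to walk along the weak transition of $G\{\widetilde{P}/\widetilde{X}\}$ directly; after the visible step the resulting $G_2$ need not be guarded, so you fall out of $R$ and are forced into the up-to gymnastics and $\tau$-law patching you describe.

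In short: drop ``guarded'' from $R$, start from a strong (not weak) step of $H(P)$, and recover guardedness when needed by passing to $H(E)$. That is the re-guarding trick you allude to at the end, and it is the whole proof; the trailing-$\tau$ bookkeeping then dissolves because landing at a merely sequential $H'$ already keeps you inside the relation.
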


\begin{proof}
Like the corresponding theorem in CCS, without loss of generality, we only consider a single equation $X=E$. So we assume $P\approx_s E(P)$, $Q\approx_s E(Q)$, then $P\approx_s Q$.

We will prove $\{(H(P),H(Q)): H\}$ sequential, if $H(P)\xrightarrow{\{\alpha_1,\cdots,\alpha_n\}}P'$, then, for some $Q'$, $H(Q)\xRightarrow{\{\alpha_1.\cdots,\alpha_n\}}Q'$ and $P'\approx_s Q'$.

Let $H(P)\xrightarrow{\{\alpha_1,\cdot,\alpha_n\}}P'$, then $H(E(P))\xRightarrow{\{\alpha_1,\cdots,\alpha_n\}}P''$ and $P'\approx_s P''$.

By Lemma \ref{LUSWW}, we know there is a sequential $H'$ such that $H(E(P))\xRightarrow{\{\alpha_1,\cdots,\alpha_n\}}H'(P)\Rightarrow P''\approx_s P'$.

And, $H(E(Q))\xRightarrow{\{\alpha_1,\cdots,\alpha_n\}}H'(Q)\Rightarrow Q''$ and $P''\approx_s Q''$. And $H(Q)\xrightarrow{\{\alpha_1,\cdots,\alpha_n\}}Q'\approx_s Q''$. Hence, $P'\approx_s Q'$, as desired.
\end{proof}

\begin{theorem}[Unique solution of equations for weak pomset bisimulation]\label{USWPB}
Let the guarded and sequential expressions $\widetilde{E}$ contain free variables $\subseteq \widetilde{X}$, then,

If $\widetilde{P}\approx_p \widetilde{E}\{\widetilde{P}/\widetilde{X}\}$ and $\widetilde{Q}\approx_p \widetilde{E}\{\widetilde{Q}/\widetilde{X}\}$, then $\widetilde{P}\approx_p \widetilde{Q}$.
\end{theorem}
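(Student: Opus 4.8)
The plan is to reduce the weak pomset case to the weak step case already settled in Theorem~\ref{USWSB}, exactly in the spirit in which the static laws and congruence properties for pomset bisimulation were derived from their step analogues. First I would recall from Definition~\ref{WPSB} that weak pomset bisimulation is witnessed by weak pomset transitions $C\xRightarrow{X}C'$ whose labels $X$ are pomsets, and that inside such a pomset the events are either pairwise concurrent or ordered by the causality induced by the prefix operator $.$ (two conflicting events can never co-occur in one configuration). The fully concurrent sub-case is literally the statement of Theorem~\ref{USWSB}, so the only genuinely new work concerns causal dependencies, and for these one uses the decomposition $\xRightarrow{p}=\xRightarrow{\alpha}\xRightarrow{\beta}$ for a causal pomset $p=\{\alpha,\beta:\alpha.\beta\}$, and more generally the factorisation of an arbitrary weak pomset transition into a finite sequence of single-event weak transitions that respects the causal order.

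Next I would run the same bisimulation-up-to argument as in the proof of Theorem~\ref{USWSB}: again it suffices to treat a single equation $X=E$ with $P\approx_p E(P)$ and $Q\approx_p E(Q)$, and to show that the relation $\{(H(P),H(Q)):H\text{ guarded and sequential},\ \mathit{Vars}(H)\subseteq\widetilde{X}\}$ is a weak pomset bisimulation. Given a single-event move $H(P)\xRightarrow{\alpha_i}P'$ one uses $P\approx_p E(P)$ to get $H(E(P))\xRightarrow{\alpha_i}P''\approx_p P'$, then applies Lemma~\ref{LUSWW} to rewrite this as $H(E(P))\xRightarrow{\alpha_i}H'(P)$ with $H'$ again guarded/sequential, and finally matches on the $Q$-side using $Q\approx_p E(Q)$, obtaining $H(Q)\xRightarrow{\alpha_i}Q'$ with $(H'(P),H'(Q))$ back in the relation. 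Iterating this along the causal chain of the given pomset, and re-assembling the single-event matches into one pomset transition on the $Q$-side, yields the required matching $H(Q)\xRightarrow{p}Q'$ with $P'\approx_p Q'$; then $\widetilde{P}\approx_p\widetilde{Q}$ follows by taking $H\equiv X$.

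The main obstacle is bookkeeping about guardedness and about the interaction of the $\tau^*$-padding with the causal order. Lemma~\ref{LUSWW} is phrased for step labels, so to invoke it along a causal chain I must first split the pomset transition into single events, apply the lemma at each stage, and verify that the intermediate derivatives still have the shape $H(P)$ with $H$ sequential; guardedness may be lost after the first visible event, but — as in CCS — this is harmless because sequentiality is preserved and guardedness is recovered after any $\tau$-only segment, which is precisely the ``moreover'' clause of Lemma~\ref{LUSWW}. One also has to check that the $\xrightarrow{\tau^*}$ segments wrapped around each $\xRightarrow{\alpha_i}$ can be interleaved without disturbing the causal ordering of the assembled pomset, which holds because $\tau$ events are invisible in $\hat{C}$ and $\hat{\tau^*}=\epsilon$. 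Once these routine points are in place the argument closes exactly as for the step case, so in the paper's style one would write: similarly to the proof of Theorem~\ref{USWSB}, and omit the remaining details.
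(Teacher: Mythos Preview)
Your proposal is correct and follows essentially the same approach as the paper: both reduce the weak pomset case to the already-proven weak step case (Theorem~\ref{USWSB}) by observing that a pomset transition decomposes into a concurrent part (handled by the step argument) and a causal part that factorises as $\xRightarrow{p}=\xRightarrow{\alpha}\xRightarrow{\beta}$, and both conclude with ``similarly to Theorem~\ref{USWSB}, we omit the details.'' You actually supply considerably more of the bookkeeping (the explicit candidate relation, the iterated use of Lemma~\ref{LUSWW}, and the guardedness/$\tau^*$ discussion) than the paper does, but the underlying strategy is identical.
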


\begin{proof}
From the definition of weak pomset bisimulation $\approx_{p}$ (see Definition \ref{WPSB}), we know that weak pomset bisimulation $\approx_{p}$ is defined by weak pomset transitions, which are labeled by pomsets with $\tau$. In a weak pomset transition, the events in the pomset are either within causality relations (defined by $.$) or in concurrency (implicitly defined by $.$ and $+$, and explicitly defined by $\parallel$), of course, they are pairwise consistent (without conflicts). In Theorem \ref{USWSB}, we have already proven the case that all events are pairwise concurrent, so, we only need to prove the case of events in causality. Without loss of generality, we take a pomset of $p=\{\alpha,\beta:\alpha.\beta\}$. Then the weak pomset transition labeled by the above $p$ is just composed of one single event transition labeled by $\alpha$ succeeded by another single event transition labeled by $\beta$, that is, $\xRightarrow{p}=\xRightarrow{\alpha}\xRightarrow{\beta}$.

Similarly to the proof of unique solution of equations for weak step bisimulation $\approx_{s}$ (Theorem \ref{USWSB}), we can prove that unique solution of equations holds for weak pomset bisimulation $\approx_{p}$, we omit them.
\end{proof}

\begin{theorem}[Unique solution of equations for weak hp-bisimulation]\label{USWHPB}
Let the guarded and sequential expressions $\widetilde{E}$ contain free variables $\subseteq \widetilde{X}$, then,

If $\widetilde{P}\approx_{hp} \widetilde{E}\{\widetilde{P}/\widetilde{X}\}$ and $\widetilde{Q}\approx_{hp} \widetilde{E}\{\widetilde{Q}/\widetilde{X}\}$, then $\widetilde{P}\approx_{hp} \widetilde{Q}$.
\end{theorem}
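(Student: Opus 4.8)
The plan is to follow the same pattern already established for the strong case (Theorems \ref{USSHPB}, \ref{USSHHPB}) and for the weak step case (Theorem \ref{USWSB}), reducing the hp-case to the step-case by tracking the posetal isomorphism component throughout the argument. The work rests on Lemma \ref{LUSWW}, which says that when $G$ is guarded and sequential with $\mathit{Vars}(G)\subseteq\widetilde{X}$ and $G\{\widetilde{P}/\widetilde{X}\}\xrightarrow{\{\alpha_1,\cdots,\alpha_n\}}P'$, there is a sequential $H$ (still guarded when all $\alpha_i=\tau$) with $G\xrightarrow{\{\alpha_1,\cdots,\alpha_n\}}H$, $P'\equiv H\{\widetilde{P}/\widetilde{X}\}$, and $G\{\widetilde{Q}/\widetilde{X}\}\xrightarrow{\{\alpha_1,\cdots,\alpha_n\}}H\{\widetilde{Q}/\widetilde{X}\}$ for every $\widetilde{Q}$.

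First I would reduce, exactly as in the proof of Theorem \ref{USWSB}, to the case of a single equation $X=E$: assume $P\approx_{hp} E(P)$ and $Q\approx_{hp} E(Q)$, and aim to show $P\approx_{hp} Q$. The candidate weak hp-bisimulation is $R=\{(C(H(P)),f,C(H(Q))): H\textrm{ sequential},\ f:\hat{C(H(P))}\rightarrow\hat{C(H(Q))}\textrm{ isomorphism induced by the run}\}$, started from $H\equiv X$ so that $(\emptyset,\emptyset,\emptyset)\in R$. Given $(C(H(P)),f,C(H(Q)))\in R$ and a move $H(P)\xrightarrow{e}P'$, I push it through $E$ using $P\approx_{hp}E(P)$ to get $H(E(P))\xRightarrow{e}P''$ with $C(P')\,f'\,C(P'')$ in $\approx_{hp}$ where $f'=f[e\mapsto e']$; then Lemma \ref{LUSWW} produces a sequential $H'$ with $H(E(P))\xRightarrow{e}H'(P)$, $P''\equiv H'(P)$ up to the silent tail, and $H(E(Q))\xRightarrow{e}H'(Q)$; finally $Q\approx_{hp}E(Q)$ gives $H(Q)\xRightarrow{e}Q'$ with $C(Q')$ hp-related to $C(H'(Q))$. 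Composing the isomorphisms yields $(C(P'),f'',C(Q'))\in R$, and the symmetric direction is identical. Downward closure is not required here (that is the hhp-case, handled separately in the follow-up theorem).

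As in the earlier weak proofs, the single-event transition $\xrightarrow{e}$ is treated as a step of size one and the pomset-level case is recovered by decomposing a causal pomset $p=\{\alpha,\beta:\alpha.\beta\}$ as $\xRightarrow{p}=\xRightarrow{\alpha}\xRightarrow{\beta}$, so that hp-bisimulation (which fires one event at a time) already covers it; I would cite Proposition \ref{TAUWSB} and Theorem \ref{USWSB} for the all-concurrent and step parts and only add the bookkeeping of $f$ on top, exactly in the style of the proofs of Theorems \ref{USSHPB} and \ref{CSHPB}.

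The main obstacle is the interaction between the weak transitions and the posetal isomorphism $f$: the hop $H(E(P))\xRightarrow{e}H'(P)$ may carry silent steps before and after $e$, and Lemma \ref{LUSWW} as stated only tracks the strong single-step $G\{\widetilde P/\widetilde X\}\xrightarrow{\{\alpha_1,\cdots,\alpha_n\}}P'$, so I would need to iterate it along the $\xrightarrow{\tau}$ prefix/suffix (legitimate because guardedness is preserved for $\tau$-moves, which is precisely the last clause of Lemma \ref{LUSWW}) and check that $f$ on the visible events is unaffected by those $\tau$'s, since $\hat{C}$ discards $\tau$. Making that iteration and the corresponding matching on the $Q$-side precise — so that the isomorphism extended by $f[e\mapsto e']$ is genuinely well defined on the new configurations — is the only genuinely delicate point; everything else is the same routine as the strong hp-congruence argument and will be abbreviated accordingly.
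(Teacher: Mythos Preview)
Your proposal is correct and follows essentially the same approach as the paper: reduce to the weak step/pomset case (Theorem \ref{USWSB}/\ref{USWPB}) via Lemma \ref{LUSWW}, and layer on the posetal isomorphism bookkeeping $f'=f[\alpha\mapsto\alpha]$ exactly as in the strong hp-case. In fact you have spelled out considerably more than the paper does --- its proof is a two-sentence sketch that recalls the definition of $\approx_{hp}$ on weakly posetal products, points to Theorem \ref{USWPB}, and says the remaining checks are omitted; your identification of the candidate relation $R$ and of the $\tau$-iteration issue with Lemma \ref{LUSWW} is already more explicit than anything the paper writes down.
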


\begin{proof}
From the definition of weak hp-bisimulation $\approx_{hp}$ (see Definition \ref{WHHPB}), we know that weak hp-bisimulation $\approx_{hp}$ is defined on the weakly posetal product $(C_1,f,C_2),f:\hat{C_1}\rightarrow \hat{C_2}\textrm{ isomorphism}$. Two processes $P$ related to $C_1$ and $Q$ related to $C_2$, and $f:\hat{C_1}\rightarrow \hat{C_2}\textrm{ isomorphism}$. Initially, $(C_1,f,C_2)=(\emptyset,\emptyset,\emptyset)$, and $(\emptyset,\emptyset,\emptyset)\in\approx_{hp}$. When $P\xrightarrow{\alpha}P'$ ($C_1\xrightarrow{\alpha}C_1'$), there will be $Q\xRightarrow{\alpha}Q'$ ($C_2\xRightarrow{\alpha}C_2'$), and we define $f'=f[\alpha\mapsto \alpha]$. Then, if $(C_1,f,C_2)\in\approx_{hp}$, then $(C_1',f',C_2')\in\approx_{hp}$.

Similarly to the proof of unique solution of equations for weak pomset bisimulation (Theorem \ref{USWPB}), we can prove that unique solution of equations holds for weak hp-bisimulation, we just need additionally to check the above conditions on weak hp-bisimulation, we omit them.
\end{proof}

\begin{theorem}[Unique solution of equations for weakly hhp-bisimulation]\label{USWHHPB}
Let the guarded and sequential expressions $\widetilde{E}$ contain free variables $\subseteq \widetilde{X}$, then,

If $\widetilde{P}\approx_{hhp} \widetilde{E}\{\widetilde{P}/\widetilde{X}\}$ and $\widetilde{Q}\approx_{hhp} \widetilde{E}\{\widetilde{Q}/\widetilde{X}\}$, then $\widetilde{P}\approx_{hhp} \widetilde{Q}$.
\end{theorem}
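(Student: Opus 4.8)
The plan is to mirror the argument for weak hp-bisimulation (Theorem \ref{USWHPB}) and then add the bookkeeping that separates $\approx_{hhp}$ from $\approx_{hp}$, namely downward closure. First I would reduce, exactly as in the proof of Theorem \ref{USWSB}, to a single equation $X=E$ with $E$ guarded and sequential, so that the hypotheses become $P\approx_{hhp}E(P)$ and $Q\approx_{hhp}E(Q)$ and the goal is $P\approx_{hhp}Q$. The candidate weakly posetal relation is the one induced by sequential contexts, $R=\{(C(H(P)),f,C(H(Q))):H\textrm{ sequential},\ f\textrm{ the order-isomorphism carried along the matched computation}\}$, which I would then take together with its downward closure $R'$ inside $\mathcal{C}(\mathcal{E}_1)\overline{\times}\mathcal{C}(\mathcal{E}_2)$; the target is to show $R'$ is a weak hhp-bisimulation with $(\emptyset,\emptyset,\emptyset)\in R'$.

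The transfer step is the heart of the matter and reuses Lemma \ref{LUSWW}. Given $(C(H(P)),f,C(H(Q)))\in R$ and a move $H(P)\xrightarrow{\alpha}P'$, I would use $P\approx_{hhp}E(P)$ (in particular the underlying weak hp-matching) to get $H(E(P))\xRightarrow{\alpha}P''$ with the executed configurations related by an extension of $f$; Lemma \ref{LUSWW} then yields a sequential $H'$ with $H(E(P))\xRightarrow{\alpha}H'(P)\Rightarrow P''$, and symmetrically $H(E(Q))\xRightarrow{\alpha}H'(Q)\Rightarrow Q''$ together with $H(Q)\xrightarrow{\alpha}Q'$, where all intermediate configurations are matched by isomorphisms; composing these gives $(C(P'),f[\alpha\mapsto\alpha],C(Q'))\in R$, which is the hp-clause, and the vice-versa direction is symmetric. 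This part is word-for-word the proof of Theorem \ref{USWHPB}, so I would only indicate it and refer back.

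The one genuinely new obligation — and the step I expect to be the main obstacle — is verifying downward closure: if $(C_1,f,C_2)\in R'$ and $(C_1',g,C_2')\subseteq(C_1,f,C_2)$ pointwise, then $(C_1',g,C_2')$ must again lie in $R'$, which by definition of $R'$ is immediate once $R'$ is defined as a downward closure, but one must still check that $R'$ is a \emph{weak hp-bisimulation}, i.e. that the back-and-forth matching survives shrinking. Here I would argue uniformity: every triple in $R$ already carries a full order-isomorphism between the executed parts and the transfer step above works identically for \emph{any} sequential context $H$ and \emph{any} reachable configuration, so a pointwise-smaller triple is realized by a shorter computation of the same shape and hence is handled by the same argument. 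In other words, since (by Definition \ref{WHHPB}) a weakly hhp-bisimilar pair is just a downward-closed weak hp-bisimilar pair, and the relation constructed for Theorem \ref{USWHPB} is already of this uniform sequential-context shape, closing it downward preserves the bisimulation property; I would make this downward-closure verification explicit and otherwise defer to the proof of Theorem \ref{USWHPB}.
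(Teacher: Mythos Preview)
Your proposal is correct and follows essentially the same approach as the paper: the paper's proof consists of two sentences noting that weakly hhp-bisimulation is just downward-closed weak hp-bisimulation and then deferring to Theorem \ref{USWHPB}, with the details omitted. Your write-up is simply a more explicit elaboration of that same strategy---reducing to a single guarded sequential equation, reusing the sequential-context relation and Lemma \ref{LUSWW} from the hp case, and then closing downward---so the two agree in substance, with yours supplying the bookkeeping the paper leaves implicit.
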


\begin{proof}
From the definition of weakly hhp-bisimulation (see Definition \ref{WHHPB}), we know that weakly hhp-bisimulation is downward closed for weak hp-bisimulation.

Similarly to the proof of unique solution of equations for weak hp-bisimulation (see Theorem \ref{USWHPB}), we can prove that the unique solution of equations holds for weakly hhp-bisimulation, we omit them.
\end{proof}

\section{Applications}\label{app}

In this section, we show the applications of CTC by verification of the alternating-bit protocol \cite{ABP}. The alternating-bit protocol is a communication protocol and illustrated in Fig. \ref{ABP}.

\begin{figure}
    \centering
    \includegraphics{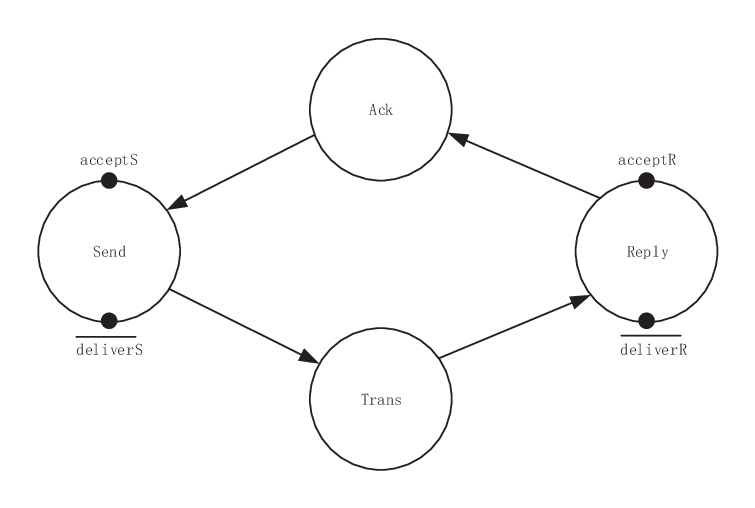}
    \caption{Alternating bit protocol}
    \label{ABP}
\end{figure}

The $Trans$ and $Ack$ lines may lose or duplicate message, and messages are sent tagged with the bits $0$ or $1$ alternately, and these bits also constitute the acknowledgements.

There are some variations of the classical alternating-bit protocol. We assume the message flows are bidirectional, the sender accepts messages from the outside world, and sends it to the replier via some line; and it also accepts messages from the replier, deliver it to the outside world and acknowledges it via some line. The role of the replier acts the same as the sender. But, for simplicities and without loss the generality, we only consider the dual one directional processes, we just suppose that the messages from the replier are always accompanied with the acknowledgements.

After accepting a message from the outside world, the sender sends it with bit $b$ along the $Trans$ line and sets a timer,

\begin{enumerate}
  \item it may get a "time-out" from the timer, upon which it sends the message again with $b$;
  \item it may get an acknowledgement $b$ accompanied with the message from the replier via the $Ack$ line, upon which it deliver the message to the outside world and is ready to accept another message tagged with bit $\hat{b}=1-b$;
  \item it may get an acknowledgement $\hat{b}$ which it ignores.
\end{enumerate}

After accepting a message from outside world, the replier also can send it to the sender, but we ignore this process and just assume the message is always accompanied with the acknowledgement to the sender, and just process the dual manner to the sender. After delivering a message to the outside world it acknowledges it with bit $b$ along the $Ack$ line and sets a timer,

\begin{enumerate}
  \item it may get a "time-out" from the timer, upon which it acknowledges again with $b$;
  \item it may get a new message with bit $\hat{b}$ from the $Trans$ line, upon which it is ready to deliver the new message and acknowledge with bit $\hat{b}$ accompanying with its messages from the outside world;
  \item it may get a superfluous transmission of the previous message with bit $b$, which it ignores.
\end{enumerate}

Now, we give the formal definitions as follows.

\[Send(b)\overset{\text{def}}{=}\overline{send_b}.\overline{time}.Sending(b)\]

\[Sending(b)\overset{\text{def}}{=}timeout.Send(b)+ack_b.timeout.AcceptS(\hat{b}) +ack_{\hat{b}}.DeliverS(b)\]

\[AcceptS(b)\overset{\text{def}}{=}acceptS.Send(b)\]

\[DeliverS(b)\overset{\text{def}}{=}\overline{deliverS}.Sending(b)\]

\[Reply(b)\overset{\text{def}}{=}\overline{reply_b}.\overline{time}.Replying(b)\]

\[Replying(b)\overset{\text{def}}{=}timeout.Reply(b)+trans_{\hat{b}}.timeout.Deliver(\hat{b}) +trans_b.AcceptR(b)\]

\[DeliverR(b)\overset{\text{def}}{=}\overline{deliverR}.Reply(b)\]

\[AcceptR(b)\overset{\text{def}}{=}acceptR.Replying(b)\]

\[Timer\overset{\text{def}}{=}time.\overline{timeout}.Timer\]

\[Ack(bs)\xrightarrow{\overline{ack_b}}Ack(s)\quad Trans(sb)\xrightarrow{\overline{trans_b}}Trans(s)\]

\[Ack(s)\xrightarrow{reply_b}Ack(sb)\quad Trans(s)\xrightarrow{send_b}Trans(bs)\]

\[Ack(sbt)\xrightarrow{\tau}Ack(st)\quad Trans(tbs)\xrightarrow{\tau}Trans(ts)\]

\[Ack(sbt)\xrightarrow{\tau}Ack(sbbt)\quad Trans(tbs)\xrightarrow{\tau}Trans(tbbs)\]

Then the complete system can be builded by compose the components. That is, it can be expressed as follows, where $\epsilon$ is the empty sequence.

$$AB\overset{\text{def}}{=}Accept(\hat{b})\parallel Trans(\epsilon)\parallel Ack(\epsilon)\parallel Reply(b)\parallel Timer$$

Now, we define the protocol specification to be a buffer as follows:

$$Buff\overset{\text{def}}{=}(acceptS\parallel acceptR).Buff'$$

$$Buff'\overset{\text{def}}{=}(\overline{deliverS}\parallel \overline{deliverR}).Buff$$

We need to prove that

$$AB\approx_s Buff$$

$$AB\approx_p Buff$$

$$AB\approx_{hp} Buff$$

$$AB\approx_{hhp} Buff$$

The deductive process is omitted, and we left it as an excise to the readers. 

\section{Conclusions}\label{con}

We design a calculus for true concurrency (CTC). Indeed, we follow the way paved by Milner's famous CCS \cite{CCS} \cite{CC} for interleaving bisimulation.

Fortunately, based on the concepts for true concurrency, CTC has good properties modulo several kinds of strongly truly concurrent bisimulations and weakly truly concurrent bisimulations. These properties include monoid laws, static laws, new expansion law for strongly truly concurrent bisimulations, $\tau$ laws for weakly truly concurrent bisimulations, and full congruences for strongly and weakly truly concurrent bisimulations, and also unique solution for recursion.

CTC is a peer in true concurrency to CCS in interleaving bisimulation semantics. It can be used widely in verification of computer systems with a truly concurrent flavor.

\label{lastpage}

\end{document}